\author{Claudio Bellani\footnote{Dept. of Mathematics, Imperial College London.}, Damiano Brigo\footnotemark[\value{footnote}]}
\title{Mechanics of good trade execution in the framework of linear temporary market impact}
\date{Wednesday 8 July 2020}
\newtheorem{thm}{Theorem}[section]
\newtheorem{prop}[thm]{Proposition}
\newtheorem{lemma}[thm]{Lemma}
\newtheorem{corol}[thm]{Corollary}
{\theoremstyle{definition}{
\newtheorem{defi}[thm]{Definition}
\newtheorem{assumption}[thm]{Assumption}
\newtheorem{remark}[thm]{Remark}

}}}
\newsavebox{\fminipagebox}
\NewDocumentEnvironment{fminipage}{m O{\fboxsep}}
{\par\kern#2\noindent\begin{lrbox}{\fminipagebox}
		\begin{minipage}{#1}\ignorespaces}
		{\end{minipage}\end{lrbox}%
	\makebox[#1]{%
		\kern\dimexpr-\fboxsep-\fboxrule\relax
		\fbox{\usebox{\fminipagebox}}%
		\kern\dimexpr-\fboxsep-\fboxrule\relax
	}\par\kern#2
}
\newcommand{\R}{\mathbb{R}}
\newcommand{\Rd}{\R^{d}}
\newcommand{\Rn}{\R^{n}}
\newcommand{\half}{\frac{1}{2}}
\newcommand{\inverse}{^{-1}}
\newcommand{\abs}[1]{\left\lvert {#1} \right\rvert}
\newcommand{\transpose}{^{\mathsf{T}}}
\newcommand{\squared}{^{2}}
\newcommand{\argmin}{\mathrm{argmin}}
\newcommand{\intzerot}{\int_{0}^{t}}
\newcommand{\intZeroTimeHorizon}{\int_{0}^{\timeHorizon}}
\newcommand{\dotEta}{\dot{\eta}}
\newcommand{\simplex}{\lbrace (s,t) \in \R\squared:\,  0 \leq s \leq t \leq \timeHorizon \rbrace}
\newcommand{\airyFirstFunction}{\mathtt{Ai}}
\newcommand{\airySecondFunction}{\mathtt{Bi}}
\newcommand{\derivative}{^{\prime}}
\newcommand{\partition}{\pi}
\newcommand{\timeHorizon}{T}
\newcommand{\timeWindow}{{[0,\timeHorizon]}}
\newcommand{\norm}[1][\cdot]{\left\lVert {#1}\right\rVert}
\newcommand{\pvarNormInterval}[3][\cdot]{\lVert {#1} \rVert_{{#2}\text{-var}, {#3}}}
\newcommand{\smoothCompactlySupportedFunctions}{C^{\infty}_{0}}
\newcommand{\sobolevSpace}[1][1,2]{W^{#1}}
\newcommand{\sobolevSpaceOneTwo}{\sobolevSpace}
\newcommand{\sobolevSpaceCompactSupport}[1][1,2]{\sobolevSpace[#1]_{0}}
\newcommand{\sobolevSpaceOneTwoCompactSupport}{\sobolevSpaceOneTwo_{0}}
\newcommand{\Ltwo}{L\squared}
\newcommand{\Prob}{{\mathbb{P}}}
\newcommand{\Expectation}{\mathbb{{E}}}
\newcommand{\Variance}{\mathrm{Var}}
\newcommand{\sigmaAlgebra}{\mathfrak{F}}
\newcommand{\probabilitySpace}{\big(\Omega,\sigmaAlgebra, \Prob \big)}
\newcommand{\brownianMotion}{W}
\newcommand{\boundedVariationPart}{A}
\newcommand{\martingale}{M}
\newcommand{\semimartingale}{S}
\newcommand{\spaceInventoryTrajectories}{\mathcal{Q}}
\newcommand{\fundamentalPrice}{S}
\newcommand{\priceProcess}{S}
\newcommand{\inventory}{q}
\newcommand{\initialInventory}{\mathtt{x_0}}
\newcommand{\initialInventoryAtTimeT}{\mathtt{x}_t}
\newcommand{\liquidationTarget}{\mathtt{x}_{\mathtt{T} } }
\newcommand{\inventoryRate}{\dot{\inventory}}
\newcommand{\spaceExecutionRates}{\mathcal{R}}
\newcommand{\coeffMarketImpact}{c_{1}}
\newcommand{\coeffRiskAversion}{c_{2}}
\newcommand{\lagrangian}{L}
\newcommand{\Lagrangian}{\lagrangian}
\newcommand{\Fnorm}[1]{\lVert {#1} \rVert_{F}}
\newcommand{\normInventoryTrjectories}[1]{\lVert{#1}\rVert_{\coeffMarketImpact, \coeffRiskAversion}}
\newcommand{\ratioAversionOverImpact}{c_{3}}
\newcommand{\costFunctional}{J}
\newcommand{\spaceInventoryTrajectoriesPathwise}{\spaceInventoryTrajectories_{\text{pw}}}
\newcommand{\spaceInventoryTrajectoriesFuel}{\spaceInventoryTrajectories_{\text{fuel}}}
\newcommand{\spaceInventoryTrajectoriesStatic}{\spaceInventoryTrajectories_{\text{static}}}
\newcommand{\spaceUnbiasedInventoryTrajectories}{\mathcal{U}}
\newcommand{\spaceUnbiasedInventoryTrajectoriesInitialConstraint}{\spaceUnbiasedInventoryTrajectories^{0,\initialInventory} }
\newcommand{\stateVariable}{X}
\newcommand{\actualPricePath}{x }
\newcommand{\coeffPenalisationOutstandingInventory}{c_{5}}
\newcommand{\ratioTerminalPenalisationOverImpact}{c_{6}}
\begin{document}
\numberwithin{equation}{section}

\maketitle

\vspace{0.5cm}

\begin{quotation}\begin{small}
		\noindent	\textbf{Abstract.} We define the concept of good trade execution and we construct explicit adapted good trade execution strategies in the framework of linear  temporary market impact. Good trade execution  strategies   are dynamic, in the sense that they react to the actual realisation of the traded asset price path over the trading period; this is paramount in volatile regimes, where price trajectories can considerably deviate from their expected value. Remarkably however, the implementation of our strategies does not require the full specification of an SDE evolution for the traded asset price, making them robust across different models. Moreover,  rather than minimising the \emph{expected} trading cost, good trade execution strategies minimise  trading costs in a \emph{pathwise} sense, a point of view not yet considered in the literature. The mathematical apparatus for such a pathwise minimisation hinges on 
		certain random Young differential equations that correspond to the Euler-Lagrange equations of the classical Calculus of Variations. These Young differential equations characterise our good trade execution strategies in terms of an initial value problem that allows for easy implementations. 
 \end{small}
\end{quotation}

\vspace{0.5cm}

%	\subsection*{Introduction}
%	
%	
%	\newpage
%	

%	\tableofcontents

%	\newpage 

\section{Introduction}

Executions of large trades can affect the price of the traded asset, a phenomenon known as \emph{market impact}. The price is affected in the direction unfavourable to the trade: while selling, the market impact decreases the price; while buying, the market impact increases the price.    Therefore, a trader who wishes to minimise her trading costs has to split her order into a sequence of smaller sub-orders which are executed over a finite time horizon. How to optimally split a large order is a question that naturally arises.

Academically, the literature discussing such an optimal split  was initiated by the seminal papers by \cite{AC01opt} and by \cite{BL98opt}. Both papers deal with 
the trading process of one large market participant  who would like to buy or sell a large amount of shares or contracts during a specified duration. 
The optimisation problem is formulated as a trade-off between two pressures. On the one hand, market impact  demands to trade slowly in order to minimise the unfavourable impact that the execution itself has on the price. 
On the other hand, traders have an incentive to trade rapidly, because they do not want to carry the risk of adverse price movements away from their decision price. Such a  trade-off between market impact and market risk is usually translated into a stochastic control problem where the trader's strategy (i.e. the control) is the trading speed. The class of admissible strategies defines  the set over which the risk-cost functional is optimised. 

In the design of mathematical models for optimal trade execution we identify two phases. The first phase is the description of trading costs. This refers to the choice of a function $F$ that depends on time, asset price, quantity to execute and trading speed, and models the instantaneous cost of trading. The overall cost during the time window $\timeWindow$ is then expressed as the time integral
\[
\costFunctional(\inventory) = \int_{0}^{\timeHorizon}
F(t,\actualPricePath_t,\inventory_t,\dot{\inventory}_t)dt,
\]
where the path $t \mapsto \actualPricePath_t$ is the evolution of the asset price during the trading period. The letter $\inventory$ stands for quantity of the asset and the trajectory $t\mapsto \inventory(t)$, $\timeWindow \rightarrow \R$, is referred to as inventory trajectory. Its time derivative $\dot{\inventory}$ is the rate of execution and it represents the control variable that a trader modulates while executing the trade. 

The minimisation of the trading cost $\costFunctional$ faces the challenge that the price path $t \mapsto \actualPricePath_t$ is not known at the beginning of the trading period. Hence, in order to gain some predictive power, a stochastic model for the evolution of the asset price is introduced. This is the second phase in the design of  mathematical models for trade execution. Concretely, it means that a stochastic process $\lbrace \fundamentalPrice_t: \, 0\leq t \leq \timeHorizon \rbrace$ is introduced and the actual price trajectory $(x_t)$ is thought of as a realisation of this stochastic process. Then, the mathematical optimisation focuses on the expected trading cost
\begin{equation}\label{eq.expectedTradingCost}
\Expectation\left[
\intZeroTimeHorizon F(t,\fundamentalPrice_t, \inventory_t,\inventoryRate_{t}) dt
\right]. 
\end{equation}
Notice that this entails a considerable degree of model dependency, in that the optimisation is based on the distributional assumptions on the price process. 

Two alternatives exist for the minimisations of the expected trading cost in equation \eqref{eq.expectedTradingCost}. These alternatives are static minimisation (giving rise to static trading strategies) and dynamic minimisation (giving rise to dynamic trading strategies).

Static strategies are  completely decided at the beginning of the trading period; they are based only on the information available  at the initial time of the trade. Mathematically, this is formulated by considering $\inventory$ as a deterministic path. In this case it is often observed that,  by interchanging expectation and time integral in equation \eqref{eq.expectedTradingCost}, the actual realisation of the price process disappears from the formulae, replaced by its expected trajectory.  When the expected price path is the only feature of the price process that enters the formulae (as in \cite{AC01opt}), the static strategy does not take into account the volatility of asset prices, whose role however is paramount in financial markets.  A visual representation of the relevance of volatility in the context of trade execution is provided by Figure \ref{fig.aPosterioriTwoVolatilities}.

In Figure \ref{fig.aPosterioriTwoVolatilities} Almgren and Chriss's framework is adopted. The price process is  a standard one-dimensional Brownian motion and two price paths are considered, one with low volatility and the other with high volatility. Notwithstanding the remarkable difference between the two, they have the same expected path (dashed blue line in the first quadrant) and, as a consequence, the static liquidation strategy is the same for both price paths (dashed blue line in the second quadrant).  The simplicity of the model is such that it compromises on the possibility to distinguish rather different market regimes. This is made clear by comparing the static optimal solution with the a-posteriori one.

\begin{figure}
\centering
\includegraphics[width=0.50\textwidth]{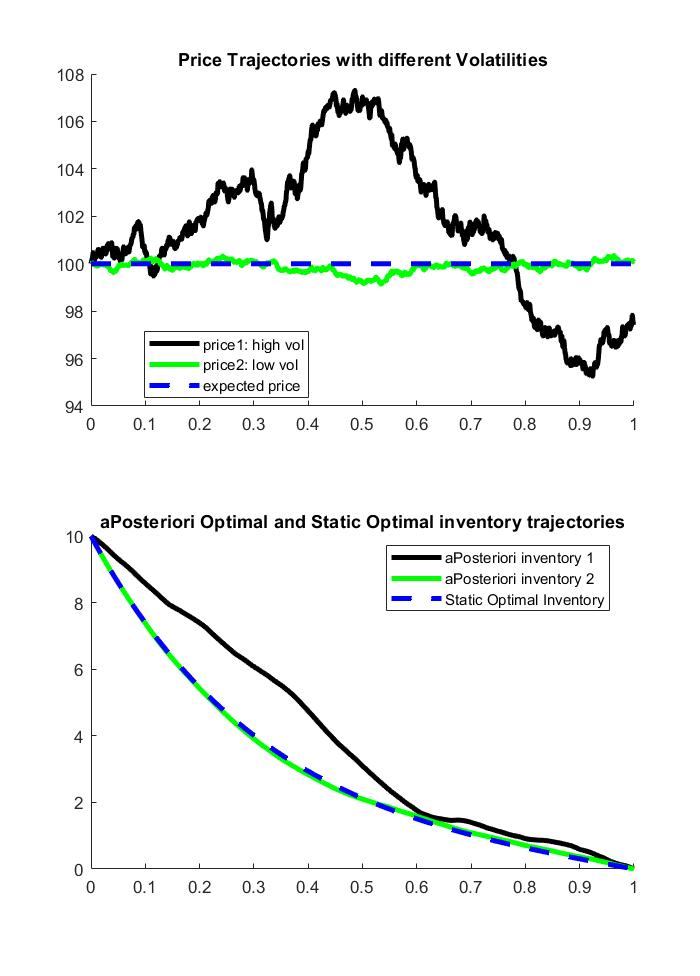}
\caption{{A-posteriori optimal and static optimal inventories in two different volatility regimes}}
\label{fig.aPosterioriTwoVolatilities}
\end{figure}

The a-posteriori solution  is the minimiser $\inventory$ of the cost functional $\costFunctional$ given the actual price trajectory $\actualPricePath$. This is not implementable in real trading because it is anticipative, in that it assumes that the entire price trajectory is known at the beginning of the trading period. However, since it is independent of the choice of the price process,  the a-posteriori solution constitutes a useful term of comparison for the stochastic model. In the example of Figure \ref{fig.aPosterioriTwoVolatilities}, we observe how different the two  a-posteriori solutions corresponding to the two market regimes are. In the case of low volatility, the a-posteriori solution is close to the static one, because the price path does not depart significantly from its expected trajectory. Instead, in the case of high volatility,  the a-posteriori solution deviates from the static one: the inventory trajectory is considerably steeper where the price is above its expected value, and it is almost flat when the price is below its expected value.

In order to take into account more features of the price process (such as its volatility), the literature on optimal trade execution has utilised the mathematical techniques of stochastic optimal control. This has produced the second alternative the minimisation of the expected cost in equation \eqref{eq.expectedTradingCost}, and dynamic trading strategies proliferated since  \cite{BL98opt} (in discrete time) and  \cite{GS11opt} (in continuous time). An excellent presentation of the techniques of stochastic optimal control applied to trade execution is contained in the textbook by \cite{CJP15alg}. 

Dynamic trading strategies take fully into account the distributional features of the price process because they are obtained via the Hamilton-Jacobi-Bellman equation, in which the generator of the diffusion that models the price enters.\footnote{In the case of linear temporary market impact and quadratic inventory cost, a recent work by \cite{BMO18opt} actually discusses techniques that can be more generally applied to the case of general semimartingales. In this case there is no HJB equation; instead the authors rely on forward-backward stochastic differential equations. In Section \ref{sec.framework}, we will review this general solution.}
Furthermore, dynamic strategies are  random when seen from the initial time, in that they depend   on the information that is revealed to the trader during the trading period. Mathematically, this means that dynamic strategies are stochastic processes adapted to the relevant market information filtration.  Since deterministic strategies are in particular adapted stochastic processes, the class of static strategies is a subset of the class of dynamic strategies. Therefore, the minimisation  over the class of dynamic strategies is expected to improve the result obtained when minimising over the smaller class of  static strategies.  

This however is not always confirmed in the models. Indeed, despite the mathematical sophistication, cases exist in which  optimal trading strategies, although sought among dynamic ones, are in fact static. One of such cases is for example the ``Liquidation without penalties only temporary impact'' in  \cite[Section 6.3]{CJP15alg}, an other is the ``Optimal acquisition with terminal penalty and temporary impact'' in  \cite[Section 6.4]{CJP15alg}. This reduction to static optimal solutions clashes with the intuition for which trading strategies should take into account actual realisations of price paths, as the a-posteriori solutions in Figure \ref{fig.aPosterioriTwoVolatilities} suggest. 

A second drawback of applying  the technique of HJB equation to the problem of optimal trade execution is the heavy model dependence. Optimality of the trading strategies holds under the assumption that the price follows some specified dynamics, and this invests  of considerable importance the second phase in the design of mathematical models.  

In this paper, we propose a new alternative for the minimisation of trading costs. This new alternative considers the pathwise optimisation of the cost functional $\costFunctional$ without taking expectation. We observe that the reason for the anticipativeness of a-posteriori solutions is the imposition of the constraint that the liquidation terminates exactly at the (arbitrarily fixed) trading horizon. Relaxing this constraint enables to produce adapted pathwise solutions that display two remarkable features. On the one hand, they avoid the degeneracy to static trajectories even in the cases where the techniques of HJB equation do not produce genuinely dynamic strategies; on the other hand, their model dependence is moderate and confined to the expected trajectory of the price path, as was the case for static strategies, rather than to the full law of the price process.

Our trading strategies give rise to inventory trajectories that are obtained in closed-form formulae. Moreover, we can characterise these trajectories as solutions to certain random Young differential equations, inspired   by the second-order Euler-Lagrange equations in the classical Calculus of Variations. Such a characterisation allows to implement the inventory trajectories via an easily-simulated initial value problem.

\nocite{CJ19alg}
\nocite{HJN19mea}
\nocite{CDJ17alg}
\nocite{CJ19tra}
\nocite{MC20mar}
\nocite{RZ18gam}
\nocite{CDP20max}
\nocite{KK18nas}
\nocite{XZ13opt}
\nocite{DS86use}

The rest of the paper is organised as follows. Section \ref{sec.framework} describes in detail the mathematical framework in which the problem of optimal trade execution is formulated. Our descriptions examines in particular two aspects of the mathematical models. The first aspect (Section \ref{sec.reductionStatic}) is the reduction to static optimal inventories that happens in the context of stochastic optimal control of the expected quantity in equation \eqref{eq.expectedTradingCost}.  Proposition \ref{prop.reductionToStaticSolution} examines such a reduction, listing its causes. This is novel in the literature and answers the questions raised in \cite{BD14opt}, \cite{BP18sta} and \cite{BBDN18sta}  about the comparison between static and dynamic solutions to the problem of optimal trade execution.  The second aspect is  the unbiasedness of liquidation errors (Section \ref{sec.errorsOfLiquidation}).  

Section \ref{sec.goodTradeExecutions} presents the concept of good trade execution. Section \ref{sec.IC} specialises good trade executions in the case of linear temporary market impact and quadratic inventory cost. In particular, Section \ref{sec.closedformIC} derives a closed-form formula for good trade executions, and Section \ref{sec.eulerLagrangeIC} characterises it in terms of a Cauchy problem with random Young differential equations. Uniqueness of the good trade execution follows from this characterisation.

Section \ref{sec.alternativeRiskCriteria} presents good trade executions with risk criteria other than the quadratic inventory cost. More precisely, Section \ref{sec.timeIC} considers a time-dependent variant of the quadratic inventory cost, whereas Section \ref{sec.varInspiredRiskCriterion} presents good trade executions when the risk criterion is inspired by the value-at-risk adopted in \cite{GS11opt}.    

Finally, two applications are given in Section \ref{sec.applications}, and Section \ref{sec.conclusions} concludes the paper. Appendix \ref{sec.eulerLagrangeInPresenceOfPricePath} presents the mathematical apparatus on which the characterisation of good trade executions is based.

\section{Framework} \label{sec.framework}
We adopt the perspective of liquidation; the case of acquisition is \emph{mutatis mutandis} the same. Let $\initialInventory$ denote initial inventory, and let $\liquidationTarget=0$ be the liquidation target. The letter $\inventory$ stands for quantity of the asset and the trajectory $t\mapsto \inventory(t)$, $\timeWindow \rightarrow \R$, shall be referred to as inventory trajectory. Its time derivative $\dot{\inventory}$ is the rate of execution and it represents the control variable that a trader modulates while executing the trade. Without yet referring to any probabilistic structure, let us introduce the space of such inventory trajectories:
\begin{equation*}
\begin{split}
\spaceInventoryTrajectories^{0,\initialInventory}_{\text{pw}} := \Big\lbrace
\inventory:\timeWindow\rightarrow & \R,  \quad  \inventory \text{ absolutely continuous},
% \\ &
\, \, 
q(0) = \initialInventory, \, q(\timeHorizon) = \liquidationTarget\Big\rbrace.
\end{split}
\end{equation*}
The subscript ``pw'' stands for ``pathwise'' and emphasises the  non-probabilistic perspective. 

\begin{defi}\label{def.priceprocess}
 Let $\probabilitySpace$ be a probability space, and let  $\lbrace \fundamentalPrice_t:\, 0\leq t\leq T\rbrace$ be a stochastic process defined on it. We say that $\fundamentalPrice$ is a price process if:
 1. for all $0\leq t\leq T$ the second moment of $\fundamentalPrice_t$ is finite;
 2. the maps $t\mapsto \Expectation[\fundamentalPrice_t]$ and $t\mapsto \Expectation[\fundamentalPrice\squared_t]$ are in $L^1[0,T]$;
 3. there exists some $p\geq 1$ such that all the paths of $\fundamentalPrice$ are of finite $p$-variation, i.e. for all $\omega$ in $\Omega$, 
\begin{equation*}
\pvarNormInterval[\fundamentalPrice_{\cdot} (\omega)]{p}{\timeWindow} < \infty. 
\end{equation*} 
\end{defi}

% We will use the term price process to refer to the stochastic process $\lbrace \fundamentalPrice_t:\, 0\leq t\leq T\rbrace$ used to model the time  evolution of the asset fundamental price, defined on some probability space $\probabilitySpace$. A price process will always be assumed to be such that: 1. for all $0\leq t\leq T$ the second moment of $\fundamentalPrice_t$ is finite; 2. the maps $t\mapsto \Expectation[\fundamentalPrice_t]$ and $t\mapsto \Expectation[\fundamentalPrice\squared_t]$ are in $L^1[0,T]$; 3. there exists some $p\geq 1$ such that all the paths of $\fundamentalPrice$ are of finite $p$-variation, i.e. for all $\omega$ in $\Omega$, 
% \begin{equation*}
% \pvarNormInterval[\fundamentalPrice_{\cdot} (\omega)]{p}{\timeWindow} < \infty. 
% \end{equation*}

Notice that the paths of the price process are not necessarily assumed to be continuous.

Given a price process  $\lbrace \fundamentalPrice_t:\, 0\leq t\leq T\rbrace$, we let $\lbrace \sigmaAlgebra_t: \, 0\leq t\leq T\rbrace$ be the minimal $\Prob$-completed right-continuous filtration generated by $\fundamentalPrice$. It is always assumed that $\sigmaAlgebra_0$ is trivial. 

If the price process is a semimartingale, we additionally introduce the following terminology. We say that the semimartingale $\semimartingale$ is a totally square integrable special semimartingale if the following two conditions hold:
\begin{enumerate}
\item the semimartingale $\semimartingale$ is a special semimartingale, i.e. it admits a canonical decomposition 
\begin{equation*}
\semimartingale_t = \semimartingale_0 + \boundedVariationPart_t + \martingale_t, \qquad 0\leq t \leq \timeHorizon,
\end{equation*}
where $\boundedVariationPart$ is a predictable bounded variation process,  $\martingale$ is a local martingale, and $\boundedVariationPart_0 = \martingale_0 = 0$;
\item the following integrability holds:
\begin{equation*}
\Expectation \Big[ \langle \martingale \rangle_\timeHorizon  \Big] + \Expectation \left[ \norm[\boundedVariationPart]_{2,\timeWindow}\squared  \right] \, < \infty,
\end{equation*}
where $\langle \martingale \rangle$ denotes the quadratic variation of the local martingale $\martingale$, and $ \norm[\boundedVariationPart]_{2,\timeWindow}$ denotes the $2$-variation of the path $\boundedVariationPart$ on the time interval $\timeWindow$.\footnote{Recall that the 2-variation   $ \norm[x]_{2,\timeWindow}$ of a path $x: t\mapsto x_t \in \Rd$ is defined as 
\begin{equation*}
 \norm[x]_{2,\timeWindow}
 :=
 \sup
 \left\lbrace
 \left(
 \sum_{t_{i}} \abs{x_{t_{i+1}} -x_{t_{i}}}^{2}
 \right)^{1/2} : 
 \, \, 
 0=t_0< t_1<\dots<t_n=\timeHorizon
 \right\rbrace,
\end{equation*}
where the supremum is taken over all the partitions of the interval $\timeWindow$.}
\end{enumerate}

Execution rates are progressively measurable square-integrable processes; more precisely, we define the space of execution rates as
\begin{equation}\label{eq.definitionOfSpaceExecutionRates}
\begin{split}
\spaceExecutionRates:= \Big\lbrace
r \in \Ltwo\Big([0,T]\times& \Omega, \, \mathcal{B}[0,T]\otimes \sigmaAlgebra_{\timeHorizon}, \, dt\otimes \Prob\Big): \, \\
&
r \text{ is } (\sigmaAlgebra_t)_t\text{-progressively measurable}\Big\rbrace.
\end{split}
\end{equation}
Notice that the measurability depends on the filtration of the price process. 

Admissible inventory trajectories are first integrals of execution rates with initial value $\initialInventory$. More precisely, we define the space $\spaceInventoryTrajectories^{0,\initialInventory}$ of admissible inventory trajectories as 
\begin{equation}\label{eq.definitionOfSpaceInventoryTrajectories}
\spaceInventoryTrajectories^{0,\initialInventory} 
:=
\left\lbrace
(\inventory_t)_{t\in\timeWindow}: \,\, 
\exists r \in \spaceExecutionRates, \, 
\inventory_t=\initialInventory + \int_{0}^{t}r_u du \, \forall 0\leq t \leq \timeHorizon 
\right\rbrace.
\end{equation}
Among admissible inventory trajectories we distinguish those that are fuel-constrained, namely such that their terminal value is $\liquidationTarget=0$. Thus, a fuel-constrained admissible inventory trajectory is an $(\sigmaAlgebra_t)_t$-adapted process with absolutely continuous paths, with deterministic initial value $\initialInventory$,  terminal value $\liquidationTarget=0$, and such that its derivative is in $\spaceExecutionRates$.
More precisely, we define the space $\spaceInventoryTrajectoriesFuel^{0,\initialInventory}$ of fuel-constrained admissible inventory trajectories as 
\begin{equation*}
\spaceInventoryTrajectoriesFuel^{0,\initialInventory}
:=
\left\lbrace
(\inventory_t)_{t\in\timeWindow} \in \spaceInventoryTrajectories^{0,\initialInventory}: \quad 
 \inventory_\timeHorizon = \liquidationTarget \,\, \Prob\text{-a.s.}
\right\rbrace.
\end{equation*}
Notice that every realisation of a generic $\inventory$ in $\spaceInventoryTrajectoriesFuel^{0,\initialInventory}$ is a path in $\spaceInventoryTrajectories^{0,\initialInventory}_{\text{pw}}$, namely for all $\inventory$ in $\spaceInventoryTrajectoriesFuel^{0,\initialInventory}$ and all $\omega$ in $\Omega$ it holds
\begin{equation*}
\left(\inventory_t(\omega)\right)_{0\leq t \leq \timeHorizon} \, \in \, \spaceInventoryTrajectories^{0,\initialInventory}_{\text{pw}}.
\end{equation*}
In the space of fuel-constrained inventory trajectories we isolate the subspace of static trajectories, given by
\begin{equation*}
\spaceInventoryTrajectoriesStatic^{0,\initialInventory} = 
\left\lbrace \inventory \in \spaceInventoryTrajectoriesFuel^{0,\initialInventory}: \quad 
\inventory_t \text{ is } \sigmaAlgebra_0\text{-measurable} \text{ for all } t \geq 0 \right\rbrace.
\end{equation*}
These are the execution strategies whose entire trajectories are $\sigmaAlgebra_0$-measurable, namely deterministic. We say that the admissible inventory trajectories not in $	\spaceInventoryTrajectoriesStatic^{0,\initialInventory}$ are non-static (or dynamic): therefore, the admissible inventory trajectory $\inventory$ is non-static if $\inventory$ is in  $\spaceInventoryTrajectories^{0,\initialInventory}\setminus \spaceInventoryTrajectoriesStatic^{0,\initialInventory}$.

It is convenient to extend the definitions of the spaces of inventory trajectories to the case where the initial time is not zero. The symbols $\spaceInventoryTrajectoriesPathwise^{t,\initialInventoryAtTimeT}$, $\spaceInventoryTrajectories^{t,\initialInventoryAtTimeT}$, $\spaceInventoryTrajectoriesFuel^{t,\initialInventoryAtTimeT}$ and $\spaceInventoryTrajectoriesStatic^{t,\initialInventoryAtTimeT}$ will denote the straightforward generalisations of the definitions above to the case where the initial time is $t$ in $[0,\timeHorizon)$ and the trajectories are pinned to the value $\initialInventoryAtTimeT$ at time $t$. 

With the notation introduced so far, we now formulate the classical stochastic optimisation problem associated with optimal trade execution. 

Let $\stateVariable = (\fundamentalPrice,\inventory)$ denote the state variable, which keeps track of the fundamental price $\fundamentalPrice$ and of the inventory $\inventory$. The dynamics of $\stateVariable$ is controlled by an execution rate $\inventoryRate$ in $\spaceExecutionRates$. In order to emphasise this dependence, we can write $\stateVariable=\stateVariable^{r}$, where $r$ is the control in the space $\spaceExecutionRates$ of execution rates. With this notation, we express the objective function $H=H^{\inventoryRate}$ of the classical stochastic optimisation problem as 
\begin{equation}\label{eq.objectiveFunction}
H^{\inventoryRate} (t,x_1,x_2) :=
\Expectation_{t,\fundamentalPrice_t=x_1, \inventory_t=x_2}
\left[
\int_{t}^{\timeHorizon} F(s,X^{\inventoryRate}_s,\inventoryRate_s) ds
\right],
\end{equation}
where $\inventory$ is in $\spaceInventoryTrajectoriesFuel^{0,\initialInventory}$, and where $F=F(t,X,r)=F(t,\fundamentalPrice,\inventory,r)$ is a Lagrangian that describes risk-adjusted execution-impacted costs from trade. The stochastic optimisation problem for fuel-constrained inventory trajectories is therefore written as 
\begin{equation}\label{eq.fuelConstrainedStochOptProblem}
\inf \left\lbrace H^{\inventoryRate} (0,\fundamentalPrice_0,\initialInventory): \, \inventory \in \spaceInventoryTrajectoriesFuel^{0,\initialInventory} \right\rbrace . 
\end{equation}

An important aspect in the definition of the Lagrangian $F$ in equation \eqref{eq.objectiveFunction} is the description of how the trade execution impacts the price, i.e. the market impact. In this work we focus on the so-called temporary market impact.    

Let $\fundamentalPrice_t$ denote the price process at time $t$. We say that the liquidator exerts a temporary market impact on $\fundamentalPrice_t$ if for some function $g$ in $C(\R\squared)$ the execution price of her order at time $t$ is 
\begin{equation*}
g(\fundamentalPrice_t,\inventoryRate_t),
\end{equation*} 
where $t\mapsto\inventory_t$ is the liquidator's inventory trajectory, and $\inventoryRate_{t}$ denotes its time derivative at time $t$. A well-known example of temporary market impact is given by $g(S,r) = S+\coeffMarketImpact\squared r$, for some coefficient $\coeffMarketImpact>0$ of market impact. In this case, the execution cost is a linear function of the rate of execution $\inventoryRate$; since in a liquidation $\inventory$ is decreasing, the steepest the inventory trajectory is at time $t$, the smaller the execution price is at time $t$. The classical formulation in \cite{AC01opt} utilises this linear temporary market impact. 

In the following two paragraphs \ref{sec.reductionStatic} and \ref{sec.errorsOfLiquidation}, we introduce the concepts of reduction to static optimal strategies and the concept of liquidation error. We show that, in the context of linear temporary market impact with quadratic inventory cost, fuel-constrained optimal liquidation strategies are bound to be static, and non-fuel constrained optimal liquidation strategies commit biased errors of liquidation. This motivates the search for a formulation of the problem of optimal execution that is alternative to the classical one of equation \eqref{eq.fuelConstrainedStochOptProblem}. A possible alternative will then be presented in Section \ref{sec.goodTradeExecutions}; under this alternative, optimal liquidation strategies will be non-static, and -- despite being non-fuel constrained -- they will have unbiased liquidation errors.

\subsection{Reduction to static optimal strategies}\label{sec.reductionStatic}
In a temporary market impact model, trade revenues gained in the infinitesimal time $dt$  are  $
-g(\fundamentalPrice_t, \inventoryRate_t) \inventoryRate_t dt$. When the temporary market impact is linear, this becomes
\begin{equation*}
\left(-\fundamentalPrice_t \inventoryRate_t - \coeffMarketImpact\squared \inventoryRate_t\squared \right) dt,
\end{equation*}
where revenues decompose in a first summand $\fundamentalPrice_t \inventoryRate_t$ where the price process appears, and a second summand $\coeffMarketImpact\squared \inventoryRate_t\squared$ that does not comprise the price process. Clearly, such a decomposition holds in more general situations than the one of linear market impact. If this decomposition holds for the whole Lagrangian $F$ and if the bounded variation component $\boundedVariationPart$ of the price process $\fundamentalPrice$ is deterministic, then we observe the reduction of optimal dynamic solutions to optimal static ones.   This happens in some cases studied in the literature (see \cite[Sections 6.3 and 6.4]{CJP15alg}), where the optimal inventory trajectory, although sought dynamic, is eventually found to be static. It means that the optimiser of \eqref{eq.fuelConstrainedStochOptProblem} is in the space $\spaceInventoryTrajectoriesStatic^{0,\initialInventory}$ of static inventory trajectories. The following proposition explains this phenomenon, pointing out those aspects of the model that cause the reduction to static trade executions.

\begin{prop}[``Reduction to static optimal trade executions'']\label{prop.reductionToStaticSolution}
Assume that 
\begin{equation}\label{eq.decompositionOfLagrangian}
F(t,X,r) = rS + \Lagrangian(t,q,r),
\end{equation}
for some Caratheodory function\footnote{See Definition \ref{def.caratheodory} in Appendix \ref{sec.eulerLagrangeInPresenceOfPricePath} for the definition of Caratheodory function. 
The function $\Lagrangian$ in the statement of Proposition \ref{prop.reductionToStaticSolution} is assumed to be a Caratheodory function with the choices: 1. the open interval $(0,\timeHorizon)$ as the subset $U$ of $\Rn$ in Definition \ref{def.caratheodory}; 2. the two-dimensional variable $(q,r)$ as the variable  $\xi$ in Definition \ref{def.caratheodory}.
}
$\Lagrangian$ that does not depend on $\fundamentalPrice$. Assume  that there exist an integrable function $\alpha$ on $\timeWindow$  and a constant $\beta \geq 0$ such that 
\begin{equation*}
\abs{\Lagrangian (t,q,r)} \leq \alpha(t) + \beta \left(q\squared + r \squared \right).
\end{equation*}
Let the price process $\fundamentalPrice$ be a totally square integrable continuous canonical semimartingale with canonical decomposition  
\begin{equation}\label{eq.evolutionOfFundamentalPrice}
\fundamentalPrice_t = \fundamentalPrice_0 + \boundedVariationPart_t + \martingale_t, \qquad 0\leq t \leq \timeHorizon.
\end{equation}
Assume that $\boundedVariationPart$ is $\sigmaAlgebra_0$-measurable, namely that the drift of the price process is deterministic. 
Then, for all $0\leq t\leq \timeHorizon$ it holds
\begin{equation*}
\inf \Big\lbrace H^{\inventoryRate} (t,\fundamentalPrice_t,\initialInventoryAtTimeT) : \, \inventory \in \spaceInventoryTrajectoriesStatic^{t,\initialInventoryAtTimeT} \Big\rbrace
= \inf \Big\lbrace 
H^{\inventoryRate} (t,\fundamentalPrice_t,\initialInventoryAtTimeT) : \, \inventory \in \spaceInventoryTrajectoriesFuel^{t,\initialInventoryAtTimeT} \Big\rbrace.
\end{equation*}
\end{prop}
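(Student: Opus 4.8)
Since $\spaceInventoryTrajectoriesStatic^{t,\initialInventoryAtTimeT}\subseteq\spaceInventoryTrajectoriesFuel^{t,\initialInventoryAtTimeT}$, the inequality ``$\geq$'' between the two infima is immediate, and the content of the statement is the reverse one. The plan is to attach to every fuel-constrained trajectory $\inventory\in\spaceInventoryTrajectoriesFuel^{t,\initialInventoryAtTimeT}$, with rate $\inventoryRate=:r\in\spaceExecutionRates$, the \emph{conditioned} trajectory
\[
\tilde{\inventory}_s:=\Expectation\big[\inventory_s\mid\sigmaAlgebra_t\big],\qquad t\leq s\leq\timeHorizon,
\]
and to prove that $H^{\dot{\tilde{\inventory}}}(t,\fundamentalPrice_t,\initialInventoryAtTimeT)\leq H^{\inventoryRate}(t,\fundamentalPrice_t,\initialInventoryAtTimeT)$; taking the infimum over $\inventory$ then closes the argument. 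First I would verify that $\tilde{\inventory}$ is admissible and static: conditional Fubini shows it is absolutely continuous with derivative $\tilde{r}_s=\Expectation[r_s\mid\sigmaAlgebra_t]$, which lies in $\spaceExecutionRates$ because conditioning is an $\Ltwo$-contraction and $\tilde{r}_s$ is $\sigmaAlgebra_t$-measurable (hence $(\sigmaAlgebra_s)_{s\geq t}$-progressively measurable); moreover $\tilde{\inventory}_t=\initialInventoryAtTimeT$, $\tilde{\inventory}_\timeHorizon=\Expectation[\inventory_\timeHorizon\mid\sigmaAlgebra_t]=0$, and $\tilde{\inventory}_s$ is $\sigmaAlgebra_t$-measurable for every $s$, so $\tilde{\inventory}\in\spaceInventoryTrajectoriesStatic^{t,\initialInventoryAtTimeT}$.

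Using the splitting \eqref{eq.decompositionOfLagrangian} I would write
\[
H^{\inventoryRate}(t,\fundamentalPrice_t,\initialInventoryAtTimeT)=\Expectation\Big[\int_t^{\timeHorizon}r_s\fundamentalPrice_s\,ds\;\Big|\;\sigmaAlgebra_t\Big]+\Expectation\Big[\int_t^{\timeHorizon}\Lagrangian(s,\inventory_s,r_s)\,ds\;\Big|\;\sigmaAlgebra_t\Big],
\]
both terms being finite by the growth bound on $\Lagrangian$, the $\Ltwo$-integrability of $r$ and of $\fundamentalPrice$, and Cauchy--Schwarz. The key step is the first term. Integrating by parts along the continuous semimartingale $\fundamentalPrice$ — with no bracket correction, since $\inventory$ has absolutely continuous, hence finite-variation, paths — and using $\inventory_t=\initialInventoryAtTimeT$, $\inventory_\timeHorizon=0$, gives
\[
\int_t^{\timeHorizon}r_s\fundamentalPrice_s\,ds=-\initialInventoryAtTimeT\,\fundamentalPrice_t-\int_t^{\timeHorizon}\inventory_s\,d\boundedVariationPart_s-\int_t^{\timeHorizon}\inventory_s\,d\martingale_s.
\]
The stochastic integral against $\martingale$ is a true martingale on $[t,\timeHorizon]$ — this is exactly where the ``totally square integrable'' hypothesis on $\fundamentalPrice$ and the $\Ltwo$-bound on $r$ enter (together with $\Expectation[\sup_{s}\inventory_s^2]<\infty$, via localisation and uniform integrability) — so its $\sigmaAlgebra_t$-conditional expectation vanishes; and since $\boundedVariationPart$ is $\sigmaAlgebra_0$-measurable, hence deterministic and of finite variation, conditional Fubini yields $\Expectation[\int_t^{\timeHorizon}\inventory_s\,d\boundedVariationPart_s\mid\sigmaAlgebra_t]=\int_t^{\timeHorizon}\tilde{\inventory}_s\,d\boundedVariationPart_s$. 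Hence
\[
\Expectation\Big[\int_t^{\timeHorizon}r_s\fundamentalPrice_s\,ds\;\Big|\;\sigmaAlgebra_t\Big]=-\initialInventoryAtTimeT\,\fundamentalPrice_t-\int_t^{\timeHorizon}\tilde{\inventory}_s\,d\boundedVariationPart_s,
\]
which depends on $\inventory$ only through its conditional mean $\tilde{\inventory}$; in particular it is unchanged when $\inventory$ is replaced by $\tilde{\inventory}$ (whose conditional mean is $\tilde{\inventory}$ itself).

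For the second term I would apply conditional Jensen's inequality followed by Fubini: since $\tilde{\inventory}_s=\Expectation[\inventory_s\mid\sigmaAlgebra_t]$ and $\tilde{r}_s=\Expectation[r_s\mid\sigmaAlgebra_t]$, convexity of $\Lagrangian(s,\cdot,\cdot)$ in its last two variables — a structural property of the risk criteria considered here, e.g. the quadratic inventory cost of Section \ref{sec.IC} — gives $\Expectation[\Lagrangian(s,\inventory_s,r_s)\mid\sigmaAlgebra_t]\geq\Lagrangian(s,\tilde{\inventory}_s,\tilde{r}_s)$ for almost every $s$, and integrating in $s$ (legitimate by the growth bound) yields $\Expectation[\int_t^{\timeHorizon}\Lagrangian(s,\inventory_s,r_s)\,ds\mid\sigmaAlgebra_t]\geq\int_t^{\timeHorizon}\Lagrangian(s,\tilde{\inventory}_s,\tilde{r}_s)\,ds=\Expectation[\int_t^{\timeHorizon}\Lagrangian(s,\tilde{\inventory}_s,\tilde{r}_s)\,ds\mid\sigmaAlgebra_t]$, the last equality because $\tilde{\inventory},\tilde{r}$ are $\sigmaAlgebra_t$-measurable. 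Adding the two terms gives $H^{\dot{\tilde{\inventory}}}(t,\fundamentalPrice_t,\initialInventoryAtTimeT)\leq H^{\inventoryRate}(t,\fundamentalPrice_t,\initialInventoryAtTimeT)$, and since $\tilde{\inventory}\in\spaceInventoryTrajectoriesStatic^{t,\initialInventoryAtTimeT}$, taking the infimum over $\inventory\in\spaceInventoryTrajectoriesFuel^{t,\initialInventoryAtTimeT}$ produces ``$\leq$'', whence equality (both infima understood as essential infima). The step I expect to be the main obstacle is the integration-by-parts/conditional-Fubini manipulation — specifically, checking that $\int\inventory\,d\martingale$ is a genuine, not merely local, martingale under the stated integrability, which is precisely the role of the ``totally square integrable'' requirement — together with the conceptual point underpinning the whole result: after integrating against $dt$, the price path enters the cost only through a bilinear pairing with the conditional-mean inventory path, which is what forces the optimum to be attained on the static class.
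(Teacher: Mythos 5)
Your treatment of the price term is essentially the paper's: integrate by parts along the continuous semimartingale, use total square integrability (plus a localisation) to kill the martingale part under the conditional expectation, and reduce the price's contribution to a pairing of the inventory with the deterministic drift $\boundedVariationPart$. The divergence, and the genuine gap, is in how you handle the $\Lagrangian$ term. You project each fuel-constrained trajectory onto the static class via $\tilde{\inventory}_s=\Expectation[\inventory_s\mid\sigmaAlgebra_t]$ and invoke conditional Jensen, which requires $(q,r)\mapsto\Lagrangian(t,q,r)$ to be convex. Convexity is \emph{not} among the hypotheses of Proposition \ref{prop.reductionToStaticSolution}: the only assumptions on $\Lagrangian$ are that it is a Caratheodory function satisfying the quadratic growth bound. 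You flag this yourself by appealing to ``a structural property of the risk criteria considered here'', but that imports an assumption the statement does not make (convexity only enters later, as an explicit hypothesis of Corollary \ref{corol.reductionToOptimalStatic}). As written, your argument therefore proves a strictly weaker proposition, and the Jensen step simply fails for a non-convex $\Lagrangian$.

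The paper avoids the projection altogether. After the martingale cancellation it bounds the fuel-constrained infimum from below by
\begin{equation*}
-\initialInventoryAtTimeT\,\fundamentalPrice_t
+\Expectation_t\Big[\inf_{\eta\in\spaceInventoryTrajectoriesPathwise^{t,\initialInventoryAtTimeT}}\int_t^{\timeHorizon}\big(-\eta_r\,d\boundedVariationPart_r+\Lagrangian(r,\eta_r,\dot{\eta}_r)\,dr\big)\Big],
\end{equation*}
observes that the functional inside the infimum depends on $\omega$ only through the path $\eta$ (because $\boundedVariationPart$ is non-random), so the infimum is a constant approached along deterministic minimising sequences, and identifies that constant with the static infimum; combined with the trivial inclusion this yields equality. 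No convexity is needed because one compares infima of a deterministic functional rather than comparing a trajectory with its conditional mean. If you replace your Jensen step by this pathwise-infimum argument, the remainder of your proof goes through.
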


\begin{proof}
We give first the proof in the case where $\martingale$ in the canonical decomposition of $\fundamentalPrice$ is a martingale.

Let $\inventory$ be in $\spaceInventoryTrajectoriesFuel^{t,\initialInventoryAtTimeT}$. Let $\stateVariable$ be the state variable  $\stateVariable = (\fundamentalPrice,\inventory)$ and let $Y$ be the two dimensional path $Y=(\inventory,\boundedVariationPart)$. Let $\varphi$ in $C^{\infty}(\R^2)$ be the function $\varphi(x,y)=xy$. Notice that 
\begin{equation*}
\varphi(\stateVariable^{\inventoryRate}_{r} ) - \varphi(\stateVariable^{\inventoryRate}_{t})
-\int_{t}^{r}\stateVariable^{\inventoryRate}_s dY_s, \qquad t\leq r,
\end{equation*}
is a centred martingale. Hence,
\begin{equation*}
\begin{split}
H^{\inventoryRate} (t,\fundamentalPrice_t,\initialInventoryAtTimeT) = &
\Expectation_t \Big[
\int_{t}^{\timeHorizon} F(s,\stateVariable^{\inventoryRate}_s, \inventoryRate_s) ds \\
& \qquad + \varphi (\stateVariable^{\inventoryRate}_\timeHorizon)
- \varphi (\stateVariable^{\inventoryRate}_t)
-\int_{t}^{\timeHorizon} \stateVariable^{\inventoryRate}_r dY_r 
\Big] \\
=& - \initialInventoryAtTimeT \fundamentalPrice_t 
+\Expectation_t \Big[
\int_{t}^{\timeHorizon} \left(-\inventory_r d\boundedVariationPart_s + \Lagrangian(r,\inventory_r,\inventoryRate_r)dr\right).
\Big]
\end{split}
\end{equation*}
It holds
\begin{equation*}
\begin{split}
\inf_{\inventory\in \spaceInventoryTrajectoriesFuel^{t,\initialInventoryAtTimeT}}
H^{\inventoryRate} &(t,\fundamentalPrice_t,\initialInventoryAtTimeT)\\
\geq& 
-\initialInventoryAtTimeT \fundamentalPrice_t
+ \Expectation_t \Big[
\inf_{\inventory \in \spaceInventoryTrajectoriesPathwise^{t,\initialInventoryAtTimeT}}
\int_{t}^{\timeHorizon} \left(-\inventory_r d\boundedVariationPart_s + \Lagrangian(s,\inventory_s,\inventoryRate_s)ds\right)
\Big],
\end{split}
\end{equation*}
where the infimum on the right hand side is taken in a pathwise sense for each realisation of the price $\fundamentalPrice$. In fact, the integrand does not depend on such a realisation (i.e. it does not depend on $\omega$ in $\Omega$) because $\boundedVariationPart$ is non-random. Therefore, any minimising sequence for the infimum inside the expectation is actually independent of $\omega$ and we have 
\begin{equation*}
\begin{split}
\inf_{\inventory\in \spaceInventoryTrajectoriesFuel^{t,\initialInventoryAtTimeT}}
H^{\inventoryRate} &(t,\fundamentalPrice_t,\initialInventoryAtTimeT)\\
\geq &
-\initialInventoryAtTimeT \fundamentalPrice_t
+ \inf_{\inventory \in \spaceInventoryTrajectoriesStatic^{t,\initialInventoryAtTimeT}}
\Expectation_t \Big[
\int_{t}^{\timeHorizon} \left(-\inventory_r d\boundedVariationPart_s + \Lagrangian(s,\inventory_s,\inventoryRate_s)ds \right)
\Big]\\
=&
\inf_{\inventory \in \spaceInventoryTrajectoriesStatic^{t,\initialInventoryAtTimeT}}
H^{\inventoryRate} (t,\fundamentalPrice_t,\initialInventoryAtTimeT).
\end{split}
\end{equation*}
This yields the stated equality in the case where $\martingale$ is a martingale. If instead $\martingale$ is only a local martingale, a standard localisation argument concludes the proof.  
\end{proof}

\begin{remark}\label{remark.optimalityinAC01opt}
 The classical optimal trade  execution proposed by \cite{AC01opt} was originally formulated  with optimality claimed over the set of static inventory trajectories and under the assumption that the price process is an arithmetic Brownian motion. However, it is easy to show that the same solution of the static optimisation is obtained if the Brownian motion is replaced by any square-integrable martingale. In this sense, the static optimal solution of Almgren and Chriss is robust. In view of Proposition \ref{prop.reductionToStaticSolution}, this robustness actually extends to the case where the liquidation strategy is regarded as the optimiser over  the class of fuel-constrained inventory trajectories.
\end{remark}

\begin{remark}\label{remark.GS11dynamicSolution}
A simple case where the optimal trading strategy is non-static is discussed by \cite{GS11opt}. This means that the optimal inventory trajectory obtained from the stochastic control problem is in the space $\spaceInventoryTrajectoriesFuel^{0,\initialInventory}\setminus \spaceInventoryTrajectoriesStatic^{0,\initialInventory}$.  In view of Proposition \ref{prop.reductionToStaticSolution}, we understand the dynamism of their solution by noticing the following. The risk measure adopted by those authors (see \cite[Section 2.1]{GS11opt}) is the value-at-risk of the position $\inventory_t \fundamentalPrice_t$, and this has the consequence of disrupting the assumption that the Lagrangian $F$ can be decomposed as in equation \eqref{eq.decompositionOfLagrangian}. Indeed, Gatheral and Schied consider the optimisation 
\begin{equation}\label{eq.GS11optimisationProblem}
\inf\left\lbrace
\Expectation \Big[
\int_{0}^{\timeHorizon} \left(\inventoryRate_t\squared + \lambda \inventory_t\fundamentalPrice_t\right) dt
\Big]
 : \, \inventory \in \spaceInventoryTrajectoriesFuel^{0,\initialInventory}
\right\rbrace,
\end{equation}
where the price process  $\fundamentalPrice_t = \exp(\sigma\brownianMotion_t -\sigma\squared t /2)$ is the exponential martingale of $\sigma\brownianMotion$, where $\brownianMotion$ denotes  the standard one-dimensional Brownian motion, and where $\lambda$ and $\sigma$ are positive coefficients. Equation \eqref{eq.GS11optimisationProblem} is  \cite[Equation (2.7)]{GS11opt}.
Alternatively, it can be noticed that the same minimisation as in equation \eqref{eq.GS11optimisationProblem} is produced by choosing $F(t,X,r)=rS +  r\squared$ and $dS=\lambda Sdt +\sigma SdW$. Indeed, the expected cost 
\begin{equation}\label{eq.GS11alternativeFormulation}
\begin{split}
\Expectation \Big[
\int_{0}^{\timeHorizon} \left(\inventoryRate_t\squared +   \inventoryRate_t\fundamentalPrice_t\right) dt
\Big],& \\
& \text{ with } 
d\fundamentalPrice_t = \lambda \fundamentalPrice_t dt + \sigma \fundamentalPrice d\brownianMotion_t,
\end{split}
\end{equation}
differs from the expected cost in equation  \eqref{eq.GS11optimisationProblem} (where the price process is the exponential martingale) only by a constant. With the modelling choices in equation \eqref{eq.GS11alternativeFormulation}, the Lagrangian $F$ does not incorporate any risk criterion and thus it satisfies the assumptions of Proposition \ref{prop.reductionToStaticSolution}, but the price process $\fundamentalPrice$ has a position dependent drift coefficient, violating  the assumption that $\boundedVariationPart$ in equation \eqref{eq.evolutionOfFundamentalPrice} is deterministic.
\end{remark}

\begin{remark}
In view of Proposition \ref{prop.reductionToStaticSolution}, we understand why incorporating signals (i.e. short-term price predictors) in the framework of optimal trade execution leads to dynamic optimal strategies (see \cite{CJ16inc} and  \cite{LN19inc}). Indeed, signals are incorporated by modelling the price evolution as \[
d\fundamentalPrice_t = I_t dt + \sigma(t,\fundamentalPrice_t)dW_t,
\]
where $I_t$ is a Markov process that represents the signal. The stochasticity of $I$ disrupts the assumption on the drift $\boundedVariationPart$ in Proposition \ref{prop.reductionToStaticSolution}.
\end{remark}

\begin{corol}
\label{corol.reductionToOptimalStatic}
Assume the setting of Proposition \ref{prop.reductionToStaticSolution}. Assume that the  price  process is modelled as the diffusion 
\begin{equation}\label{eq.SDEofFundamentalPrice}
	d\fundamentalPrice_t = \mu(t)dt +\sigma(t,\fundamentalPrice_t)dW_t,
\end{equation}
for some measurable Lipschitz coefficients $\mu$ and $\sigma$. Assume that the drift coefficient $\mu$ is a deterministic function of time only.
Assume that  
\begin{enumerate}
\item for all $t$ the map $(\inventory, r)\mapsto -\mu(t)\inventory + \Lagrangian(t,\inventory,r)$ is strictly convex;
\item there exist exponents $p>m\geq 1$ and coefficients $\alpha_1>0$, $\alpha_2,\alpha_3 \in \R$ such that 
\begin{equation*}
-\mu(t)\inventory + \Lagrangian(t,\inventory, r) \geq \alpha_1 \abs{r}^{p} + \alpha_2 \abs{\inventory}^m + \alpha_3,
\end{equation*}
for all $t$, $\inventory$ and $r$. 
\end{enumerate} 
Then, the infimum in equation \eqref{eq.fuelConstrainedStochOptProblem} is attained for some optimal deterministic $\inventory$ in $\spaceInventoryTrajectoriesStatic^{0,\initialInventory} \cap W^{1,p}(0,\timeHorizon)$, where $W^{1,p}(0,\timeHorizon)$ denotes the Sobolev space of absolutely continuous function such that their $p$-th power and the $p$-th power of their derivative are integrable on the time interval $(0,\timeHorizon)$. 
\end{corol}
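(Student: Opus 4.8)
The plan is to collapse the stochastic control problem of \eqref{eq.fuelConstrainedStochOptProblem} onto a deterministic one-dimensional problem of the Calculus of Variations, and then to run Tonelli's direct method on the latter. First I would check that the hypotheses assumed here are exactly what Proposition~\ref{prop.reductionToStaticSolution} needs: under the stated Lipschitz assumptions the diffusion \eqref{eq.SDEofFundamentalPrice} is a totally square integrable continuous canonical semimartingale, and since $\mu$ is a function of time only its bounded-variation part $\boundedVariationPart_t=\int_0^t\mu(s)\,ds$ is $\sigmaAlgebra_0$-measurable; the function $\Lagrangian$ is the Caratheodory function of \eqref{eq.decompositionOfLagrangian}. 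Hence the infimum in \eqref{eq.fuelConstrainedStochOptProblem} equals the infimum of $H^{\inventoryRate}(0,\fundamentalPrice_0,\initialInventory)$ over $\spaceInventoryTrajectoriesStatic^{0,\initialInventory}$, and, by the same $\varphi(x,y)=xy$ integration-by-parts computation as in the proof of that Proposition, specialised to a static $\inventory$ and a deterministic drift (so that the conditional expectation is trivial and $d\boundedVariationPart_r=\mu(r)\,dr$), this equals $-\initialInventory\fundamentalPrice_0+\inf\{\mathcal J(\inventory):\inventory\in\spaceInventoryTrajectoriesStatic^{0,\initialInventory}\}$ with
\[
\mathcal J(\inventory):=\int_0^{\timeHorizon}\widetilde\Lagrangian\bigl(t,\inventory_t,\inventoryRate_t\bigr)\,dt,\qquad \widetilde\Lagrangian(t,q,r):=-\mu(t)\,q+\Lagrangian(t,q,r).
\]
It therefore suffices to produce a minimiser of $\mathcal J$ over $\mathcal A:=\{\inventory\in W^{1,p}(0,\timeHorizon):\inventory(0)=\initialInventory,\ \inventory(\timeHorizon)=0\}$; such a minimiser is a single deterministic path with the prescribed endpoints, hence the asserted element of $\spaceInventoryTrajectoriesStatic^{0,\initialInventory}\cap W^{1,p}(0,\timeHorizon)$, and by the chain of equalities it attains the infimum in \eqref{eq.fuelConstrainedStochOptProblem}.

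For the deterministic problem I would establish the two ingredients of Tonelli's theorem, which are precisely hypotheses (i) and (ii). \emph{Coercivity:} by (ii), $\mathcal J(\inventory)\ge\alpha_1\int_0^{\timeHorizon}|\inventoryRate_t|^{p}\,dt+\alpha_2\int_0^{\timeHorizon}|\inventory_t|^{m}\,dt+\alpha_3\timeHorizon$; since $\inventory(0)=\initialInventory$ is fixed, Hölder gives $\supNorm[\inventory]\le|\initialInventory|+\timeHorizon^{1-1/p}\bigl(\int_0^{\timeHorizon}|\inventoryRate_t|^{p}\,dt\bigr)^{1/p}$, so the middle term is dominated by a power $m<p$ of $\int_0^{\timeHorizon}|\inventoryRate_t|^{p}\,dt$; consequently $\mathcal J$ is bounded below on $\mathcal A$ and $\mathcal J(\inventory)\to+\infty$ as $\norm[\inventory]_{W^{1,p}}\to\infty$ within $\mathcal A$. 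Nonemptiness of $\mathcal A$ and finiteness of the infimum follow by testing the affine interpolant $t\mapsto\initialInventory(1-t/\timeHorizon)$, using the quadratic growth bound on $\Lagrangian$ and the boundedness of $\mu$ on $\timeWindow$. \emph{Compactness and lower semicontinuity:} because $p>1$, $W^{1,p}(0,\timeHorizon)$ is reflexive and embeds compactly into $C[0,\timeHorizon]$; a minimising sequence $(\inventory_n)$ is bounded in $W^{1,p}$, so along a subsequence $\inventory_n\rightharpoonup\optimalInventory$ in $W^{1,p}$ and $\inventory_n\to\optimalInventory$ uniformly, whence $\optimalInventory\in\mathcal A$ as the endpoint constraints pass to the uniform limit; and by the Tonelli--Serrin lower semicontinuity theorem — applicable because $\widetilde\Lagrangian$ is Caratheodory, is bounded below by an affine function of $(q,r)$, and, by (i), is convex in the velocity $r$ — one obtains $\mathcal J(\optimalInventory)\le\liminf_n\mathcal J(\inventory_n)=\inf_{\mathcal A}\mathcal J$, so $\optimalInventory$ is a minimiser. (Hypothesis (i) in fact gives joint strict convexity of $\widetilde\Lagrangian$, hence strict convexity of $\mathcal J$ on the convex set $\mathcal A$ and uniqueness of the minimiser, although uniqueness is not part of the claim.) Finally, exactly as in Proposition~\ref{prop.reductionToStaticSolution}, the case in which $\martingale$ is only a local martingale is reduced to the martingale case by localisation.

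I expect the real content to reside in the lower semicontinuity step: convexity of $\widetilde\Lagrangian$ in the velocity (hypothesis (i)) is used there essentially and cannot be dropped, since otherwise a minimising sequence may converge weakly to a path along which the functional drops. A subordinate, more technical point is the reconciliation of integrability scales — the stochastic problem constrains execution rates to $L^2$, whereas the coercivity estimate produces only $L^p$ control (which, together with the standing quadratic bound on $\Lagrangian$, moreover forces $p\le2$). Reconciling the two — so that the minimiser of $\mathcal J$ on $\mathcal A$ genuinely attains the infimum in \eqref{eq.fuelConstrainedStochOptProblem} and lies in the stated class — is handled using the finiteness of $\timeHorizon$ together with a routine density argument, noting that $\mathcal J\equiv+\infty$ off $W^{1,p}$. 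The remaining items (Caratheodory measurability of $\widetilde\Lagrangian$, the precise form of the integration-by-parts identity under a deterministic drift) are routine.
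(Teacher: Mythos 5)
Your proof is correct and follows essentially the same route as the paper: reduce to the static problem via Proposition \ref{prop.reductionToStaticSolution}, then apply the direct method of the Calculus of Variations to the deterministic functional with integrand $-\mu(t)\inventory+\Lagrangian(t,\inventory,r)$, where hypothesis (ii) gives coercivity in $W^{1,p}$ and hypothesis (i) gives the convexity needed for weak lower semicontinuity. The only difference is that the paper outsources your Tonelli argument to a citation (Dacorogna, Theorem 4.1) rather than spelling it out; your closing remark on the $L^{2}$-versus-$L^{p}$ integrability of the optimal rate is a fair observation about a point the paper glosses over, not a defect of your argument.
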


\begin{remark}\label{remark.reductionToStaticSolutionInCarteaTextbook}
The assumptions on $F$ in Corollary \ref{corol.reductionToOptimalStatic} are satisfied in particular by the classical choice 
\begin{equation}\label{eq.classicalLagrangianWithLinearTemporaryImpact}
F(t,\fundamentalPrice, \inventory,r)
=r\fundamentalPrice + \coeffMarketImpact\squared  r\squared + \coeffRiskAversion\squared \, \inventory\squared ,
\end{equation}
where $\coeffMarketImpact>0$ is a coefficient of temporary market impact and $\coeffRiskAversion\geq 0 $ is a coefficient of risk aversion (or of inventory cost). Therefore, Corollary \ref{corol.reductionToOptimalStatic} explains why in \cite[Section 6.3]{CJP15alg} the optimal solution is sought dynamic and eventually found to be static. This also says that, although in \cite{AC01opt} the optimal trade execution was sought only over the class $\spaceInventoryTrajectoriesStatic$ for tractability, this was in fact without loss of generality (Remark \ref{remark.optimalityinAC01opt}).
\end{remark}

\begin{proof}[Proof of Corollary \ref{corol.reductionToOptimalStatic}]
The fact that the infimum over $\spaceInventoryTrajectoriesFuel^{0,\initialInventory}$ is actually the same as the infimum over $\spaceInventoryTrajectoriesStatic^{0,\initialInventory}$ follows from Proposition \ref{prop.reductionToStaticSolution}.  Existence, uniqueness and $p$-integrability of the minimiser follow from the two assumptions on the function $(t,\inventory,r) \mapsto -\mu(t)\inventory +\Lagrangian(t,\inventory,r)$; see \cite[Theorem 4.1]{Dac08dir}.
\end{proof}

\subsection{Errors of liquidation}\label{sec.errorsOfLiquidation}
The space $\spaceInventoryTrajectoriesFuel^{0,\initialInventory}$ of fuel-constrained admissible inventory trajectories has been isolated from the space $\spaceInventoryTrajectories^{0,\initialInventory}$ of first integrals of execution rates. An inventory trajectory $\inventory$ in $\spaceInventoryTrajectories^{0,\initialInventory}\setminus \spaceInventoryTrajectoriesFuel^{0,\initialInventory}$ is said to commit a liquidation error, because with positive probability $\inventory_\timeHorizon \neq \liquidationTarget$. Liquidation errors are common among dynamic solutions to optimal trade execution problems. This is because the mathematical techniques used for dynamic solutions are not well-suited to simultaneously impose the two constraints $\inventory_0 = \initialInventory$ and $\inventory_\timeHorizon = \liquidationTarget$. Clearly, the constraint $\inventory_0 = \initialInventory$ has the priority and hence the constraint $\inventory_\timeHorizon = \liquidationTarget$ is relaxed. The usual relaxation entails to introduce a terminal penalisation for the outstanding inventory at final time. Hence, if $F$ is the Lagrangian describing risk-adjusted cost of trade, it is custom to relax the minimisation in equation \eqref{eq.fuelConstrainedStochOptProblem} and consider instead the problem
\begin{equation}
\label{eq.introOfTerminalPenalisation}
\inf\left\lbrace
\Expectation \Big[
\int_{0}^{\timeHorizon} F(t,\stateVariable_t^{\inventoryRate},\inventoryRate_{t}) dt + \coeffPenalisationOutstandingInventory\squared \Big(\inventory_\timeHorizon - \liquidationTarget\Big)\squared
\Big] : \, 
\inventory \in \spaceInventoryTrajectories^{0,\initialInventory}
\right\rbrace,
\end{equation}
where $\coeffPenalisationOutstandingInventory\geq 0$ is a coefficient of penalisation for outstanding terminal inventory. Notice that the minimisation is performed over the broad class $\spaceInventoryTrajectories^{0,\initialInventory}$ of first integrals of execution rates. Notice also that the objective function in equation \eqref{eq.introOfTerminalPenalisation} can be expressed in the general form discussed so far because 
\begin{equation*}
\begin{split}
\Expectation \Big[
\int_{0}^{\timeHorizon} F(t,\stateVariable_t^{\inventoryRate},\inventoryRate_{t}) &dt + \coeffPenalisationOutstandingInventory\squared \Big(\inventory_\timeHorizon - \liquidationTarget\Big)\squared
\Big] 
% \\=&
=
\Expectation \Big[  \int_{0}^{\timeHorizon} G(t,\stateVariable_t^{\inventoryRate},\inventoryRate_{t}) dt \Big],
\end{split}
\end{equation*}
where $ G(t,\stateVariable_t^{\inventoryRate},\inventoryRate_{t}) =  F(t,\stateVariable_t^{\inventoryRate},\inventoryRate_{t}) + 2\coeffPenalisationOutstandingInventory\squared \inventoryRate_{t} \inventory_t$. 

We isolate liquidation errors whose expected value is null from those that on average either finish the liquidation before the time horizon $\timeHorizon$ (negative liquidation error) or after it (positive liquidation error).
\begin{defi}
  We say that the admissible inventory trajectory $\inventory$ in $\spaceInventoryTrajectories^{0,\initialInventory}$ has an unbiased liquidation error if $\Expectation[\inventory_\timeHorizon] = \liquidationTarget$. We say that $\inventory$ has a biased liquidation error if instead $\Expectation[\inventory_\timeHorizon] \neq \liquidationTarget$.
\end{defi}

By extension we say that a liquidation strategy is unbiased if its inventory trajectory has unbiased liquidation error, and we say that it is biased if it is not unbiased. 

The next proposition shows that the classical optimisation problem corresponding to linear temporary market impact with quadratic inventory cost produces in general optimal liquidation strategies with biased liquidation error. In other words, if the optimal inventory trajectory in this framework happens to have unbiased liquidation error, this unbiasedness is not robust with respect to the values of the model parameters $\coeffMarketImpact$, $\coeffRiskAversion$ and $\coeffPenalisationOutstandingInventory$: independently changing these values will disrupt the expected value of the inventory at $\timeHorizon$, turning it into a biased termination.

The solution to the optimisation problem is derived from \cite{BMO18opt}. Notice that the statement is general with respect to the distributional assumption of the price process, which is only assumed to be a totally square integrable semimartingale.

\begin{prop}
  Let the price process $\priceProcess$ be a totally square integrable special semimartingale. Consider the minimisation problem 
\begin{equation}
\label{eq.BMO18optimisation}
\inf\left\lbrace
\Expectation\Big[
\int_{0}^{\timeHorizon}F(t,\fundamentalPrice_t,\inventory_t,\inventoryRate_{t}) dt 
+ \coeffPenalisationOutstandingInventory\squared \Big(\inventory_\timeHorizon - \liquidationTarget\Big)\squared 
- \inventory_{\timeHorizon} \fundamentalPrice_\timeHorizon 
\Big] : \, 
\inventory \in \spaceInventoryTrajectories^{0,\initialInventory}
\right\rbrace,
\end{equation}
where the Lagrangian $F$ is $F(t,\fundamentalPrice_t,\inventory_t,\inventoryRate_{t})  = \inventoryRate_{t} \fundamentalPrice_t + \coeffMarketImpact\squared \inventoryRate_{t}\squared + \coeffRiskAversion\squared \inventory_{t}\squared$.
Let $\hat{\inventory}$ be the minimiser for \eqref{eq.BMO18optimisation}, and let $M:= \lbrace (\coeffMarketImpact, \coeffRiskAversion, \coeffPenalisationOutstandingInventory) \in \R_{+}^{3}: \,\, \Expectation[\hat{\inventory}_T]=\liquidationTarget\rbrace$. Then, $M$ is included in a manifold of dimension $2$.
\end{prop}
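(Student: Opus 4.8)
The plan is to compute $\Expectation[\optimalInventory_\timeHorizon]$ in closed form as a function of $(\coeffMarketImpact,\coeffRiskAversion,\coeffPenalisationOutstandingInventory)$ and then to observe that its dependence on $\coeffPenalisationOutstandingInventory$ sits entirely in a strictly positive denominator; consequently the equation $\Expectation[\optimalInventory_\timeHorizon]=\liquidationTarget=0$ becomes a single real-analytic equation in $\coeffMarketImpact$ and $\coeffRiskAversion$ only, whose solution set is a curve, and $M$ lies in the cylinder over that curve. First I would reduce the functional. Writing $\fundamentalPrice=\fundamentalPrice_0+\boundedVariationPart+\martingale$ for the canonical decomposition and integrating by parts, $\int_0^\timeHorizon\inventoryRate_t\fundamentalPrice_t\,dt=\inventory_\timeHorizon\fundamentalPrice_\timeHorizon-\initialInventory\fundamentalPrice_0-\int_0^\timeHorizon\inventory_t\,d\fundamentalPrice_t$, so the $-\inventory_\timeHorizon\fundamentalPrice_\timeHorizon$ term in \eqref{eq.BMO18optimisation} cancels and, the stochastic integral against $\martingale$ having zero mean under the total square integrability (after the usual localisation), the objective equals, up to the additive constant $-\initialInventory\fundamentalPrice_0$,
\[
\Expectation\Big[\int_0^\timeHorizon\big(\coeffMarketImpact\squared\inventoryRate_t\squared+\coeffRiskAversion\squared\inventory_t\squared\big)\,dt-\int_0^\timeHorizon\inventory_t\,d\boundedVariationPart_t+\coeffPenalisationOutstandingInventory\squared\inventory_\timeHorizon\squared\Big].
\]
This functional is strictly convex and coercive on $\spaceInventoryTrajectories^{0,\initialInventory}$ (for $\coeffMarketImpact>0$), with unique minimiser the $\optimalInventory$ supplied by \cite{BMO18opt}. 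Testing its first-order condition against adapted perturbations $h$ with $h_0=0$, and using the progressive measurability of execution rates, gives $\Expectation[\,\cdot\mid\sigmaAlgebra_s]=0$ for a.e.\ $s$ for the process $2\coeffMarketImpact\squared\dot{\optimalInventory}_s+\int_s^\timeHorizon 2\coeffRiskAversion\squared\optimalInventory_t\,dt-(\boundedVariationPart_\timeHorizon-\boundedVariationPart_s)+2\coeffPenalisationOutstandingInventory\squared\optimalInventory_\timeHorizon$.

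Taking plain expectations removes the conditioning: with $\bar{\inventory}_t:=\Expectation[\optimalInventory_t]$ and $\bar{\boundedVariationPart}_t:=\Expectation[\boundedVariationPart_t]$ one obtains the closed linear equation $2\coeffMarketImpact\squared\dot{\bar{\inventory}}_s+2\coeffRiskAversion\squared\int_s^\timeHorizon\bar{\inventory}_t\,dt-(\bar{\boundedVariationPart}_\timeHorizon-\bar{\boundedVariationPart}_s)+2\coeffPenalisationOutstandingInventory\squared\bar{\inventory}_\timeHorizon=0$, which is a deterministic two-point boundary-value problem: it forces $\bar{\inventory}_0=\initialInventory$, the transversality relation $\coeffMarketImpact\squared\dot{\bar{\inventory}}_\timeHorizon+\coeffPenalisationOutstandingInventory\squared\bar{\inventory}_\timeHorizon=0$ (the case $s=\timeHorizon$), and a constant-coefficient second-order ODE with frequency $\kappa:=\coeffRiskAversion/\coeffMarketImpact$ and forcing driven by $\bar{\boundedVariationPart}$. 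Solving it by variation of constants and substituting the transversality condition, the two terms carrying $\coeffPenalisationOutstandingInventory$ in the numerator cancel, leaving
\[
\Expectation[\optimalInventory_\timeHorizon]=\frac{\coeffMarketImpact\squared\,N(\coeffMarketImpact,\coeffRiskAversion)}{\coeffMarketImpact\squared\cosh(\kappa\timeHorizon)+\coeffPenalisationOutstandingInventory\squared\,\kappa\inverse\sinh(\kappa\timeHorizon)},\qquad N(\coeffMarketImpact,\coeffRiskAversion):=\initialInventory+\frac{1}{2\coeffMarketImpact\coeffRiskAversion}\int_0^\timeHorizon\sinh\!\Big(\frac{\coeffRiskAversion}{\coeffMarketImpact}s\Big)\,d\bar{\boundedVariationPart}_s .
\]
The denominator is strictly positive whenever $\coeffMarketImpact>0$ (and $\coeffRiskAversion,\coeffPenalisationOutstandingInventory\geq0$), and $N$ extends real-analytically across $\coeffRiskAversion=0$, since $\coeffRiskAversion\inverse\sinh(\coeffRiskAversion s/\coeffMarketImpact)$ is entire in $\coeffRiskAversion$. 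Hence, on the open set $\{\coeffMarketImpact>0\}$, the set $M$ coincides with $\{N(\coeffMarketImpact,\coeffRiskAversion)=0\}\times\R_+$.

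To conclude, note that $N$ is not identically zero: for $\initialInventory\neq0$ one has $N(\coeffMarketImpact,0)=\initialInventory+\tfrac{1}{2\coeffMarketImpact\squared}\int_0^\timeHorizon s\,d\bar{\boundedVariationPart}_s\to\initialInventory$ as $\coeffMarketImpact\to\infty$. Therefore $\{N=0\}$ is a proper real-analytic subvariety of $\{\coeffMarketImpact>0,\coeffRiskAversion\geq0\}$, hence of dimension at most one (a $C^{\omega}$ curve away from finitely many singular points), so $M\cap\{\coeffMarketImpact>0\}$ is contained in the two-dimensional surface $\{N=0\}\times\R_+$; the remaining stratum $M\cap\{\coeffMarketImpact=0\}$ lies in the two-dimensional face $\{0\}\times\R_+\times\R_+$ of $\R_+^3$. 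This yields the assertion that $M$ is included in a two-dimensional submanifold. The main obstacle is the first half of the argument: making the variational first-order condition rigorous in the adapted setting — equivalently, extracting the closed form above from the forward–backward characterisation of \cite{BMO18opt} — and, in particular, verifying the cancellation that confines the $\coeffPenalisationOutstandingInventory$-dependence to the strictly positive denominator; once that closed form is in hand, the geometric conclusion is immediate.
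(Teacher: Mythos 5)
Your argument is correct in substance but follows a genuinely different route from the paper. The paper simply quotes the closed-form optimiser of \cite{BMO18opt} (equation \eqref{eq.BMO18solution}), writes the unbiasedness condition $\Expectation[\hat{\inventory}_\timeHorizon]=\liquidationTarget$ as $f(\coeffMarketImpact,\coeffRiskAversion,\coeffPenalisationOutstandingInventory)=0$ for an explicit $f$, and asserts without verification that $0$ is a regular value of $f$, so that $f\inverse(0)$ is a $2$-manifold by the preimage theorem. You instead take the expectation of the adapted first-order condition, which collapses the stochastic problem to a deterministic two-point boundary-value problem for $\bar{\inventory}_t=\Expectation[\hat{\inventory}_t]$, solve it by variation of constants, and exhibit $\Expectation[\hat{\inventory}_\timeHorizon]$ as a ratio in which the entire $\coeffPenalisationOutstandingInventory$-dependence sits in a strictly positive denominator. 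I checked the cancellation you rely on: with $\kappa=\coeffRiskAversion/\coeffMarketImpact$ the numerator works out to $\coeffMarketImpact\coeffRiskAversion\initialInventory+\tfrac12\int_0^\timeHorizon\sinh(\kappa s)\,d\bar{\boundedVariationPart}_s$ and the denominator to $\coeffMarketImpact\coeffRiskAversion\cosh(\kappa\timeHorizon)+\coeffPenalisationOutstandingInventory\squared\sinh(\kappa\timeHorizon)$, in agreement with your $N$; this is consistent with $\Phi(0,\timeHorizon)=\varphi(0)/\varphi(\timeHorizon)$ in the paper's formula. What your approach buys is that the geometry of $M$ becomes transparent (a cylinder over a curve in the $(\coeffMarketImpact,\coeffRiskAversion)$-plane), whereas the paper's regular-value claim is left unchecked.

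Two small points remain. First, the variational step you flag as the ``main obstacle'' is indeed the only technical content, but it is standard and coincides with the forward--backward characterisation in \cite{BMO18opt}; the paper sidesteps it by citation, and you could do the same. Second, your conclusion that $\{N=0\}$ is a proper real-analytic subvariety gives dimension at most one but not, as stated, a manifold (and a singular curve crossed with $\R_+$ need not embed in a $2$-manifold). This is repaired by a homogeneity observation that also supplies the verification the paper omits: $N-\initialInventory$ is homogeneous of degree $-2$ under $(\coeffMarketImpact,\coeffRiskAversion)\mapsto(\lambda\coeffMarketImpact,\lambda\coeffRiskAversion)$, so Euler's relation gives
\begin{equation*}
\coeffMarketImpact\,\partial_{\coeffMarketImpact}N+\coeffRiskAversion\,\partial_{\coeffRiskAversion}N=-2\big(N-\initialInventory\big)=2\initialInventory\neq 0\quad\text{on }\{N=0\},
\end{equation*}
whence $\{N=0\}$ is a genuine one-dimensional submanifold and $M=\{N=0\}\times\R_+$ is a two-dimensional one (and, equivalently, $0$ is a regular value of the paper's $f$ whenever $\initialInventory\neq 0$). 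With that addition your proof is complete and, in my view, more informative than the one in the paper.
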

\begin{proof}
Let $\ratioAversionOverImpact$ be the ratio $\ratioAversionOverImpact=\coeffRiskAversion/\coeffMarketImpact$ between the coefficient $\coeffRiskAversion$ of risk aversion and the coefficient $\coeffMarketImpact$ of linear temporary market impact. Let $\ratioTerminalPenalisationOverImpact$ be the ratio $\ratioTerminalPenalisationOverImpact=\coeffPenalisationOutstandingInventory/\coeffMarketImpact$ between the coefficient $\coeffPenalisationOutstandingInventory$ of penalisation of outstanding inventory at time $\timeHorizon$  and the coefficient $\coeffMarketImpact$ of linear temporary market impact. Define the functions $\varphi$ and $\Phi$ as follows:
\begin{equation}\label{eq.timeFunctionsInBMO18solution}
\begin{split}
\varphi(t) :=& \ratioAversionOverImpact \cosh (\ratioAversionOverImpact t) + \ratioTerminalPenalisationOverImpact\squared  \sinh(\ratioAversionOverImpact t), \qquad t\geq 0, \\
\Phi(s,t) : = & \frac{\varphi(\timeHorizon - t)}{\varphi(\timeHorizon - s)} , \qquad 0\leq s \leq t \leq \timeHorizon.
\end{split}
\end{equation}
Let $v(t)$ be the following conditional expectation at time $t$:
\begin{equation}
\label{eq.conditionalExpectationInBMO18solution}
\begin{split}
v(t) :=& \Expectation_t \Big[\frac{1}{2\coeffMarketImpact\squared}\int_{t}^{\timeHorizon} \Phi(t,r)d\fundamentalPrice_r \Big]
% \\= &
=
\Expectation_t \Big[\frac{1}{2\coeffMarketImpact\squared}\int_{t}^{\timeHorizon} \Phi(t,r)d\boundedVariationPart_r \Big],
\end{split}
\end{equation}
where $\fundamentalPrice$ is the price process with canonical decomposition $\fundamentalPrice_t  = \fundamentalPrice_0 + \boundedVariationPart_t + \martingale_t$. 
Then, \cite[Theorem 3.1]{BMO18opt} proves that the optimal inventory trajectory that solves the minimisation problem in equation \eqref{eq.BMO18optimisation} is 
\begin{equation}\label{eq.BMO18solution}
  \hat{\inventory}_t = \Phi(0,t)\initialInventory + \intzerot \Phi(s,t)v(s) ds.
\end{equation}
This minimiser produces unbiased liquidation errors only if 
\begin{equation*}
\frac{\varphi(0)}{\varphi(\timeHorizon)}\initialInventory 
= \frac{1}{2\coeffMarketImpact\squared} \int_{0}^{\timeHorizon} \Phi(t,\timeHorizon)\Expectation \Big[\int_{t}^{\timeHorizon} \Phi(t,r)d\boundedVariationPart_r\Big]dt.
\end{equation*}
Consider $\varphi$ and $\Phi$ as functions of $(\coeffMarketImpact, \coeffRiskAversion, \coeffPenalisationOutstandingInventory)$. Let $f: \R_{+}^{3}\rightarrow \R$ be defined as 
\begin{equation*}
 f(\coeffMarketImpact, \coeffRiskAversion, \coeffPenalisationOutstandingInventory)
 =
 \frac{\varphi(0)}{\varphi(\timeHorizon)}\initialInventory 
- \frac{1}{2\coeffMarketImpact\squared} \int_{0}^{\timeHorizon} \Phi(t,\timeHorizon)\Expectation \Big[\int_{t}^{\timeHorizon} \Phi(t,r)d\boundedVariationPart_r\Big]dt.
\end{equation*}
Then, $0$ is a regular value of $f$ and $M\subset f\inverse(0)$. 
\end{proof}
\begin{remark}
  In the spirit of Proposition \ref{prop.reductionToStaticSolution}, we remark that the solution $\hat{\inventory}$ to the minimisation problem in equation \eqref{eq.BMO18optimisation} is static if the drift of the price process is deterministic, in particular if the price process is a martingale. 
\end{remark}
\begin{remark}
  When the price process is a martingale, the optimal inventory trajectory of equation \eqref{eq.BMO18solution} is such that the terminal value is 
\begin{equation*}
\inventory_\timeHorizon = \frac{\ratioAversionOverImpact \initialInventory}{\ratioAversionOverImpact \cosh (\ratioAversionOverImpact \timeHorizon ) + \ratioTerminalPenalisationOverImpact\squared \sinh(\ratioAversionOverImpact \timeHorizon)}.
\end{equation*}
In this case then, the optimal inventory trajectory will always finish with a positive inventory left to liquidate after the initially fixed time horizon $\timeHorizon$ of the liquidation. 
\end{remark}

\begin{remark}
A liquidation strategy that is unbiased for any choice of $\coeffMarketImpact$ and $\coeffRiskAversion$ is obtained from equation \eqref{eq.BMO18solution} only in the limit as $\coeffPenalisationOutstandingInventory \uparrow \infty$, which yields the fuel-constrained solution 
\begin{equation*}
\begin{split}
\inventory_t =& R(0,t)\initialInventory - \frac{1}{2\coeffMarketImpact\squared} \int_{0}^{t} R(s,t)\fundamentalPrice_s ds \\ 
& - \frac{1}{2\coeffMarketImpact\squared} \int_{0}^{t} R(s,t) \int_{s}^{\timeHorizon} \Expectation_s \left[ \fundamentalPrice_r\right] \partial_r R(s,r) dr ds,
\end{split}
\end{equation*}
where $R(s,t)= \sinh(\ratioAversionOverImpact (\timeHorizon - t)) / \sinh(\ratioAversionOverImpact (\timeHorizon - s))$. This says that the inventory trajectory in equation \eqref{eq.BMO18solution} has unbiased liquidation error only in the degenerate case of deterministic terminal inventory. 
\end{remark}

\section{Good trade executions} \label{sec.goodTradeExecutions}

Let $\fundamentalPrice$ be a price process as defined in Definition \ref{def.priceprocess}. 
Let the class $\spaceExecutionRates$ of inventory rates and the class $\spaceInventoryTrajectories^{0,\initialInventory}$ of admissible inventory trajectories be as defined in equations \eqref{eq.definitionOfSpaceExecutionRates} and \eqref{eq.definitionOfSpaceInventoryTrajectories} respectively. 

% Recall that a price process is a stochastic process $\fundamentalPrice$ on some probability space $\probabilitySpace$ such that: 1. for all $0\leq t\leq T$ the second moment of $\fundamentalPrice_t$ is finite; 2. the maps $t\mapsto \Expectation\fundamentalPrice_t$ and $t\mapsto \Expectation\fundamentalPrice\squared_t$ are in $L^1[0,T]$; 3. there exists some $p\geq 1$ such that all the paths of $\fundamentalPrice$ are of finite $p$-variation. It is not necessarily assumed that $\fundamentalPrice$ is continuous. Let $\lbrace \sigmaAlgebra_t: \, 0\leq t\leq \timeHorizon \rbrace$ be the minimal $\Prob$-completed right-continuous filtration generated by $\fundamentalPrice$ and introduce the class $\spaceExecutionRates$ of inventory rates and the class $\spaceInventoryTrajectories^{0,\initialInventory}$ of admissible inventory trajectories as in equations \eqref{eq.definitionOfSpaceExecutionRates} and \eqref{eq.definitionOfSpaceInventoryTrajectories} respectively. 

From the class $\spaceInventoryTrajectories^{0,
\initialInventory}$ of admissible inventory trajectories we isolate the class of unbiased admissible inventory trajectories. 
An unbiased  admissible inventory trajectory is defined as an $(\sigmaAlgebra_t)_t$-adapted process with absolutely continuous paths, with deterministic initial value $\initialInventory$, expected terminal value $\liquidationTarget$, and such that its derivative is in $\spaceExecutionRates$.  More precisely, we define the space $\spaceUnbiasedInventoryTrajectoriesInitialConstraint$ of unbiased admissible inventory trajectories as 
\begin{equation*}
\spaceUnbiasedInventoryTrajectoriesInitialConstraint:= \left\lbrace
 (\inventory_t)_{t\in\timeWindow} \in \spaceInventoryTrajectories^{0,\initialInventory}: \quad 
 \Expectation [\inventory_\timeHorizon] = \liquidationTarget
\right\rbrace.
\end{equation*}
The constraint $\Expectation [\inventory_\timeHorizon] = \liquidationTarget$ relaxes the fuel constraint $ \inventory_\timeHorizon = \liquidationTarget$ used in the definition of $\spaceInventoryTrajectoriesFuel^{0,\initialInventory}$. Recall that, without loss of generality, the liquidation target $\liquidationTarget$ is set equal to $0$; nonetheless, we do not suppress it from our equations because this makes the formulae easier to interpret (see Remarks \ref{remark.turnGoodExecutionIntoStaticAndIntoAposteriori} and \ref{remark.varianceOfLiquidationError}). 

We consider the following minimisation problem over the class  of unbiased admissible inventory trajectories: 
\begin{equation}\label{eq.optimisationGoodExecution}
\inf \left\lbrace
\intZeroTimeHorizon F (t,\fundamentalPrice_t, \inventory_t,\inventoryRate_{t} )dt: \,  \inventory \in \spaceUnbiasedInventoryTrajectoriesInitialConstraint
\right\rbrace,
\end{equation}
where $F=F(t,x_1,x_2,x_3): (0,\timeHorizon)\times \R^{3} \rightarrow \R$ is a space-differentiable Caratheodory function.\footnote{See Definitions \ref{def.caratheodory} and \ref{def.space-diff_caratheodory}.} 
We use the symbol $\costFunctional$ to denote the map $\inventory  \mapsto \intZeroTimeHorizon F (t,\fundamentalPrice_t, \inventory_t,\inventoryRate_{t} )dt$, for $\inventory$ in $\spaceUnbiasedInventoryTrajectoriesInitialConstraint$. 

\begin{assumption}\label{assumption.Lagrangian}
 Let $F=F(t,x_1,x_2,x_3): (0,\timeHorizon)\times \R^{3} \rightarrow \R$ be the Lagrangian in the minimisation problem \eqref{eq.optimisationGoodExecution}. It is assumed that $F$ is a space-differentiable Caratheodory function, and that the function  $\Lagrangian=\Lagrangian(t,x_1,x_2,x_3):=F(t,x_1,x_2,x_3)-x_1 x_3$ is such that: 1. $\Lagrangian$ is in the Sobolev space $\sobolevSpace[1,4] ((0,\timeHorizon)\times K)$ for all compact subsets $K$ of $\R^{3}$; 2. for almost every $t$ in $(0,\timeHorizon)$, $(\partial_{x_2}\Lagrangian)\squared(t,0,x,0)+(\partial_{x_3}\Lagrangian)\squared(t,0,0,x)=0$ only if $x=0$; 3. the functions $t\mapsto 1/\Lagrangian(t,0,1,0)$ and $t\mapsto 1/\Lagrangian(t,0,0,1)$ are non-negative and square-integrable over  $(0,\timeHorizon)$.
\end{assumption}

Assumption \ref{assumption.Lagrangian} is used to associate the Lagrangian $F$ in equation \eqref{eq.optimisationGoodExecution} with a weight function on the Sobolev space $\sobolevSpaceOneTwo(0,\timeHorizon)$ and with a weight function on $\spaceUnbiasedInventoryTrajectoriesInitialConstraint$.

\begin{defi}\label{def.pwFnorm}
 Let $F=F(t,x_1,x_2,x_3): (0,\timeHorizon)\times \R^{3} \rightarrow \R$ be a space-differentiable Caratheodory function. Let $F$ satisfy Assumption \ref{assumption.Lagrangian}. Let $\eta$ be in $\sobolevSpaceOneTwo(0,\timeHorizon)$. Then, the pathwise $F$-weight $\lvert \eta\rvert_{F}$ of $\eta$ is defined by the equation 
 \begin{equation}\label{eq.definitionPathWiseFnorm}
  \lvert \eta \rvert\squared_{F}
  =
  \int_{0}^{\timeHorizon}
  \left(
  \ell_{2}\squared(t,\eta_t)
  + \ell_{3}\squared(t,\dot{\eta}_t)
  \right)
  dt,
 \end{equation}
where 
\begin{equation*}
 \begin{split}
  \ell_2(t,x)
  :=
%   &
  \frac{\partial_{x_2}\Lagrangian(t,0,x,0)}{2\sqrt{\Lagrangian(t,0,1,0)}}, 
%   \\
\qquad
\ell_3(t,x)
  :=
%   &
  \frac{\partial_{x_3}\Lagrangian(t,0,0,x)}{2\sqrt{\Lagrangian(t,0,0,1)}}.  
 \end{split}
\end{equation*}
\end{defi}

\begin{remark}\label{remark.degenerateAssumption}
 Let $F:(0,\timeHorizon)\times\R^{3}\rightarrow\R$ be a space-differentiable Caratheodory function, and define $\Lagrangian(t,x):=F(t,x)-x_1 x_3$. Assume that $\Lagrangian$ satisfies points 1. and 2. in Assumption \ref{assumption.Lagrangian}. Assume that $t\mapsto 1/\Lagrangian(t,0,0,1)$ is non-negative and square-integrable. If $\partial_{x_2}\Lagrangian(t,0,x,0)=0$ for all $t$ and all $x$, then we drop the requirement that $t\mapsto 1/\Lagrangian(t,0,1,0)$ is non-negative and square-integrable and we understand equation \eqref{eq.definitionPathWiseFnorm} with the convention that $\ell_2 \equiv 0$.  
\end{remark}

\begin{defi}\label{def.Fnorm}
 Let $F$ be as in Definition \ref{def.pwFnorm}. Let $\eta$ be in $\spaceUnbiasedInventoryTrajectoriesInitialConstraint$. Then, the $F$-weight $\norm[\eta]_{F}$ of $\eta$ is defined by the following equation
 \begin{equation}\label{eq.definitionFnorm}
  \norm[\eta]_{F}\squared
  =
  \Expectation \left[ \lvert \eta \rvert_{F}\squared\right],
 \end{equation}
 where $\lvert \eta \rvert_{F}$ is the random variable  $\omega\mapsto \lvert \eta(\omega) \rvert_{F}$, where $\lbrace \eta(\omega): \, \omega \in \Omega\rbrace$ are the paths of $\eta$ and, for every $\omega$ in $\Omega$, $\lvert \eta(\omega) \rvert_{F}$ is the pathwise $F$-weight of $t\mapsto \eta_t (\omega)$. 
\end{defi}

Every square-integrable random variable $\xi$ with $\Expectation[\xi] = \liquidationTarget$ identifies a subclass of trajectories in $\spaceUnbiasedInventoryTrajectoriesInitialConstraint$ with specified terminal (random) variable. More precisely, for every $\xi$ in $\Ltwo\probabilitySpace$ with  $\Expectation[\xi] = \liquidationTarget$  we define 
\begin{equation*}
\spaceUnbiasedInventoryTrajectoriesInitialConstraint(\xi) := \left\lbrace
\inventory \in \spaceUnbiasedInventoryTrajectoriesInitialConstraint : \, \Prob(\inventory_\timeHorizon=\xi)=1 
\right\rbrace.
\end{equation*}
The class $\spaceUnbiasedInventoryTrajectoriesInitialConstraint(\xi)$ with $\xi \equiv \liquidationTarget$ is the class of inventory trajectories that commit no liquidation error, namely $\spaceUnbiasedInventoryTrajectoriesInitialConstraint(\liquidationTarget) = \spaceInventoryTrajectoriesFuel^{0,\initialInventory}$. 

\begin{defi}[``Optimal execution of terminal variable $\xi$'']\label{defi.optimalExecutionOfSpecifiedTerminalVariable}
Let $\xi$ be in $\Ltwo \probabilitySpace$ with $\Expectation[\xi] = \liquidationTarget$. We say that $\inventory$ in $\spaceUnbiasedInventoryTrajectoriesInitialConstraint$ is the optimal execution of terminal variable $\xi$ if $\inventory$ minimises $\eta \mapsto \costFunctional(\eta)$ over $\spaceUnbiasedInventoryTrajectoriesInitialConstraint(\xi)$, namely if $ \Prob(\inventory_\timeHorizon =\xi)=1 $ and for all $\eta \in \spaceUnbiasedInventoryTrajectoriesInitialConstraint(\xi)$ it holds
\begin{equation*}
\begin{split}
\intZeroTimeHorizon F(t,\fundamentalPrice_t,\eta_t,&\dotEta_t )dt 
% \\ \geq &
\geq
\intZeroTimeHorizon F(t,\fundamentalPrice_t, \inventory_t ,\inventoryRate_t )dt
\end{split}
\end{equation*}
with probability one.
\end{defi}
For every $\inventory \in \spaceUnbiasedInventoryTrajectoriesInitialConstraint$ we trivially have that $\inventory \in \spaceUnbiasedInventoryTrajectoriesInitialConstraint(\inventory_\timeHorizon)$. We can define a ``tubular'' neighbourhood of $\spaceUnbiasedInventoryTrajectoriesInitialConstraint(\inventory_\timeHorizon) $ by looking at those trajectories $\eta$ in $\spaceUnbiasedInventoryTrajectoriesInitialConstraint$ such that the $\Ltwo(\Prob)$-norm of the difference $\eta_\timeHorizon - \inventory_\timeHorizon$ between terminal values is controlled by the $F$-weight of the difference  $\eta - \inventory$. More precisely, for $\inventory$ in $\spaceUnbiasedInventoryTrajectoriesInitialConstraint$ and $C\geq 0$ we set 
\begin{equation}\label{eq.staticTubularNeighbourhoodIC}
\spaceUnbiasedInventoryTrajectoriesInitialConstraint(\inventory,C):=
\left\lbrace
\eta \in \spaceUnbiasedInventoryTrajectoriesInitialConstraint: \, 
\norm[\eta_\timeHorizon-\inventory_\timeHorizon ]_{\Ltwo(\Prob)} \leq 
C \Fnorm{\eta -\inventory}\squared 
\right\rbrace.
\end{equation}
This captures the idea of $\eta_\timeHorizon$ not being too far from the terminal value $\inventory_\timeHorizon$ given that the trajectory $\eta$  has kept close to $\inventory$ in the time window $0\leq t<\timeHorizon$.

Also, we define a pathwise analogous to the tubular neighbourhood of equation \eqref{eq.staticTubularNeighbourhoodIC}. Given a non-negative $\xi$ in $\Ltwo(\Prob)$ we define
\begin{equation}\label{eq.pathwiseTuburalNeighbourhoodIC}
\spaceUnbiasedInventoryTrajectoriesInitialConstraint_{\text{pw}}(\inventory,\xi):=
\Big\lbrace
\eta \in \spaceUnbiasedInventoryTrajectoriesInitialConstraint : \, 
\abs{\eta_\timeHorizon-\inventory_\timeHorizon} 
\leq \xi \lvert \eta-\inventory \rvert_{F}\squared
\Big\rbrace,
\end{equation}
where $\abs{\eta_\timeHorizon-\inventory_\timeHorizon}$ is the absolute value of the difference between the values of $\eta$ and of $\inventory$ at time $\timeHorizon$, and $\lvert \eta-\inventory \rvert_{F}$ is the pathwise $F$-weight of the difference $\eta - \inventory$.
\begin{remark}\label{remark.tubular_neigh_depend_on_F}
 Notice that both $\spaceUnbiasedInventoryTrajectoriesInitialConstraint(\inventory,C)$ and $\spaceUnbiasedInventoryTrajectoriesInitialConstraint_{\text{pw}}(\inventory,\xi)$ depend on the Lagrangian $F$. Nonetheless, we omit this dependence from the notation, and the symbols for these tubular neighbourhoods do not carry reference to $F$. 
\end{remark}

\begin{defi}[``Good trade execution'']\label{defi.goodTradeExecution}
We say that $\inventory$ in $\spaceUnbiasedInventoryTrajectoriesInitialConstraint$ is a $(C,\xi)$-good trade execution for the minimisation in equation \eqref{eq.optimisationGoodExecution} if there exist $\xi \in \Ltwo_{+} (\Prob)$ and $C\geq 0$ such that 
\begin{enumerate}
\item for all $\eta$ in $\spaceUnbiasedInventoryTrajectoriesInitialConstraint(\inventory,C)$ it holds
\begin{equation*}
\begin{split}
%	 		\Expectation[\costFunctional(\eta)] = 
		 \Expectation &  \left[	\int_{0}^{T}   F(t,\fundamentalPrice_t, \eta_t , \dotEta_t )dt \right]
% 		 \\& \qquad \qquad \geq 
		 \geq
%	 	\Expectation [\costFunctional(\inventory)] = 
			  \Expectation\left[ \int_{0}^{T}  F(t,\fundamentalPrice_t, \inventory_t,\inventoryRate_t)dt\right];
\end{split}
\end{equation*}
\item for all $\eta$ in $\spaceUnbiasedInventoryTrajectoriesInitialConstraint_{\text{pw}}(\inventory,\xi)$ it holds
\begin{equation*}
\begin{split}
%	 		\costFunctional(\eta)=
\int_{0}^{T}F(t,&\fundamentalPrice_t,\eta_t ,\dotEta_t )dt 
% \\&\geq
\geq
%	 		 \costFunctional(\inventory)=
 		\int_{0}^{T} F(t,\fundamentalPrice_t, \inventory_t,\inventoryRate_t)dt,
\end{split}
\end{equation*}
with probability one.
\end{enumerate}
\end{defi}
When we emphasise the path $t\mapsto\inventory_t$ of a good trade execution, we use  interchangeably the term \emph{good inventory trajectory}.

\begin{remark}
A good trade execution is in particular an optimal execution of its own terminal variable: if $\inventory$ is as in Definition \ref{defi.goodTradeExecution}, then $\inventory$ is an optimal execution of terminal variable $\inventory_\timeHorizon$ as defined in Definition \ref{defi.optimalExecutionOfSpecifiedTerminalVariable}. In other words, a $(C,\xi)$-good trade execution is in particular a $(0,0)$-good trade execution.
\end{remark}

\subsection{Quadratic inventory cost}\label{sec.IC}

In this section (Section \ref{sec.IC}), we consider the following Lagrangian $F$: 
\begin{equation}\label{eq.LagrangianIC}
F(t,\fundamentalPrice,q,r)
:=
r\fundamentalPrice + \coeffMarketImpact\squared r\squared +\coeffRiskAversion\squared q\squared,
\end{equation}
where $\coeffMarketImpact>0$ is a coefficient of market impact, and $\coeffRiskAversion\geq 0$ is a coefficient of risk aversion. For future reference, we set $\ratioAversionOverImpact:=\coeffRiskAversion/\coeffMarketImpact$. 
Notice that in fact $F$ does not depend on $t$.  We study the problem in \eqref{eq.optimisationGoodExecution} with $F$ as in equation \eqref{eq.LagrangianIC}.

\begin{remark}
The Lagrangian $F$ in equation \eqref{eq.LagrangianIC} represents risk-adjusted revenues from trade where the market impact is temporary and linear, and the risk criterion is quadratic inventory cost.  This aligns to common modelling choices such as those in \cite{LN19inc} and in \cite{BMO18opt}.
However, our relaxation of the fuel constraint entails that the inventory is sought in $\spaceUnbiasedInventoryTrajectoriesInitialConstraint$: we do not modify the objective function as is instead common in the studies of optimal dynamic liquidation strategies, where the terms of terminal asset position $\inventory_\timeHorizon \fundamentalPrice_\timeHorizon$ and of terminal inventory cost $\coeffPenalisationOutstandingInventory \squared (\inventory_\timeHorizon - \liquidationTarget)\squared$ are usually added to the function that describes revenues from trade (see beginning of Section \ref{sec.errorsOfLiquidation}). 
\end{remark}

\begin{remark}
 The Lagrangian $F$ in equation \eqref{eq.LagrangianIC} is the same as the Lagrangian in equation \eqref{eq.classicalLagrangianWithLinearTemporaryImpact}. However, the optimisation in equation \eqref{eq.optimisationGoodExecution} is pathwise and hence it  differs from the classical optimisation of expected risk-adjusted revenues used in equation \eqref{eq.fuelConstrainedStochOptProblem}. For this reason, the martingale cancellation exploited in the proof of Proposition \ref{prop.reductionToStaticSolution} is not applicable to the  present case: we will be able to produce a non-static solution also in the case where the price process has deterministic drift (in particular, where the price process is a martingale). 
\end{remark}

\begin{lemma}
Let $F$ be as in equation \eqref{eq.LagrangianIC}. Then, $F$ satisfies Assumption \ref{assumption.Lagrangian}. Moreover, the pathwise $F$-weight is  a seminorm on $\sobolevSpaceOneTwo(0,\timeHorizon)$, and the $F$-weight is a seminorm on $\spaceUnbiasedInventoryTrajectoriesInitialConstraint$. If $\coeffRiskAversion>0$, then these seminorms are norms. 
\end{lemma}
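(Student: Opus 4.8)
The plan is to reduce everything to direct computation: for the Lagrangian in \eqref{eq.LagrangianIC} the associated function is $\Lagrangian(t,x_1,x_2,x_3)=F(t,x_1,x_2,x_3)-x_1x_3=\coeffMarketImpact\squared x_3\squared+\coeffRiskAversion\squared x_2\squared$, a $t$-independent quadratic polynomial in $(x_2,x_3)$, and everything in Assumption \ref{assumption.Lagrangian} and in Definitions \ref{def.pwFnorm}--\ref{def.Fnorm} can be read off from this closed form.

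First I would verify the three points of Assumption \ref{assumption.Lagrangian}. Point~1 is immediate, a polynomial being smooth and hence in $\sobolevSpace[1,4]((0,\timeHorizon)\times K)$ for every compact $K\subset\R^{3}$. For the remaining points, compute $\partial_{x_2}\Lagrangian(t,0,x,0)=2\coeffRiskAversion\squared x$ and $\partial_{x_3}\Lagrangian(t,0,0,x)=2\coeffMarketImpact\squared x$, so that $(\partial_{x_2}\Lagrangian)\squared(t,0,x,0)+(\partial_{x_3}\Lagrangian)\squared(t,0,0,x)=4(\coeffMarketImpact^{4}+\coeffRiskAversion^{4})x\squared$, which vanishes only at $x=0$ since $\coeffMarketImpact>0$; this is point~2. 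For point~3, $1/\Lagrangian(t,0,1,0)=1/\coeffRiskAversion\squared$ and $1/\Lagrangian(t,0,0,1)=1/\coeffMarketImpact\squared$ are positive constants, hence non-negative and square-integrable on the finite interval $(0,\timeHorizon)$, \emph{provided} $\coeffRiskAversion>0$. If $\coeffRiskAversion=0$ one has $\partial_{x_2}\Lagrangian(t,0,x,0)\equiv0$, so Remark \ref{remark.degenerateAssumption} applies: the requirement on $1/\Lagrangian(t,0,1,0)$ is dropped and $\ell_2$ is understood to be identically zero, while the other hypotheses hold as above.

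Next I would make the weights of Definition \ref{def.pwFnorm} explicit. From the derivatives above, $\ell_2(t,x)=\coeffRiskAversion x$ (or $\ell_2\equiv0$ when $\coeffRiskAversion=0$) and $\ell_3(t,x)=\coeffMarketImpact x$, so \eqref{eq.definitionPathWiseFnorm} becomes $\lvert\eta\rvert_F\squared=\intZeroTimeHorizon\big(\coeffRiskAversion\squared\eta_t\squared+\coeffMarketImpact\squared\dotEta_t\squared\big)\,dt$ for $\eta\in\sobolevSpaceOneTwo(0,\timeHorizon)$. The key observation is that the map $\eta\mapsto(\coeffRiskAversion\,\eta,\coeffMarketImpact\,\dotEta)$ is linear from $\sobolevSpaceOneTwo(0,\timeHorizon)$ into $\Ltwo(0,\timeHorizon)\times\Ltwo(0,\timeHorizon)$ and $\lvert\eta\rvert_F$ is precisely the $\Ltwo\times\Ltwo$ norm of its image; absolute homogeneity of $\lvert\cdot\rvert_F$ is then trivial and subadditivity is Minkowski's inequality, so $\lvert\cdot\rvert_F$ is a seminorm. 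For the $F$-weight of Definition \ref{def.Fnorm}, $\norm[\eta]_F=\big(\Expectation\big[\lvert\eta\rvert_F\squared\big]\big)^{1/2}$ is the $\Ltwo(\Prob)$ norm of the non-negative random variable $\omega\mapsto\lvert\eta(\omega)\rvert_F$; since $\lvert\cdot\rvert_F$ is a pathwise seminorm one has $\lvert\eta+\zeta\rvert_F\le\lvert\eta\rvert_F+\lvert\zeta\rvert_F$ pointwise in $\omega$, and a second application of Minkowski's inequality, now in $\Ltwo(\Prob)$, yields subadditivity of $\norm[\cdot]_F$; homogeneity is again trivial. The seminorm axioms are checked on the ambient vector space of adapted square-integrable $\sobolevSpaceOneTwo$-valued processes, of which $\spaceUnbiasedInventoryTrajectoriesInitialConstraint$ is a subset and whose differences are what enters Definition \ref{defi.goodTradeExecution}.

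Finally, when $\coeffRiskAversion>0$ these seminorms are norms: $\lvert\eta\rvert_F=0$ forces $\intZeroTimeHorizon\coeffRiskAversion\squared\eta_t\squared\,dt=0$, hence $\eta_t=0$ for a.e.\ $t$, and continuity of the $\sobolevSpaceOneTwo$-representative gives $\eta\equiv0$; likewise $\norm[\eta]_F=0$ forces $\lvert\eta(\omega)\rvert_F=0$ for $\Prob$-a.e.\ $\omega$, hence $\eta(\omega)\equiv0$ for $\Prob$-a.e.\ $\omega$. I do not expect any substantial obstacle here: the argument is polynomial algebra plus two uses of Minkowski's inequality. The only points needing care are the degenerate case $\coeffRiskAversion=0$, where the weight controls only $\dotEta$ so that every constant path has zero weight and one genuinely gets a seminorm rather than a norm, and the bookkeeping that $\spaceUnbiasedInventoryTrajectoriesInitialConstraint$ carries an affine rather than a linear structure, so that the (semi)norm statement is read on the associated space of differences.
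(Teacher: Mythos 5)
Your proof is correct and follows essentially the same route as the paper: the paper checks Assumption \ref{assumption.Lagrangian} by the same direct computation (deferring the case $\coeffRiskAversion=0$ to Remark \ref{remark.degenerateAssumption}) and obtains the seminorm property from Lemma \ref{lemma.SobolevSeminorm} in Appendix \ref{sec.eulerLagrangeInPresenceOfPricePath}, whose proof is exactly the two applications of Minkowski's inequality you write out inline. Your explicit identification of $\ell_2(t,x)=\coeffRiskAversion x$ and $\ell_3(t,x)=\coeffMarketImpact x$, and your remarks on the degenerate case and on reading the seminorm on the space of differences, are all consistent with the paper's treatment.
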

\begin{proof}
As for the requirements in Assumption \ref{assumption.Lagrangian}, we only notice that the case $\coeffRiskAversion=0$ is covered in Remark \ref{remark.degenerateAssumption}.
As for the second part of the claim, we apply Lemma \ref{lemma.SobolevSeminorm} from Appendix \ref{sec.eulerLagrangeInPresenceOfPricePath} to see that the pathwise $F$-weight is a seminorm.  The fact that the $F$-weight is a seminorm follows from the fact that the pathwise $F$-weight is a seminorm. Finally, $\coeffRiskAversion>0$ guarantees that $\abs{\eta-\tilde{\eta}}_{F} >0$ if $\eta\neq\tilde{\eta}$.
\end{proof}
We denote the pathwise seminorm induced by the pathwise $F$-weight by $\lvert\cdot\rvert_{\coeffMarketImpact, \coeffRiskAversion}$. More precisely, we set 
\begin{equation}\label{eq.pathwiseSobolevNormIC}
\lvert \eta \rvert_{\coeffMarketImpact,\coeffRiskAversion}\squared := \intZeroTimeHorizon \left( \coeffRiskAversion\squared \eta_t \squared + \coeffMarketImpact\squared \dotEta_t\squared \right)dt,
\end{equation}
for $\eta$ in $\sobolevSpaceOneTwo(0,\timeHorizon)$. 
Moreover, we denote the seminorm on $\spaceUnbiasedInventoryTrajectoriesInitialConstraint$ induced by the $F$-weight  by $\norm[\cdot]_{\coeffMarketImpact, \coeffRiskAversion}$.

\subsubsection{Closed-form formula}\label{sec.closedformIC}

\begin{prop} \label{prop.goodExecutionIC}
Let $F$ be as in equation \eqref{eq.LagrangianIC}. Let $\ratioAversionOverImpact$ be the ratio of the coefficients of risk aversion and of market impact, namely $\ratioAversionOverImpact := \coeffRiskAversion/\coeffMarketImpact$. 
Let $\alpha$ be the function $\alpha(t)=1-\sinh(\ratioAversionOverImpact(T-t))/\sinh(\ratioAversionOverImpact T)$,
and let $K$ be the constant 
\[
K= \frac{1}{2\coeffMarketImpact\squared \sinh(\ratioAversionOverImpact T)}
\int_{0}^{T} \cosh(\ratioAversionOverImpact (T-u))\Expectation\big[\fundamentalPrice_u\big]du. 
\]
For $0\leq t \leq \timeHorizon$, define 
\begin{equation}\label{eq.goodExecutionIC}
\begin{split}
\inventory_t:= & 
\big(1-\alpha(t)\big) \initialInventory + \alpha(t)\liquidationTarget \\
&- \frac{1}{2\coeffMarketImpact\squared}
\int_{0}^{t} \cosh\big(\ratioAversionOverImpact (t-u)\big)\fundamentalPrice_udu \\
&+ K \sinh(\ratioAversionOverImpact t).
\end{split}
\end{equation}
Then, $(\inventory_t)_{t\in\timeWindow}$ is a $(C,\xi)$-good trade execution. The constant $C$ is explicitly given by the formula
\[
C\inverse = \ratioAversionOverImpact
\int_{0}^{T} \sinh\big(\ratioAversionOverImpact (T-u)\big)\Variance^{\half}\big(\fundamentalPrice_u\big) du;
\]
the random variable $\xi$ is explicitly given by the formula
\begin{equation*}
\begin{split}
\xi\inverse = 
\Big\lvert & 
2\coeffMarketImpact\coeffRiskAversion \frac{ \liquidationTarget - \initialInventory }{\sinh(\ratioAversionOverImpact \timeHorizon)} 
% \\& 
-\ratioAversionOverImpact \intZeroTimeHorizon \sinh (\ratioAversionOverImpact (\timeHorizon - t))\fundamentalPrice_t dt \\
 & +\ratioAversionOverImpact \frac{\cosh(\ratioAversionOverImpact \timeHorizon)}{\sinh(\ratioAversionOverImpact\timeHorizon)}
 \intZeroTimeHorizon \cosh(\ratioAversionOverImpact (\timeHorizon - t))\Expectation[\fundamentalPrice_t] dt 
 \Big\rvert. 
\end{split}
\end{equation*}
\end{prop}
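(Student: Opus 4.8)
The plan is to verify the two defining inequalities of Definition~\ref{defi.goodTradeExecution} directly for the trajectory $\inventory$ of \eqref{eq.goodExecutionIC}, with exactly the stated $C$ and $\xi$. First one checks that $\inventory$ is an admissible unbiased inventory trajectory: from \eqref{eq.goodExecutionIC} and $\alpha(0)=0$ we get $\inventory_0=\initialInventory$; the paths of $\inventory$ are $C^1$ and $\inventoryRate$ is square-integrable and progressively measurable thanks to the moment and finite $p$-variation hypotheses of Definition~\ref{def.priceprocess}, so $\inventory\in\spaceInventoryTrajectories^{0,\initialInventory}$; finally $\Expectation[\inventory_\timeHorizon]=\liquidationTarget$ because at $t=\timeHorizon$ one has $\alpha(\timeHorizon)=1$ and the two convolution integrals against $\fundamentalPrice_u$ and against $\Expectation[\fundamentalPrice_u]$ cancel in expectation --- this is precisely the role of the constant $K$.

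The heart of the argument is an exact variational identity. Differentiating \eqref{eq.goodExecutionIC} and setting $G_t:=\fundamentalPrice_t+2\coeffMarketImpact\squared\inventoryRate_t$, one finds that the rough term $\fundamentalPrice_t$ cancels, leaving
\[
G_t = 2\coeffMarketImpact\squared\alpha'(t)\bigl(\liquidationTarget-\initialInventory\bigr)-\ratioAversionOverImpact\int_0^t\sinh\bigl(\ratioAversionOverImpact(t-u)\bigr)\fundamentalPrice_u\,du+2\coeffMarketImpact\squared K\ratioAversionOverImpact\cosh(\ratioAversionOverImpact t),
\]
which, unlike $\fundamentalPrice$, is absolutely continuous in $t$. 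A direct differentiation, using $\alpha''=-\ratioAversionOverImpact\squared(1-\alpha)$ and $\liquidationTarget=0$, yields the integrated Euler--Lagrange identity $\dot G_t=2\coeffRiskAversion\squared\inventory_t$ for almost every $t$. Take now any $\eta$ in $\spaceUnbiasedInventoryTrajectoriesInitialConstraint$ and put $h:=\eta-\inventory$, so $h_0=0$ and $\Expectation[h_\timeHorizon]=0$. Since $F$ is quadratic in $(\inventory,\inventoryRate)$ and affine in $\fundamentalPrice$, expanding the cost is exact:
\[
\costFunctional(\eta)-\costFunctional(\inventory)=\intZeroTimeHorizon\bigl(\dot h_t\,G_t+2\coeffRiskAversion\squared\inventory_t h_t\bigr)dt+\intZeroTimeHorizon\bigl(\coeffMarketImpact\squared\dot h_t\squared+\coeffRiskAversion\squared h_t\squared\bigr)dt,
\]
where the second integral equals $\lvert\eta-\inventory\rvert_F\squared$. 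Integrating the first integral by parts (legitimate because $h$ and $G$ are both absolutely continuous, and $\intZeroTimeHorizon\dot h_t\fundamentalPrice_t\,dt$ makes sense as $\dot h\in\Ltwo$ while the $p$-variation paths of $\fundamentalPrice$ are bounded), using $h_0=0$ and $\dot G_t=2\coeffRiskAversion\squared\inventory_t$, the two non-quadratic integrals cancel and we are left with the exact identity
\[
\costFunctional(\eta)-\costFunctional(\inventory)=h_\timeHorizon\,G_\timeHorizon+\lvert\eta-\inventory\rvert_F\squared\qquad\text{with probability one.}
\]

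Evaluating $G_t$ at $t=\timeHorizon$ (using $\alpha'(\timeHorizon)=\ratioAversionOverImpact/\sinh(\ratioAversionOverImpact\timeHorizon)$ and the definition of $K$) reproduces exactly the quantity inside the absolute value in the stated formula for $\xi\inverse$; hence $\xi\inverse=\lvert G_\timeHorizon\rvert$ and $\xi=1/\lvert G_\timeHorizon\rvert\ge0$. Item~2 of Definition~\ref{defi.goodTradeExecution} is then immediate: for $\eta$ in $\spaceUnbiasedInventoryTrajectoriesInitialConstraint_{\text{pw}}(\inventory,\xi)$ we have $\abs{h_\timeHorizon}\le\xi\lvert\eta-\inventory\rvert_F\squared=\lvert\eta-\inventory\rvert_F\squared/\lvert G_\timeHorizon\rvert$, so $h_\timeHorizon G_\timeHorizon\ge-\lvert\eta-\inventory\rvert_F\squared$ and the identity gives $\costFunctional(\eta)\ge\costFunctional(\inventory)$ almost surely. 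For item~1, take expectations: $\Expectation[\costFunctional(\eta)-\costFunctional(\inventory)]=\Expectation[h_\timeHorizon G_\timeHorizon]+\norm[\eta-\inventory]_F\squared$. Only the convolution term of $G_\timeHorizon$ is random, and since $\Expectation[h_\timeHorizon]=0$ we get $\Expectation[h_\timeHorizon G_\timeHorizon]=\Expectation[h_\timeHorizon(G_\timeHorizon-\Expectation[G_\timeHorizon])]$, bounded in absolute value by $\norm[h_\timeHorizon]_{\Ltwo(\Prob)}\,\Variance^{\half}(G_\timeHorizon)$ via Cauchy--Schwarz. Minkowski's integral inequality applied to $\ratioAversionOverImpact\intZeroTimeHorizon\sinh(\ratioAversionOverImpact(\timeHorizon-u))\fundamentalPrice_u\,du$ gives $\Variance^{\half}(G_\timeHorizon)\le\ratioAversionOverImpact\intZeroTimeHorizon\sinh(\ratioAversionOverImpact(\timeHorizon-u))\Variance^{\half}(\fundamentalPrice_u)\,du=C\inverse$. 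Hence, for $\eta$ in $\spaceUnbiasedInventoryTrajectoriesInitialConstraint(\inventory,C)$, the defining bound $\norm[\eta_\timeHorizon-\inventory_\timeHorizon]_{\Ltwo(\Prob)}\le C\Fnorm{\eta-\inventory}\squared$ yields $\abs{\Expectation[h_\timeHorizon G_\timeHorizon]}\le C\inverse\cdot C\norm[\eta-\inventory]_F\squared=\norm[\eta-\inventory]_F\squared$, whence $\Expectation[\costFunctional(\eta)-\costFunctional(\inventory)]\ge0$.

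The main obstacle is the step in the second paragraph: spotting the precise combination $G_t=\fundamentalPrice_t+2\coeffMarketImpact\squared\inventoryRate_t$ in which the rough driver $\fundamentalPrice$ drops out, so that the classical integration by parts of the calculus of variations survives even though $\fundamentalPrice$ has only finite $p$-variation; this is what promotes the formal Euler--Lagrange equation (with its ill-defined $\dot{\fundamentalPrice}$) to a rigorous identity, and it is the content of the apparatus in Appendix~\ref{sec.eulerLagrangeInPresenceOfPricePath} specialised to \eqref{eq.LagrangianIC}. Secondary bookkeeping is needed to confirm $\inventoryRate\in\spaceExecutionRates$ and that the produced weights have the regularity demanded by Definition~\ref{defi.goodTradeExecution} (non-negativity is clear; the square-integrability of $\xi=1/\lvert G_\timeHorizon\rvert$ is the point where a mild nondegeneracy of the price fluctuations is used).
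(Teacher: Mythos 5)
Your proposal is correct and follows essentially the same route as the paper's own proof: you introduce the same absolutely continuous combination ($G_t$ is the paper's $f_t=2\coeffMarketImpact\squared\inventoryRate_t+\fundamentalPrice_t$ with $\dot f_t=2\coeffRiskAversion\squared\inventory_t$), derive the same exact identity $\costFunctional(\eta)-\costFunctional(\inventory)=e_\timeHorizon f_\timeHorizon+\lvert e\rvert_{\coeffMarketImpact,\coeffRiskAversion}\squared$ by integration by parts, and obtain $\xi$ and $C$ from the pathwise bound and the Cauchy--Schwarz/Minkowski estimate of $\Variance^{\half}(f_\timeHorizon)$ exactly as in the paper. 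Your write-up is in fact slightly more explicit (verifying that $G_\timeHorizon$ reproduces the stated $\xi\inverse$ and that $\dot G_t=2\coeffRiskAversion\squared\inventory_t$, which also silently corrects a coefficient typo in the paper's displayed Euler--Lagrange identity).
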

\begin{remark}\label{remark.turnGoodExecutionIntoStaticAndIntoAposteriori}
The structure of the solution $\inventory$ in equation \eqref{eq.goodExecutionIC} is threefold: a time-dependent convex combination between initial inventory $\initialInventory$ and liquidation target $\liquidationTarget$ appears on the first line; a dynamic response to the actual price trajectory appears on the second line; an adjustment for the terminal constraint $\Expectation[\inventory_\timeHorizon]  = \liquidationTarget$ appears on the third line. 

If in the integral appearing on the second line of equation  \eqref{eq.goodExecutionIC} we replace the fundamental price $\fundamentalPrice_u$ with its expected value $\Expectation[\fundamentalPrice_u]$, then  the inventory trajectory $\inventory$ is turned into the optimal static one, i.e. into the minimiser of $\Expectation [\costFunctional(\eta)]$ over all $\eta$ in $\spaceInventoryTrajectoriesStatic^{0,\initialInventory}$. See Corollary \ref{corol.staticOptimalSolIC} in Appendix \ref{sec.eulerLagrangeInPresenceOfPricePath}. 

Instead, if in the definition of the constant  $K$ appearing on  the third line of equation  \eqref{eq.goodExecutionIC} we replace $\Expectation[\fundamentalPrice_u]$ with $\fundamentalPrice_u$, then the inventory trajectory $\inventory$ is turned into the optimal a-posteriori one, i.e. into the minimiser of $\costFunctional(\eta)$ over all $\eta$ in $\spaceInventoryTrajectoriesPathwise^{0,\initialInventory}$. This is an immediate application of Proposition \ref{prop.weakFormEulerLagrange} in Appendix \ref{sec.eulerLagrangeInPresenceOfPricePath}.
\end{remark}
\begin{proof}[Proof of Proposition \ref{prop.goodExecutionIC}]
Let $\inventory$ be as in equation  \eqref{eq.goodExecutionIC}. The fact that $\inventory$ is in $\spaceUnbiasedInventoryTrajectoriesInitialConstraint$ is apparent. Let $f_t:= 2\coeffMarketImpact\squared \inventoryRate_{t} + \fundamentalPrice_t$ and notice that $f$ is absolutely continuous with derivative 
\begin{equation}\label{eq.eulerLagrangeWithCancellationIC}
\dot{f}_t = 2 \coeffMarketImpact\squared \inventory_t .
\end{equation}
Let $\eta$ be in $\spaceUnbiasedInventoryTrajectoriesInitialConstraint$. We write $e$ for the difference $e:=\eta-\inventory$, and we observe that $e_0=0$ and $\Expectation e_\timeHorizon = 0$.  Then, we have 
\begin{equation*}
\begin{split}
\costFunctional(\eta) - \costFunctional(\inventory) = &
\intZeroTimeHorizon \Big[ f_t \dot{e}_t + 2\coeffMarketImpact\squared \inventory_t \, e_t\Big]dt
% \\
% &
+\int_{0}^{\timeHorizon}\big(\coeffMarketImpact\squared \dot{e}\squared + \coeffRiskAversion\squared e\squared \big)dt.
\end{split}
\end{equation*}
The second integral on the right hand side is $ \lvert e \rvert_{\coeffMarketImpact,\coeffRiskAversion}\squared$.  
Using integration-by-parts we see that in fact $\costFunctional(\eta) - \costFunctional(\inventory) $ $= f_\timeHorizon e_\timeHorizon $ $ + \lvert e \rvert_{\coeffMarketImpact,\coeffRiskAversion}\squared$, because of equation \eqref{eq.eulerLagrangeWithCancellationIC}.  
Therefore, the difference $\costFunctional(\eta) - \costFunctional(\inventory) $ is non-negative if 
\begin{equation*}
\abs{e_\timeHorizon}
\leq \frac{ \lvert e \rvert_{\coeffMarketImpact,\coeffRiskAversion}\squared}{\abs{f_\timeHorizon}}.
\end{equation*}
This gives $\xi = 1/\lvert f_\timeHorizon\rvert$.

Secondly, consider the   expected difference $\Expectation \costFunctional (\eta) - \Expectation \costFunctional (\inventory)$ $=\Expectation[f_\timeHorizon e_\timeHorizon] + \normInventoryTrjectories{e}\squared$. We can estimate
\begin{equation*}
\Expectation[f_\timeHorizon e_\timeHorizon] \leq \Variance^{\half}(f_\timeHorizon)\norm[e_\timeHorizon]_{\Ltwo(\Prob)},
\end{equation*}
because $\Expectation e_\timeHorizon = 0$. Moreover, 
\begin{equation*}
\Variance^{\half}(f_\timeHorizon)
\leq \ratioAversionOverImpact \intZeroTimeHorizon \sinh(\ratioAversionOverImpact(\timeHorizon - t))\Variance^{\half}(\fundamentalPrice_t) dt. 
\end{equation*} 
Therefore, the   expected difference $\Expectation \costFunctional (\eta) - \Expectation \costFunctional (\inventory)$ is non-negative if 
\begin{equation*}
\norm[e_\timeHorizon]_{\Ltwo(\Prob)} \leq 
\frac{ \normInventoryTrjectories{e}\squared }{\ratioAversionOverImpact \intZeroTimeHorizon \sinh(\ratioAversionOverImpact(\timeHorizon - t))\Variance^{\half}(\fundamentalPrice_t) dt}.
\end{equation*}
This gives the constant $C$ in the statement and concludes the proof.
\end{proof}

\begin{remark}
We remark that the good inventory trajectory of equation \eqref{eq.goodExecutionIC} is written without assuming a particular SDE dynamics for the price evolution. In particular, Proposition \ref{prop.goodExecutionIC} applies to the case in which the price process is modelled as a fractional Brownian motion, or as the sum of a possibly discontinuous semimartingale and a fractional Brownian motion. 
Moreover, the good inventory trajectory is robust, in the sense that it retains its optimality when one price process $\priceProcess$ is replaced by another price process $\tilde{\priceProcess}$ with $\Expectation[\priceProcess_t] = \Expectation[\tilde{\priceProcess}_t]$ for all $t$.  
\end{remark}

\begin{remark}\label{remark.varianceOfLiquidationError}
Given  $\inventory$ as in equation \eqref{eq.goodExecutionIC}, we can compute 
\begin{equation*}
\norm[\inventory_\timeHorizon-\liquidationTarget]_{\Ltwo(\Prob)}
= \frac{1}{2\coeffMarketImpact\squared} \Expectation^{\half}
\left[\big(
\int_{0}^{\timeHorizon} \cosh\big(\ratioAversionOverImpact (T-t)\big) \Big(
\fundamentalPrice_t-\Expectation[\fundamentalPrice_t]
\Big)dt
\big)\squared\right]
\end{equation*}
and estimate
\begin{equation*}
\Variance(\inventory_\timeHorizon)
\leq \frac{T}{4\coeffMarketImpact^{4}} 
\int_{0}^{T} \cosh\squared\big(\ratioAversionOverImpact (T-t)\big) \Variance(\fundamentalPrice_t)dt.
\end{equation*}
We therefore remark the following two facts. First, the smaller $\int \Variance(\fundamentalPrice_t)dt$ is, the more precise the good execution $\inventory$ of Proposition \ref{prop.goodExecutionIC} is. Second,  the square of the coefficient $\coeffMarketImpact$ of linear market impact is inversely proportional to the standard deviation of $\inventory_{\timeHorizon}$, and thus the precision with which the good execution $\inventory$ of equation \eqref{eq.goodExecutionIC} gets to its liquidation target $\liquidationTarget$ increases when the strategy itself can exert more influence on the execution price.
\end{remark}

\begin{remark}	\label{remark.alternativeFinalConstraints}
Unbiased admissible inventory trajectories $\inventory$ have been defined as absolutely continuous stochastic processes on $\timeWindow$ such that $\Expectation[\inventory_\timeHorizon] = \liquidationTarget$. This has meant that the constant $K$ in equation \eqref{eq.goodExecutionIC} has been chosen to minimise $\Expectation [(\inventory_\timeHorizon - \liquidationTarget) \squared]$. We can give two alternatives to this minimisation:
\begin{enumerate}
	\item Choose $K$ in such a way to minimise 
	\begin{equation*}
	\Expectation \left[ \fint_{t_0}^{\timeHorizon} (\inventory_t-\liquidationTarget)\squared dt \right] ,
	\end{equation*}
	for some $0\leq t_0 <\timeHorizon$. The symbol $ \fint_{t_0}^{\timeHorizon} dt$ stands for the mean $ \frac{1}{\timeHorizon - t_0}\int_{t_0}^{\timeHorizon}dt$. This yields
	\begin{equation*}
	K= \frac{1}{2\coeffMarketImpact\squared \fint_{t_0}^{\timeHorizon}\sinh\squared (\ratioAversionOverImpact t) dt}
	\fint_{t_0}^{\timeHorizon} \sinh(\ratioAversionOverImpact t) \psi(t) dt,
	\end{equation*}
	where $\psi(t)=\intzerot \cosh(\ratioAversionOverImpact (t-u))\Expectation[\fundamentalPrice_u] du $ $- (1-\alpha(t))(\initialInventory - \liquidationTarget)$. Notice that this tends to the former choice when $t_0 \uparrow\timeHorizon$.
	\item Choose $K$ in such a way to minimise
	\begin{equation*}
	\Expectation \left[
	\Big(
	\fint_{t_0}^{\timeHorizon}\inventory_tdt -\liquidationTarget
	\Big)\squared
	\right],
	\end{equation*}
	for some $0\leq t_0 <\timeHorizon$. This yields
	\begin{equation*}
	K= \frac{\ratioAversionOverImpact (\timeHorizon - t_0)}
	{2\coeffMarketImpact\squared 
		\left(
		\cosh(\ratioAversionOverImpact\timeHorizon) - \cosh(\ratioAversionOverImpact t_0)
		\right)
	}
	\fint_{t_0}^{\timeHorizon} \psi(t) dt,
	\end{equation*}
	with $\psi$ as above. 
\end{enumerate}
Notice however that the alternative choices for the constant $K$ make the corresponding inventory trajectory fall out of the set $\spaceUnbiasedInventoryTrajectories^{0,\initialInventory}$.
\end{remark}

\subsubsection{Characterisation via Euler-Lagrange equation}\label{sec.eulerLagrangeIC}
The differential equation in \eqref{eq.eulerLagrangeWithCancellationIC} is the linchpin on which the derivation of the good inventory trajectory is based. This equation is the Euler-Lagrange equation associated with the functional $\costFunctional$. Equation \eqref{eq.eulerLagrangeWithCancellationIC} is a random ordinary differential equation, where differentiation is possible because the price process is cancelled out in the sum $ 2\coeffMarketImpact\squared \inventoryRate_{t} + \fundamentalPrice_t$. Such a cancellation allows to circumvent the need of an integration with respect to the price process. However this integration is possible; Appendix \ref{sec.eulerLagrangeInPresenceOfPricePath} presents the theory of the Euler-Lagrange equation in the presence of a (rough) price path. The motivation for this theory comes from the fact that equation \eqref{eq.eulerLagrangeWithCancellationIC} can be rewritten as 
\begin{equation}\label{eq.eulerLagrangeSystemIC}
\begin{cases}
d\inventory_t=& r_t dt \\
dr_t=& \ratioAversionOverImpact\squared (\inventory_t - \liquidationTarget)dt - d\fundamentalPrice_t/2\coeffMarketImpact\squared.
\end{cases}
\end{equation}
We interpret the system in equation \eqref{eq.eulerLagrangeSystemIC} as a random Young differential equation. This equation  is useful in simulations because it avoids the computation of the integrals in equation \eqref{eq.goodExecutionIC} and the evaluation of hyperbolic functions. Moreover, since we use Young integration to integrate with respect to the price process $\fundamentalPrice$, the system in equation \eqref{eq.eulerLagrangeSystemIC} has a pathwise meaning and thus it also makes sense in the practical implementation of the trading strategy, where the price process $\fundamentalPrice$ is replaced by the single price path observed during the liquidation. 

In this  paragraph, we apply the theory of Appendix \ref{sec.eulerLagrangeInPresenceOfPricePath} in order to characterise the good trade execution of Proposition \ref{prop.goodExecutionIC} in terms of an initial value problem for the dynamics in \eqref{eq.eulerLagrangeSystemIC}.

We start by noticing that the Lagrangian $F$ in equation \eqref{eq.LagrangianIC} satisfies Assumption \ref{assumption.decompositionOfF} in Appendix \ref{sec.eulerLagrangeInPresenceOfPricePath}. Equation \eqref{eq.eulerLagrangeSystemIC} is equation \eqref{eq.strongFormEulerLagrangeEquation} with $F$ given by \eqref{eq.LagrangianIC}. Moreover, by  casting Definition \ref{defi.solutionToSecondOrderRDE} to the case of equation \eqref{eq.eulerLagrangeSystemIC}, we have:
\begin{defi}\label{defi.meaningOfSolutionToEulerLagrangeSystemIC}
Let $\inventory$ be in $\spaceUnbiasedInventoryTrajectoriesInitialConstraint$. We say that $\inventory$ solves equation \eqref{eq.eulerLagrangeSystemIC} if for all $\omega$ in $\Omega$, all $\eta$ in $\smoothCompactlySupportedFunctions (0,\timeHorizon)$ and all $0\leq s\leq t \leq \timeHorizon$ the following holds:
\begin{equation}
\label{eq.meaningOfSolutionToEulerLagrangeSystemIC}
\begin{split}
\int_{s}^{t} \dotEta_u d\inventory_u (\omega) 
=& \eta_t \inventoryRate_{t}(\omega) - \eta_s \inventoryRate_s (\omega) \\
& - \ratioAversionOverImpact\squared 	\int_{s}^{t} \eta_u \big(\inventory_u (\omega) - \liquidationTarget\big) du 
+ \frac{1}{2\coeffMarketImpact\squared} 	\int_{s}^{t} \eta_{u} d\fundamentalPrice_u (\omega).
\end{split}
\end{equation}
\end{defi}
We remark that Definition \ref{defi.meaningOfSolutionToEulerLagrangeSystemIC} is pathwise: a scenario $\omega$ in $\Omega$ could be fixed and the definition would still make sense. The integral on the left hand side of equation \eqref{eq.meaningOfSolutionToEulerLagrangeSystemIC} is well defined because for all $\omega$ in $\Omega$ the path $\inventoryRate(\omega)$ is in $\Ltwo \timeWindow$. Similarly, the first integral on the right hand side has a pathwise meaning and it is well defined because $\inventory(\omega)$ is in $\Ltwo \timeWindow$ for all  $\omega$ in $\Omega$. Thirdly, the integral $\int \eta d\fundamentalPrice$ on the right hand side of equation \eqref{eq.meaningOfSolutionToEulerLagrangeSystemIC} is the Young integral introduced in Lemma \ref{lemma.integrationByPartsYoungIntegral}.

\begin{lemma}
\label{lemma.twoSolutionsEulerLagrangeIC}
Let $\inventory$ and $\tilde{\inventory}$ be two solutions to equation \eqref{eq.eulerLagrangeSystemIC}. If $\ratioAversionOverImpact \neq 0 $, then there exist constants $K_1$ and $K_2$ such that 
\begin{equation*}
\inventory_t - \tilde{\inventory}_t = K_1 e^{\ratioAversionOverImpact t } + K_2 e^{-\ratioAversionOverImpact t};
\end{equation*}
if $\ratioAversionOverImpact=0$, then there exist constants $K_1$ and $K_2$ such that 
\begin{equation*}
\inventory_t - \tilde{\inventory}_t = K_1  + K_2 t.
\end{equation*}
In particular, in both cases the difference between $\inventory$  and $\tilde{\inventory}$ is deterministic. 
\end{lemma}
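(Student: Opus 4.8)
The plan is to set $e := \inventory - \tilde{\inventory}$ and to exploit that the two trajectories solve \eqref{eq.eulerLagrangeSystemIC} \emph{with respect to the same price process} $\fundamentalPrice$, so that the Young integral $\tfrac{1}{2\coeffMarketImpact\squared}\int \eta\, d\fundamentalPrice$ --- the single term in \eqref{eq.meaningOfSolutionToEulerLagrangeSystemIC} that prevents $\inventory$ from satisfying a bona fide pathwise differential equation --- cancels upon subtraction, as does the constant $\liquidationTarget$. Concretely, fixing $\omega\in\Omega$ and subtracting the two instances of \eqref{eq.meaningOfSolutionToEulerLagrangeSystemIC} (using that $e(\omega)$ is absolutely continuous, so $\int\dotEta\, de = \int\dotEta\,\dot e\, du$) gives, for every $\eta\in\smoothCompactlySupportedFunctions(0,\timeHorizon)$ and all $0\leq s\leq t\leq\timeHorizon$,
\[
\int_{s}^{t}\dotEta_u\,\dot e_u(\omega)\,du \;=\; \eta_t\,\dot e_t(\omega)-\eta_s\,\dot e_s(\omega)-\ratioAversionOverImpact\squared\int_{s}^{t}\eta_u\, e_u(\omega)\,du .
\]
Choosing $s$ and $t$ so that $\eta$ and $\dotEta$ vanish on $[0,s]\cup[t,\timeHorizon]$ --- possible since $\eta$ is compactly supported in $(0,\timeHorizon)$ --- removes the boundary terms and yields $\int_{0}^{\timeHorizon}\dotEta_u\,\dot e_u(\omega)\,du = -\ratioAversionOverImpact\squared\int_{0}^{\timeHorizon}\eta_u\, e_u(\omega)\,du$ for all test functions $\eta$; that is, the distributional derivative of $\dot e(\omega)$ equals $\ratioAversionOverImpact\squared\, e(\omega)$.

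Since $e(\omega)$ is continuous and $\dot e(\omega)\in\Ltwo(0,\timeHorizon)$, this identity says $\dot e(\omega)\in W^{1,2}(0,\timeHorizon)$ with weak derivative $\ratioAversionOverImpact\squared\, e(\omega)$; a standard one-dimensional regularity bootstrap then upgrades it to the classical equation $e(\omega)\in\smoothFunctions(0,\timeHorizon)$, $\ddot e_t(\omega)=\ratioAversionOverImpact\squared\, e_t(\omega)$ on $(0,\timeHorizon)$. Solving this constant-coefficient linear ODE pathwise gives $e_t(\omega)=K_1(\omega)e^{\ratioAversionOverImpact t}+K_2(\omega)e^{-\ratioAversionOverImpact t}$ when $\ratioAversionOverImpact\neq 0$, and $e_t(\omega)=K_1(\omega)+K_2(\omega)t$ when $\ratioAversionOverImpact=0$, with $K_1,K_2$ constant in $t$. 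These coefficients are measurable in $\omega$, since for two fixed times $0<t_1<t_2<\timeHorizon$ the pair $(K_1,K_2)$ is recovered from $(e_{t_1},e_{t_2})$ by inverting a fixed $2\times 2$ matrix whose determinant (a Wronskian, equal to $2\sinh(\ratioAversionOverImpact(t_2-t_1))$ when $\ratioAversionOverImpact\neq0$ and to $t_2-t_1$ otherwise) is nonzero.

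The crucial remaining point --- and the one I expect to be the real substance of the argument --- is that $K_1$ and $K_2$ are in fact \emph{deterministic}, which is what the stated conclusion rests on and does not follow from the ODE alone. Here I would invoke adaptedness: for every $t_2\in(0,\timeHorizon)$ the pair $(K_1,K_2)$ is a fixed linear function of $(e_{t_1},e_{t_2})$ with $0<t_1<t_2$, and since $\inventory$ and $\tilde{\inventory}$ are $(\sigmaAlgebra_t)_t$-adapted this makes $K_1,K_2$ $\sigmaAlgebra_{t_2}$-measurable; letting $t_2\downarrow 0$ shows they are $\sigmaAlgebra_{0+}$-measurable, and by right-continuity of the filtration $\sigmaAlgebra_{0+}=\sigmaAlgebra_0$, which is $\Prob$-trivial, so $K_1$ and $K_2$ agree $\Prob$-a.s.\ with constants. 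Replacing them by those constants yields the claimed formulae and, in particular, shows $\inventory-\tilde{\inventory}$ is deterministic. The only genuinely delicate steps are thus the regularity upgrade (routine in dimension one) and this measurability argument, which is precisely where the probabilistic hypotheses --- adaptedness together with triviality of $\sigmaAlgebra_0$ --- are needed.
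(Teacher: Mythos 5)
Your derivation of the pathwise ODE is essentially the paper's: both proofs subtract the two instances of the weak formulation \eqref{eq.meaningOfSolutionToEulerLagrangeSystemIC} so that the Young integral against $\fundamentalPrice$ and the constant $\liquidationTarget$ cancel, and then read off $\ddot{e}=\ratioAversionOverImpact\squared e$ from the resulting identity holding for all test functions (the paper keeps the boundary terms and differentiates $t\mapsto\eta_t\dot{e}_t$ directly, you kill them by localising the test function and then bootstrap regularity in one dimension --- a cosmetic difference). Where you genuinely go beyond the paper is the last step: the paper's proof stops at the ODE and simply asserts the conclusion, whereas you supply the argument that the integration ``constants'' $K_1,K_2$, a priori $\omega$-dependent, are in fact deterministic --- recovering them as a fixed linear function of $(e_{t_1},e_{t_2})$, using adaptedness to get $\sigmaAlgebra_{t_2}$-measurability for every $t_2>0$, and then right-continuity plus triviality of $\sigmaAlgebra_0$. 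This is not a detour: the determinism of $K_1,K_2$ is exactly what Lemma \ref{lemma.uniquenessOfEulerLagrangeIC} needs (with random $K_1$ the condition $\Expectation[\inventory_\timeHorizon-\tilde{\inventory}_\timeHorizon]=0$ would only force $\Expectation[K_1]=0$), so your addition closes a small gap in the published argument. The only blemish is a harmless sign slip in the Wronskian ($-2\sinh(\ratioAversionOverImpact(t_2-t_1))$ rather than $+2\sinh(\ratioAversionOverImpact(t_2-t_1))$), which does not affect invertibility.
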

\begin{proof}
Let $\eta$ be arbitrary in $\smoothCompactlySupportedFunctions(0,\timeHorizon)$. By Definition \ref{defi.meaningOfSolutionToEulerLagrangeSystemIC} we have that 
\begin{equation*}
\begin{split}
\int_{0}^{t} \dotEta_u d\inventory_u 
=& \eta_t \inventoryRate_{t}  
 - \ratioAversionOverImpact\squared 	\int_{0}^{t} \eta_u \big(\inventory_u  - \liquidationTarget\big) du 
+ \frac{1}{2\coeffMarketImpact\squared} 	\int_{0}^{t} \eta_{u} d\fundamentalPrice_u ; \\
\int_{0}^{t} \dotEta_u d\tilde{\inventory}_u 
=& \eta_t \inventoryRate_{t}  
 - \ratioAversionOverImpact\squared 	\int_{0}^{t} \eta_u \big(\tilde{\inventory}_u  - \liquidationTarget\big) du 
+ \frac{1}{2\coeffMarketImpact\squared} 	\int_{0}^{t} \eta_{u} d\fundamentalPrice_u .
\end{split}
\end{equation*}
Let $\epsilon_t:= \inventory_t - \tilde{\inventory}_t$. Subtract one line from the other and obtain that the function 
\begin{equation*}
t\mapsto \intzerot \dotEta_u \dot{\epsilon}_u du + \ratioAversionOverImpact\squared \intzerot \eta_u \epsilon_u du - \eta_t \dot{\epsilon}_t 
\end{equation*}
is constantly null. The first two summands are differentiable in $t$ and hence the third summand $ \eta_t \dot{\epsilon}_t $ is differentiable too. Since $\eta$ is arbitrary, $\dot{\epsilon}$ is differentiable in $(0,\timeHorizon)$. Differentiating $t\mapsto \eta_t \dot{\epsilon}_t$, we obtain
\begin{equation*}
 \dotEta_t \dot{\epsilon}_t + \ratioAversionOverImpact\squared \eta_t \epsilon_t - \dot{\eta}_t \dot{\epsilon}_t  - \eta_t \ddot{\epsilon}_t = 0.
\end{equation*}
Hence $\ddot{\epsilon}_t = \ratioAversionOverImpact\squared \epsilon_t$, proving the lemma. 
\end{proof}

\begin{lemma}
\label{lemma.uniquenessOfEulerLagrangeIC}
The solution $\inventory$ to equation \eqref{eq.eulerLagrangeSystemIC} with constraint
\begin{equation}\label{eq.initialAndTerminalConstraintForGoodTradeExecution}
\begin{cases}
\inventory_0 = \initialInventory \\
\Expectation \left[ \inventory_\timeHorizon \right] = \liquidationTarget 
\end{cases}
\end{equation}
is unique. 
\end{lemma}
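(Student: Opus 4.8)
The plan is to derive uniqueness from the structural description of the solution set already obtained in Lemma \ref{lemma.twoSolutionsEulerLagrangeIC}, together with the two scalar conditions in the constraint \eqref{eq.initialAndTerminalConstraintForGoodTradeExecution}. Existence is essentially in hand: the closed-form trajectory $\inventory$ of equation \eqref{eq.goodExecutionIC} lies in $\spaceUnbiasedInventoryTrajectoriesInitialConstraint$ with $\inventory_0=\initialInventory$ and $\Expectation[\inventory_\timeHorizon]=\liquidationTarget$, and one checks — starting from the relation $\dot f_t = 2\coeffMarketImpact\squared \inventory_t$ of equation \eqref{eq.eulerLagrangeWithCancellationIC} established in the proof of Proposition \ref{prop.goodExecutionIC}, and integrating by parts, interpreting $\int \eta\, d\fundamentalPrice$ via the Young integration-by-parts of Lemma \ref{lemma.integrationByPartsYoungIntegral} — that it satisfies the weak identity \eqref{eq.meaningOfSolutionToEulerLagrangeSystemIC} of Definition \ref{defi.meaningOfSolutionToEulerLagrangeSystemIC}. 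So the genuine content of the lemma is the uniqueness assertion.

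First I would take two solutions $\inventory$ and $\tilde{\inventory}$ of \eqref{eq.eulerLagrangeSystemIC}, each satisfying \eqref{eq.initialAndTerminalConstraintForGoodTradeExecution}, and set $\epsilon_t := \inventory_t - \tilde{\inventory}_t$. By Lemma \ref{lemma.twoSolutionsEulerLagrangeIC} the path $\epsilon$ is deterministic, with $\epsilon_t = K_1 e^{\ratioAversionOverImpact t}+K_2 e^{-\ratioAversionOverImpact t}$ when $\ratioAversionOverImpact\neq 0$ and $\epsilon_t = K_1 + K_2 t$ when $\ratioAversionOverImpact = 0$, for some constants $K_1,K_2$. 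Then I would impose the boundary data: the initial condition gives $\epsilon_0 = \initialInventory - \initialInventory = 0$, while, because $\epsilon$ is deterministic, the terminal condition gives $\epsilon_\timeHorizon = \Expectation[\epsilon_\timeHorizon] = \liquidationTarget - \liquidationTarget = 0$. In the case $\ratioAversionOverImpact\neq 0$, $\epsilon_0=0$ forces $K_2=-K_1$, and then $\epsilon_\timeHorizon=0$ reads $2K_1\sinh(\ratioAversionOverImpact\timeHorizon)=0$; since $\ratioAversionOverImpact\neq 0$ and $\timeHorizon>0$, this yields $K_1=K_2=0$. In the case $\ratioAversionOverImpact=0$, $\epsilon_0=0$ gives $K_1=0$ and then $\epsilon_\timeHorizon=0$ gives $K_2\timeHorizon=0$, hence $K_2=0$. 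In both cases $\epsilon\equiv 0$, i.e. $\inventory = \tilde{\inventory}$, which is the claim.

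The uniqueness computation itself is entirely elementary once Lemma \ref{lemma.twoSolutionsEulerLagrangeIC} is available — the key leverage is precisely the fact that the difference of two solutions is \emph{deterministic}, so that the in-expectation terminal constraint collapses to a pointwise one and a two-point boundary value problem for a second-order linear ODE with a non-degenerate coefficient matrix is obtained. Accordingly, the only step that calls for real care is the existence half: verifying that the explicit trajectory of \eqref{eq.goodExecutionIC}, whose defining integrals are written as ordinary Riemann integrals against the price path, does satisfy the pathwise weak formulation \eqref{eq.meaningOfSolutionToEulerLagrangeSystemIC} for every test function $\eta$ in $\smoothCompactlySupportedFunctions(0,\timeHorizon)$ and every $\omega$; this is the point at which the Young-integration apparatus of Appendix \ref{sec.eulerLagrangeInPresenceOfPricePath} is actually invoked, and I would treat it by a careful integration by parts rather than by further estimates.
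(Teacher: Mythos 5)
Your proposal is correct and follows essentially the same route as the paper: both reduce to Lemma \ref{lemma.twoSolutionsEulerLagrangeIC}, note that the difference of two solutions is deterministic of the form $K_1 e^{\ratioAversionOverImpact t}+K_2 e^{-\ratioAversionOverImpact t}$ (or $K_1+K_2 t$), and solve the resulting non-degenerate $2\times 2$ linear system coming from the two boundary conditions. Your explicit remark that determinism of the difference collapses the in-expectation terminal constraint to a pointwise one is exactly the implicit step in the paper's proof, made visible.
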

\begin{proof}
Let  $\inventory$ and $\tilde{\inventory}$ be two solutions to equation \eqref{eq.eulerLagrangeSystemIC} satisfying the constraints in \eqref{eq.initialAndTerminalConstraintForGoodTradeExecution}. Assume $\ratioAversionOverImpact \neq 0$. Then, by Lemma \ref{lemma.twoSolutionsEulerLagrangeIC} it must be
\begin{equation*}
\inventory_t - \tilde{\inventory}_t = K_1 e^{\ratioAversionOverImpact t } + K_2 e^{-\ratioAversionOverImpact t},
\end{equation*}
for constants $K_1$ and $K_2$. Therefore 
\begin{equation*}
\begin{cases}
\inventory_0 - \tilde{\inventory}_0 = K_1 + K_2 =0 \\
\Expectation \inventory_\timeHorizon  - \Expectation \tilde{\inventory}_\timeHorizon =
 K_1 e^{\ratioAversionOverImpact\timeHorizon } + K_2 e^{-\ratioAversionOverImpact\timeHorizon } =0
\end{cases}
\end{equation*}
Solving for  $K_1$ and $K_2$ we find $K_1=K_2=0$. 

The case  $\ratioAversionOverImpact = 0$ is analogous. 
\end{proof}
Having established uniqueness of the solution to   equation \eqref{eq.eulerLagrangeSystemIC} with constraint \eqref{eq.initialAndTerminalConstraintForGoodTradeExecution}, we link equation \eqref{eq.eulerLagrangeSystemIC} to the good trade execution of Proposition \ref{prop.goodExecutionIC}. This link is established as an application of Lemma \ref{lemma.characterisationOfSolutionToSecondOrderRDE} from Appendix \ref{sec.eulerLagrangeInPresenceOfPricePath}. 

\begin{lemma}
\label{lemma.characterisationEulerLagrangeIC}
Let $\inventory$ be in $\spaceUnbiasedInventoryTrajectoriesInitialConstraint$. Then, the following are equivalent
\begin{enumerate}
	\item the inventory trajectory $\inventory$ solves the Euler-Lagrange equation \eqref{eq.eulerLagrangeSystemIC};
	\item the function $f_t:= 2\coeffMarketImpact\squared \inventoryRate_t + \fundamentalPrice_t$ is absolutely continuous with derivative 
	\begin{equation*}
	\dot{f}_t = 2 \coeffRiskAversion\squared \inventory_t. 
	\end{equation*}
\end{enumerate}
In particular, the $(C,\xi)$-good trade execution in Proposition \ref{prop.goodExecutionIC} solves the Euler-Lagrange equation \eqref{eq.eulerLagrangeSystemIC}. 
\end{lemma}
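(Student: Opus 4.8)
The plan is to unfold Definition \ref{defi.meaningOfSolutionToEulerLagrangeSystemIC} and recast it entirely in terms of the auxiliary path $f_t := 2\coeffMarketImpact\squared\inventoryRate_t + \fundamentalPrice_t$. Fix a scenario $\omega$ (every manipulation below is pathwise) and substitute $\inventoryRate_u = (f_u-\fundamentalPrice_u)/(2\coeffMarketImpact\squared)$ on both sides of equation \eqref{eq.meaningOfSolutionToEulerLagrangeSystemIC}, while applying the integration-by-parts rule for the Young integral (Lemma \ref{lemma.integrationByPartsYoungIntegral}) to the term $\tfrac{1}{2\coeffMarketImpact\squared}\int_s^t\eta_u\, d\fundamentalPrice_u$; since $\eta\in\smoothCompactlySupportedFunctions(0,\timeHorizon)$ is $C^1$, the resulting integral $\int_s^t\fundamentalPrice_u\, d\eta_u$ equals $\int_s^t\fundamentalPrice_u\dotEta_u\, du$. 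At each endpoint the terms combine via $\eta\,\inventoryRate+\tfrac{1}{2\coeffMarketImpact\squared}\eta\,\fundamentalPrice=\tfrac{1}{2\coeffMarketImpact\squared}\eta\, f$, so that every occurrence of $\fundamentalPrice$ — including the Young integral — cancels, and \eqref{eq.meaningOfSolutionToEulerLagrangeSystemIC} is seen to be equivalent, after multiplying through by $2\coeffMarketImpact\squared$ and using $2\coeffMarketImpact\squared\ratioAversionOverImpact\squared=2\coeffRiskAversion\squared$, to the weak Euler--Lagrange identity
\[
\eta_t f_t - \eta_s f_s - \int_s^t \dotEta_u\, f_u\, du \;=\; 2\coeffRiskAversion\squared\int_s^t \eta_u\big(\inventory_u-\liquidationTarget\big)\, du,
\]
required to hold for every $\eta\in\smoothCompactlySupportedFunctions(0,\timeHorizon)$ and all $0\le s\le t\le\timeHorizon$.

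Next I would show this weak identity is equivalent to statement (2). Taking $s=0$ and $t=\timeHorizon$, so that the boundary terms vanish, it reduces to $\int_0^{\timeHorizon}\dotEta_u\, f_u\, du = -2\coeffRiskAversion\squared\int_0^{\timeHorizon}\eta_u(\inventory_u-\liquidationTarget)\, du$ for all test functions $\eta$, which says exactly that the distributional derivative of $f$ on $(0,\timeHorizon)$ is the locally integrable function $u\mapsto 2\coeffRiskAversion\squared(\inventory_u-\liquidationTarget)=2\coeffRiskAversion\squared\inventory_u$ (recall $\liquidationTarget=0$). As this function is continuous ($\inventory$ being absolutely continuous), the standard characterisation of absolutely continuous functions — a function whose distributional derivative is locally integrable admits an absolutely continuous representative with that derivative — yields that $f$ is absolutely continuous with $\dot f_u = 2\coeffRiskAversion\squared\inventory_u$, i.e. (2). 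Conversely, if (2) holds, then for every smooth $\eta$ the product rule for absolutely continuous functions gives $\eta_t f_t - \eta_s f_s = \int_s^t(\dotEta_u f_u + \eta_u\dot f_u)\, du = \int_s^t\dotEta_u f_u\, du + 2\coeffRiskAversion\squared\int_s^t\eta_u\inventory_u\, du$ for all $0\le s\le t\le\timeHorizon$, which is the weak identity; running the reduction of the first paragraph backwards then recovers (1). (Equivalently, this whole equivalence is the specialisation of Lemma \ref{lemma.characterisationOfSolutionToSecondOrderRDE} of Appendix \ref{sec.eulerLagrangeInPresenceOfPricePath} to the Lagrangian \eqref{eq.LagrangianIC}, once one checks that it satisfies Assumption \ref{assumption.decompositionOfF}.)

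For the "in particular" assertion, recall that the proof of Proposition \ref{prop.goodExecutionIC} verified precisely equation \eqref{eq.eulerLagrangeWithCancellationIC}: for the closed-form trajectory \eqref{eq.goodExecutionIC} the path $f_t = 2\coeffMarketImpact\squared\inventoryRate_t+\fundamentalPrice_t$ is absolutely continuous with $\dot f_t = 2\coeffRiskAversion\squared\inventory_t$. This is statement (2), so by the equivalence just proved the $(C,\xi)$-good trade execution of Proposition \ref{prop.goodExecutionIC} solves the Euler--Lagrange equation \eqref{eq.eulerLagrangeSystemIC}. I expect the main obstacle to be nothing deep but rather the careful bookkeeping in the reduction step — matching all boundary terms and checking the exact cancellation of the $\fundamentalPrice$-contributions, the Young integral included — together with the minor measure-theoretic point of selecting the version of $\inventoryRate$ for which $f$ is genuinely absolutely continuous, not merely a.e.\ equal to such a function; the remainder is the classical fundamental lemma of the calculus of variations.
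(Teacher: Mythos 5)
Your proof is correct and follows essentially the same route as the paper, which disposes of the lemma by invoking Lemma \ref{lemma.characterisationOfSolutionToSecondOrderRDE} from Appendix \ref{sec.eulerLagrangeInPresenceOfPricePath} — whose proof is precisely your reduction: Young integration by parts to cancel all $\fundamentalPrice$-contributions and pass to the weak identity, then the identification of the distributional derivative of $f$. Your handling of the ``in particular'' clause, by recalling that the proof of Proposition \ref{prop.goodExecutionIC} verifies statement (2) for the closed-form trajectory (with the derivative correctly read as $2\coeffRiskAversion\squared\inventory_t$), likewise matches the paper.
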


We are finally in the position to prove the main result of this  paragraph. 
\begin{prop}[``Characterisation of good trade execution via Euler-Lagrange equation''] \label{prop.eulerLagrangeCharacterisationIC}
The good trade execution for the minimisation of \eqref{eq.optimisationGoodExecution} with the Lagrangian $F$ as in equation \eqref{eq.LagrangianIC} is characterised as the solution to the Euler-Lagrange equation in \eqref{eq.eulerLagrangeSystemIC} with initialisation
\begin{equation}\label{eq.initialisationEulerLagrangeIC}
\begin{cases}
\inventory_0 = \initialInventory \\
r_0 = - \fundamentalPrice_0 / (2\coeffMarketImpact\squared) + {\ratioAversionOverImpact}{\sinh\inverse (\ratioAversionOverImpact \timeHorizon)}
			\Big[
				(\liquidationTarget - \initialInventory) \cosh(\ratioAversionOverImpact\timeHorizon) 
				+\tilde{K}
				\Big],
\end{cases}
\end{equation}
where 
\begin{equation*}
\begin{split}
\tilde{K}=&\frac{\cosh(\ratioAversionOverImpact\timeHorizon)}{2\coeffMarketImpact\squared}
\int_{0}^{\timeHorizon} \cosh(\ratioAversionOverImpact t) \Expectation \big[\fundamentalPrice_t \big] dt 
% \\
% &\qquad
-\frac{\sinh(\ratioAversionOverImpact\timeHorizon)}{2\coeffMarketImpact\squared}
\int_{0}^{\timeHorizon} \sinh(\ratioAversionOverImpact t) \Expectation \big[\fundamentalPrice_t \big] dt.
\end{split}
\end{equation*}	

In particular, the good trade execution in Proposition \ref{prop.goodExecutionIC} is the only good trade execution for the minimisation \eqref{eq.optimisationGoodExecution} with the Lagrangian $F$ as in equation \eqref{eq.LagrangianIC}.
\end{prop}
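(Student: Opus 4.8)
The plan is to read the result off three facts already in place: the closed-form expression \eqref{eq.goodExecutionIC} for the good trade execution, Lemma \ref{lemma.characterisationEulerLagrangeIC} (which tells us this trajectory solves the Euler--Lagrange system \eqref{eq.eulerLagrangeSystemIC}), and the rigidity of that system recorded in Lemmas \ref{lemma.twoSolutionsEulerLagrangeIC} and \ref{lemma.uniquenessOfEulerLagrangeIC}. Conceptually the proposition merely rewrites the two-point data $\{\inventory_0=\initialInventory,\ \Expectation[\inventory_\timeHorizon]=\liquidationTarget\}$, which defines membership in $\spaceUnbiasedInventoryTrajectoriesInitialConstraint$, as the one-point data $\{\inventory_0=\initialInventory,\ \inventoryRate_0=r_0\}$, which is what makes \eqref{eq.eulerLagrangeSystemIC} an easily simulated initial value problem.

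Step 1: verify that the good trade execution $\inventory$ of Proposition \ref{prop.goodExecutionIC} satisfies \eqref{eq.initialisationEulerLagrangeIC}. That $\inventory_0=\initialInventory$ is built into $\spaceUnbiasedInventoryTrajectoriesInitialConstraint$. For $\inventoryRate_0$, I would differentiate the three lines of \eqref{eq.goodExecutionIC} (the sample paths are $C^1$ since the only $t$-dependence through $\fundamentalPrice$ enters via a $C^1$-kernel convolution, so the Leibniz rule applies) and set $t=0$: the convex-combination line contributes $\dot\alpha(0)(\liquidationTarget-\initialInventory)$ with $\dot\alpha(0)=\ratioAversionOverImpact\cosh(\ratioAversionOverImpact\timeHorizon)/\sinh(\ratioAversionOverImpact\timeHorizon)$; the convolution line contributes $-\fundamentalPrice_0/(2\coeffMarketImpact\squared)$ (the remaining integral term vanishes at $t=0$); the last line contributes $\ratioAversionOverImpact K$. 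Substituting the definition of $K$ and applying the addition formula $\cosh(\ratioAversionOverImpact(\timeHorizon-u))=\cosh(\ratioAversionOverImpact\timeHorizon)\cosh(\ratioAversionOverImpact u)-\sinh(\ratioAversionOverImpact\timeHorizon)\sinh(\ratioAversionOverImpact u)$ under the integral shows $\ratioAversionOverImpact K=\ratioAversionOverImpact\sinh\inverse(\ratioAversionOverImpact\timeHorizon)\,\tilde K$, so that the three contributions collected together are exactly the right-hand side of \eqref{eq.initialisationEulerLagrangeIC}. The degenerate case $\ratioAversionOverImpact=0$ follows by passing to the limit, or directly from the polynomial forms of \eqref{eq.goodExecutionIC} and of Lemma \ref{lemma.twoSolutionsEulerLagrangeIC}.

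Step 2: uniqueness of the initial value problem. If $\inventory$ and $\tilde{\inventory}$ both solve \eqref{eq.eulerLagrangeSystemIC} with initialisation \eqref{eq.initialisationEulerLagrangeIC}, then Lemma \ref{lemma.twoSolutionsEulerLagrangeIC} gives that their (deterministic) difference is $K_1 e^{\ratioAversionOverImpact t}+K_2 e^{-\ratioAversionOverImpact t}$ (or $K_1+K_2 t$ when $\ratioAversionOverImpact=0$); matching $\inventory_0=\tilde{\inventory}_0$ and $\inventoryRate_0=\dot{\tilde{\inventory}}_0$ forces $K_1=K_2=0$, so $\inventory=\tilde{\inventory}$. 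Together with Step 1 this shows that the good trade execution of Proposition \ref{prop.goodExecutionIC} is the one and only solution of \eqref{eq.eulerLagrangeSystemIC}--\eqref{eq.initialisationEulerLagrangeIC}, which is the asserted characterisation. For the concluding clause, any $(C,\xi)$-good trade execution is an optimal execution of its own terminal variable, hence satisfies the weak-form Euler--Lagrange equation \eqref{eq.eulerLagrangeSystemIC} by Proposition \ref{prop.weakFormEulerLagrange} of the appendix; lying in $\spaceUnbiasedInventoryTrajectoriesInitialConstraint$, it also satisfies \eqref{eq.initialAndTerminalConstraintForGoodTradeExecution}, so by Lemma \ref{lemma.uniquenessOfEulerLagrangeIC} it is unique and must coincide with the trajectory of Proposition \ref{prop.goodExecutionIC}.

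I do not expect a genuinely hard step here: the proposition is essentially bookkeeping. The one place that needs care is the differentiation in Step 1, where one must recognise the hyperbolic addition formula hidden in the definitions of $K$ and $\tilde K$, justify differentiating under the integral sign for the (only $p$-variation regular) price path, and keep the $\ratioAversionOverImpact=0$ degeneracy consistent between the closed form and the Euler--Lagrange system.
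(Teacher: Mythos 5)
Your argument is correct, but it reaches the initialisation \eqref{eq.initialisationEulerLagrangeIC} by a genuinely different route from the paper's. The paper first proves both implications abstractly (a good trade execution solves \eqref{eq.eulerLagrangeSystemIC} via Proposition \ref{prop.eulerLagrangeNecessity}; conversely a solution of \eqref{eq.eulerLagrangeSystemIC} is a $(0,0)$-good trade execution via Lemma \ref{lemma.characterisationEulerLagrangeIC} and a convexity estimate), and then derives $r_0$ by taking expectations in the linear system $dY_t=(AY_t+B)dt+D\,d\fundamentalPrice_t$, solving the resulting ODE for $\mu_t=\Expectation[Y_t]$ by variation of parameters (the functions $e_1,e_2$), and matching the two-point data $\mu_0=\initialInventory$, $\mu_\timeHorizon=\liquidationTarget$. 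You instead differentiate the closed form \eqref{eq.goodExecutionIC} at $t=0$ and use the addition formula to see $K\sinh(\ratioAversionOverImpact\timeHorizon)=\tilde K$; I checked this computation and it is right, and combined with Lemmas \ref{lemma.twoSolutionsEulerLagrangeIC}--\ref{lemma.uniquenessOfEulerLagrangeIC} it does establish that the two-point constraints \eqref{eq.initialAndTerminalConstraintForGoodTradeExecution} and the one-point initialisation \eqref{eq.initialisationEulerLagrangeIC} single out the same solution. Your route is shorter because it leans on Proposition \ref{prop.goodExecutionIC} for the fact that this trajectory \emph{is} a good trade execution, whereas the paper re-proves the ``solution $\Rightarrow$ good'' implication from scratch; what the paper's variation-of-parameters derivation buys is that it produces the formula for $r_0$ rather than merely verifying it, and it generalises directly to the other Lagrangians of Section \ref{sec.alternativeRiskCriteria}. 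Two small points of care on your side: the paths of $\inventory$ are absolutely continuous but not $C^1$ (the derivative contains $-\fundamentalPrice_t/2\coeffMarketImpact\squared$, which is only of finite $p$-variation), so ``the value of $\inventoryRate$ at $0$'' should be read as the initial datum of the second component of the Young system rather than a classical derivative; and the passage from ``optimal execution of its own terminal variable'' to ``solves \eqref{eq.eulerLagrangeSystemIC}'' needs the strong form (Proposition \ref{prop.eulerLagrangeNecessity} or Lemma \ref{lemma.characterisationOfSolutionToSecondOrderRDE}) on top of the weak form you cite, exactly as the paper does. Neither is a gap, just phrasing.
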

\begin{remark}
Proposition \ref{prop.goodExecutionIC} gives a characterisation of the good trade execution in terms of an initial value problem that is easily simulated. This is the practical relevance of the characterisation. We will rely on the initial value problem \eqref{eq.eulerLagrangeSystemIC} with initial conditions \eqref{eq.initialisationEulerLagrangeIC} in our numerical experiments in Section \ref{sec.applications}.
\end{remark}
\begin{proof}
First we examine the following two implications.
\begin{enumerate}
	\item \emph{A good trade execution solves equation \eqref{eq.eulerLagrangeSystemIC}}. Let $\tilde{\inventory}$ be a good trade execution for the minimisation of \eqref{eq.optimisationGoodExecution} with the Lagrangian $F$ as in equation \eqref{eq.LagrangianIC}. Then in particular, for every $\omega$ in $\Omega$ it holds
	\begin{equation*}
	\tilde{\inventory}(\omega) = \argmin \left\lbrace \costFunctional(\eta):\, \eta \in \tilde{\inventory}(\omega) + \sobolevSpaceCompactSupport(0,\timeHorizon) \right\rbrace.
	\end{equation*}
	Therefore, by Proposition \ref{prop.eulerLagrangeNecessity}, we have that $\tilde{\inventory}(\omega)$ solves the equation
	\begin{equation*}
	\begin{cases}
	d\tilde{\inventory}_t(\omega)=&   \tilde{r}_t(\omega)dt \\
	d\tilde{r}_t(\omega)=& \ratioAversionOverImpact\squared \tilde{\inventory}_t(\omega)dt - d\fundamentalPrice_t(\omega)/2\coeffMarketImpact\squared.
	\end{cases}
	\end{equation*}
	\item \emph{A solution to equation \eqref{eq.eulerLagrangeSystemIC} is a good trade execution}. Assume that $\inventory$ solves the Euler-Lagrange equation in \eqref{eq.eulerLagrangeSystemIC}. Then, by Lemma \ref{lemma.characterisationEulerLagrangeIC} we have that $f_t(\omega):= 2\coeffMarketImpact\squared \inventoryRate_t(\omega) + \fundamentalPrice_t(\omega)$ is absolutely continuous in $t$ for all $\omega$ in $\Omega$ and its time derivative  is $\dot{f}_t(\omega) = 2 \coeffRiskAversion\squared \inventory_t(\omega)$. Hence, for  all $\omega$ in $\Omega$ and all $\eta$ in $\inventory(\omega) + \sobolevSpaceCompactSupport(0,\timeHorizon)$ we have 
	\begin{equation*}
	\begin{split}
	\costFunctional(\eta) - \costFunctional(\inventory(\omega)) = &
	\intZeroTimeHorizon \big(2\coeffRiskAversion\squared \inventory_t(\omega) - \dot{f}_t (\omega) \big)e_t dt
% 	\\
% 	&
	+ \intZeroTimeHorizon \big(\coeffMarketImpact\squared \dot{e}_t\squared + \coeffRiskAversion\squared e_t\squared \big) dt,
	\end{split}
	\end{equation*}
	where $e = \eta-\inventory(\omega)$.  The first summand on the right hand side is null and thus $\costFunctional(\eta)\geq \costFunctional(\inventory(\omega))$. This shows that $\inventory$ is a $(0,0)$-good trade execution. 
\end{enumerate}
In view of these two implications and of Lemma \ref{lemma.uniquenessOfEulerLagrangeIC}, it only remains to show that the initialisations in equation \eqref{eq.initialisationEulerLagrangeIC} are equivalent to the constraints in \eqref{eq.initialAndTerminalConstraintForGoodTradeExecution}.

Consider an equation of the form
\begin{equation*}
dY_t = (AY_t + B)dt + D d\fundamentalPrice_t,
\end{equation*}
where the unknown $Y$ is in $\R\squared$, the matrix $A$ is in $\R^{2\times 2}$ and $B$ and $D$ are two-dimensional real vectors. Equation \eqref{eq.eulerLagrangeSystemIC} is of this form with the choice $Y=(\inventory,r)\transpose$ and
\begin{equation*}
A=
\begin{pmatrix}
0 & 1 \\ 
\ratioAversionOverImpact\squared & 0 
\end{pmatrix}, \qquad
B= 
\begin{pmatrix}
0 \\ -\ratioAversionOverImpact\squared \liquidationTarget
\end{pmatrix}
 = 0, \qquad
D= 
\begin{pmatrix}
0 \\ -\frac{1}{2\coeffMarketImpact\squared}
\end{pmatrix}.
\end{equation*} 
Assume first that $t\mapsto \Expectation[\fundamentalPrice_t]$ is differentiable. Then, if we set $\mu_t=\Expectation[Y_t]$ we have that $\mu$ solves the ordinary differential equation
\begin{equation}\label{eq.ODEforExpectedInventory}
\dot{\mu}_t = A\mu_t + \varphi_t D,
\end{equation}
where $\varphi_t= d\Expectation[\fundamentalPrice_t]/dt$. This ordinary differential equation has a two-dimensional space of solutions; hence we can exploit these two degrees of freedom to adjust for the constraints $\inventory_0 = \mu_0 = \initialInventory$ and $\mu_\timeHorizon  = \liquidationTarget = 0$. 

Define the function $e_1=e_1(t)$ as 
\begin{equation*}
\begin{split}
e_1(t) = & e_1(0) + \Expectation[\fundamentalPrice_t]\sinh(\ratioAversionOverImpact t)/ (2\coeffMarketImpact \coeffRiskAversion) \\
& - \frac{1}{2\coeffMarketImpact\squared } \intzerot \cosh(\ratioAversionOverImpact u) \Expectation[\fundamentalPrice_u] du.
\end{split}
\end{equation*}
Define the function $e_2=e_2(t)$ as 
\begin{equation*}
\begin{split}
e_2(t) = & e_2(0) +  \Big(\fundamentalPrice_0 -\cosh(\ratioAversionOverImpact t)\Expectation[\fundamentalPrice_t]\Big)/ (2\coeffMarketImpact \coeffRiskAversion) \\
& + \frac{1}{2\coeffMarketImpact\squared } \intzerot \sinh(\ratioAversionOverImpact u) \Expectation[\fundamentalPrice_u] du.
\end{split}
\end{equation*}
The general solution to equation \eqref{eq.ODEforExpectedInventory} is
\begin{equation*}
\begin{cases}
\mu^{(1)}_t = e_1(t) \cosh(\ratioAversionOverImpact t ) + e_2(t)\sinh(\ratioAversionOverImpact t) \\
\mu^{(2)}_t = \ratioAversionOverImpact e_1(t) \sinh(\ratioAversionOverImpact t ) + \ratioAversionOverImpact e_2(t)\cosh(\ratioAversionOverImpact t).
\end{cases}
\end{equation*}
 The constraints $\inventory_0 = \mu_0 = \initialInventory$ and $\mu_\timeHorizon  = \liquidationTarget = 0$ impose the choices
 \begin{equation*}
 \begin{cases}
 e_1(0)=\initialInventory \\
 e_2(0)=R+\sinh\inverse (\ratioAversionOverImpact \timeHorizon)\Big[\liquidationTarget - e_1(\timeHorizon)\cosh(\ratioAversionOverImpact \timeHorizon)\Big],
 \end{cases}
 \end{equation*}
 where 
 \begin{equation*}
 R= \frac{\cosh(\ratioAversionOverImpact \timeHorizon)\Expectation[\fundamentalPrice_\timeHorizon] - \fundamentalPrice_0}{2\coeffMarketImpact\coeffRiskAversion}
 -\frac{1}{2\coeffMarketImpact\squared}\intZeroTimeHorizon \sinh(\ratioAversionOverImpact u)\Expectation[\fundamentalPrice_u]du.
 \end{equation*}
 Hence, the constraints $\inventory_0 = \mu_0 = \initialInventory$ and $\mu_\timeHorizon  = \liquidationTarget = 0$ are translated into the initialisation in the statement. 
 
 The case where the map $t\mapsto \Expectation[\fundamentalPrice_t]$ is not differentiable is handled via a standard approximation argument.
\end{proof}

\subsection{Alternative risk criteria}\label{sec.alternativeRiskCriteria}
We present two alternatives to the risk criterion used in Section \ref{sec.IC}. This means that we modify the third summand in the Lagrangian of equation \eqref{eq.LagrangianIC}, and we study the minimisation problem with such a modified Lagrangian.
The first alternative (Section \ref{sec.timeIC}) preserves the same structure but increases the weight of the coefficient $\coeffRiskAversion$ of risk aversion linearly in time.  The second alternative (Section \ref{sec.varInspiredRiskCriterion}) is instead inspired by the value-at-risk for geometric Brownian motion used in \cite{GS11opt}.

\subsubsection{Linearly time-dependent coefficient of risk aversion}\label{sec.timeIC}
The third summand in the Lagrangian $F$ of equation \eqref{eq.LagrangianIC} accounts for the risk aversion. So far this term has been taken constant in the time variable $t$. We now propose a linear $t$-dependence, with higher risk aversion for $t$ closer to the liquidation horizon $\timeHorizon$. More precisely, we consider the Lagrangian
\begin{equation}\label{eq.LagrangianTimeIC}
F(t,S,q,r):= rS + \coeffMarketImpact\squared r\squared +\coeffRiskAversion\squared \, t \,  q\squared,
\end{equation}
where $\coeffMarketImpact>0$ is a coefficient of market impact and $\coeffRiskAversion\geq0$ is a coefficient of risk aversion. For future reference, we set $\ratioAversionOverImpact:=\coeffRiskAversion/\coeffMarketImpact$. We study the minimisation problem in \eqref{eq.optimisationGoodExecution} with $F$ as in equation \eqref{eq.LagrangianTimeIC}.

\begin{lemma}
Let $F$ be as in equation \eqref{eq.LagrangianTimeIC}. Then, $F$ satisfies Assumption \ref{assumption.Lagrangian}. Moreover, the pathwise $F$-weight is  a seminorm on $\sobolevSpaceOneTwo(0,\timeHorizon)$, and the $F$-weight is a seminorm on $\spaceUnbiasedInventoryTrajectoriesInitialConstraint$. If $\coeffRiskAversion>0$, then these seminorms are norms. 
\end{lemma}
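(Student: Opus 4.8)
The plan is to follow the proof of the analogous lemma for the Lagrangian \eqref{eq.LagrangianIC} and to change only what the extra factor $t$ in the inventory-cost term forces. Writing $x=(x_1,x_2,x_3)$ for the triple $(\fundamentalPrice,\inventory,r)$, the first step is to record the reduction $\Lagrangian(t,x)=F(t,x)-x_1 x_3=\coeffMarketImpact\squared x_3\squared+\coeffRiskAversion\squared t\, x_2\squared$ and to compute the densities of Definition \ref{def.pwFnorm}. From $\Lagrangian(t,0,1,0)=\coeffRiskAversion\squared t$ and $\Lagrangian(t,0,0,1)=\coeffMarketImpact\squared$ one gets $\ell_2(t,x)=\coeffRiskAversion\sqrt{t}\,x$ and $\ell_3(t,x)=\coeffMarketImpact\, x$, so that the pathwise $F$-weight \eqref{eq.definitionPathWiseFnorm} takes the explicit weighted-Sobolev form
\[
\lvert \eta\rvert_{F}\squared=\intZeroTimeHorizon\left(\coeffRiskAversion\squared\, t\,\eta_t\squared+\coeffMarketImpact\squared\,\dotEta_t\squared\right)dt,
\]
i.e.\ exactly \eqref{eq.pathwiseSobolevNormIC} with $\coeffRiskAversion\squared$ replaced by $\coeffRiskAversion\squared t$.

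Next I would verify Assumption \ref{assumption.Lagrangian}. Point 1 is immediate, since $\Lagrangian$ is a polynomial in $(t,\inventory,r)$ and $t$ ranges over the bounded interval $(0,\timeHorizon)$, so $\Lagrangian\in\sobolevSpace[1,4]((0,\timeHorizon)\times K)$ for every compact $K\subset\R^{3}$. Point 2 passes unchanged, because $(\partial_{x_2}\Lagrangian)\squared(t,0,x,0)+(\partial_{x_3}\Lagrangian)\squared(t,0,0,x)=4\coeffRiskAversion^{4}t\squared x\squared+4\coeffMarketImpact^{4}x\squared\geq 4\coeffMarketImpact^{4}x\squared$, which vanishes only at $x=0$ as $\coeffMarketImpact>0$. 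The step I expect to be the main obstacle is point 3: now $1/\Lagrangian(t,0,1,0)=1/(\coeffRiskAversion\squared t)$ fails to be square-integrable — indeed fails to be integrable — near $t=0$, because the time-weighted risk coefficient degenerates at the left endpoint of the trading window. The way round it is the mechanism of Remark \ref{remark.degenerateAssumption}: the quantity $1/\Lagrangian(t,0,1,0)$ does not appear on its own in the weight \eqref{eq.definitionPathWiseFnorm}; it cancels inside $\ell_2(t,x)=\coeffRiskAversion\sqrt{t}\,x$, which is bounded on $(0,\timeHorizon)\times K$, and the only genuine requirement downstream is finiteness of $\lvert\cdot\rvert_{F}$ on $\sobolevSpaceOneTwo(0,\timeHorizon)$ — which holds via the embedding $\sobolevSpaceOneTwo(0,\timeHorizon)\hookrightarrow C[0,\timeHorizon]$, since then $\intZeroTimeHorizon\coeffRiskAversion\squared t\,\eta_t\squared\,dt\leq\coeffRiskAversion\squared\timeHorizon\squared\supNorm[\eta]\squared/2<\infty$. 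I would therefore record, in the spirit of Remark \ref{remark.degenerateAssumption}, that for this $F$ the square-integrability requirement on $1/\Lagrangian(t,0,1,0)$ is relaxed to boundedness of $\ell_2$ plus finiteness of the resulting weight; under this reading $F$ satisfies Assumption \ref{assumption.Lagrangian}.

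With the weight well-defined, the remaining claims go through as in Section \ref{sec.IC}. The pathwise $F$-weight is a seminorm on $\sobolevSpaceOneTwo(0,\timeHorizon)$ by Lemma \ref{lemma.SobolevSeminorm} of Appendix \ref{sec.eulerLagrangeInPresenceOfPricePath}; alternatively, $\lvert\eta\rvert_{F}=\sqrt{a(\eta)\squared+b(\eta)\squared}$ where $a(\eta)=(\intZeroTimeHorizon\coeffRiskAversion\squared t\,\eta_t\squared\,dt)^{\half}$ is the $L\squared$-seminorm with respect to the measure $t\,dt$ and $b(\eta)=(\intZeroTimeHorizon\coeffMarketImpact\squared\dotEta_t\squared\,dt)^{\half}$ is a seminorm on $\sobolevSpaceOneTwo(0,\timeHorizon)$, so the triangle inequality for $\lvert\cdot\rvert_{F}$ follows from those for $a$ and $b$ and the fact that $(u,v)\mapsto\sqrt{u\squared+v\squared}$ is a norm on $\R\squared$. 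Taking $\Expectation[\,\cdot\,]^{\half}$ and applying Minkowski in $\Ltwo(\Prob)$ upgrades this to the statement that the $F$-weight of Definition \ref{def.Fnorm} is a seminorm on $\spaceUnbiasedInventoryTrajectoriesInitialConstraint$. Finally, if $\coeffRiskAversion>0$ these seminorms are norms: $\lvert\eta\rvert_{F}=0$ forces $\dotEta=0$ a.e.\ on $(0,\timeHorizon)$ (from the $\coeffMarketImpact\squared\dotEta\squared$ term, $\coeffMarketImpact>0$), hence $\eta$ constant, and forces $t\,\eta_t\squared=0$ for a.e.\ $t$ (from the $\coeffRiskAversion\squared t\,\eta\squared$ term, $\coeffRiskAversion>0$), hence $\eta_t=0$ for a.e.\ $t\in(0,\timeHorizon)$, so $\eta\equiv0$; passing to expectations gives the same on $\spaceUnbiasedInventoryTrajectoriesInitialConstraint$. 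The degenerate case $\coeffRiskAversion=0$ reduces to the pure linear-impact Lagrangian and is handled as in Remark \ref{remark.degenerateAssumption} with the convention $\ell_2\equiv0$.
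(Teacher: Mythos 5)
Your proof is correct and, for the seminorm/norm part, follows the same route as the paper's proof of the analogous lemma in Section \ref{sec.IC}: Lemma \ref{lemma.SobolevSeminorm} applied with $f(t)=\coeffRiskAversion\squared t$ and $g(t)=\coeffMarketImpact\squared$ gives the pathwise seminorm, Minkowski in $\Ltwo(\Prob)$ lifts it to the $F$-weight, and definiteness for $\coeffRiskAversion>0$ follows because $\coeffMarketImpact>0$ forces $\eta$ constant and the weighted term then forces that constant to vanish. The paper omits the proof entirely, and the valuable content of your write-up is the point you flag about Assumption \ref{assumption.Lagrangian}: you are right that point 3 fails literally, since $\Lagrangian(t,0,1,0)=\coeffRiskAversion\squared t$ gives $1/\Lagrangian(t,0,1,0)=1/(\coeffRiskAversion\squared t)$, which is not square-integrable (indeed not even integrable) near $t=0$; and you are also right that Remark \ref{remark.degenerateAssumption} does not literally cover this, being written only for the case $\partial_{x_2}\Lagrangian(\cdot,0,\cdot,0)\equiv 0$. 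Your repair is the correct one: the singular factor $1/\sqrt{\Lagrangian(t,0,1,0)}$ enters only through $\ell_2(t,x)=\partial_{x_2}\Lagrangian(t,0,x,0)/\bigl(2\sqrt{\Lagrangian(t,0,1,0)}\bigr)=\coeffRiskAversion\sqrt{t}\,x$, where the factor $t$ in the numerator cancels the singularity, so the weight \eqref{eq.definitionPathWiseFnorm} reduces to \eqref{eq.pathwiseSobolevNormTimeIC} and is finite on $\sobolevSpaceOneTwo(0,\timeHorizon)$ because the density $t\mapsto\coeffRiskAversion\squared t$ is bounded. So all the substantive conclusions of the lemma hold, but the clause ``$F$ satisfies Assumption \ref{assumption.Lagrangian}'' is an overstatement in the paper that your proof correctly identifies and patches.
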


We denote the pathwise seminorm induced by the pathwise $F$-weight by $\lvert\cdot\rvert_{\coeffMarketImpact, \coeffRiskAversion\sqrt{t}}$. More precisely, we set 
\begin{equation}\label{eq.pathwiseSobolevNormTimeIC}
\lvert \eta \rvert_{\coeffMarketImpact,\coeffRiskAversion\sqrt{t}}\squared := \intZeroTimeHorizon \left( \coeffRiskAversion\squared \, t \,  \eta_t \squared + \coeffMarketImpact\squared \dotEta_t\squared \right)dt,
\end{equation}
for $\eta$ in $\sobolevSpaceOneTwo(0,\timeHorizon)$. 
Moreover, we denote the seminorm on $\spaceUnbiasedInventoryTrajectoriesInitialConstraint$ induced by the $F$-weight  by $\norm[\cdot]_{\coeffMarketImpact, \coeffRiskAversion\sqrt{t}}$.

We proceed with statements analogous to those  in Sections \ref{sec.closedformIC} and \ref{sec.eulerLagrangeIC}, namely: in Proposition \ref{prop.goodTradeExecutionTimeIC} we give a closed-form formula for a good trade execution in the case of the Lagrangian $F$ of \eqref{eq.LagrangianTimeIC}; then, in Proposition \ref{prop.characterisationEulerLagrangeTimeIC} we show that in fact such a good trade execution is unique, and we characterise it as the solution of a random Young differential equation. 
All the arguments are straightforward adaptations from those presented above, and thus we omit the proofs. 

\begin{prop}\label{prop.goodTradeExecutionTimeIC}
Let the Lagrangian $F$ be as in equation \eqref{eq.LagrangianTimeIC}. Let $\airyFirstFunction$ and $\airySecondFunction$ be the first and the second Airy's functions, namely the two independent solutions to the second order linear ordinary differential equation $u^{\prime \prime}(t) - tu(t) = 0$. Define the functions $\alpha$, $\beta$, $\phi$ as follows
\begin{equation*}
\begin{split}
\alpha(t) =& \airyFirstFunction(\ratioAversionOverImpact^{2/3} t),\\
\beta(t) =& \airySecondFunction(\ratioAversionOverImpact^{2/3} t), \\
\phi(t) =& \frac{1}{2\coeffMarketImpact\squared}\intzerot \alpha^{-2}(s)\int_{0}^{s}\alpha(u)d\fundamentalPrice_u \,ds,
\end{split}
\end{equation*}
where the innermost integral in the definition of $\phi$ is the Young integral introduced in Appendix \ref{sec.eulerLagrangeInPresenceOfPricePath}.
Define the constants $c_A$ and $c_B$ as follows
\begin{equation*}
c_A = \frac{\beta(\timeHorizon)\initialInventory - \alpha(\timeHorizon)\beta(0)\Expectation\phi(\timeHorizon)}
{\alpha(0)\beta(\timeHorizon) - \alpha(\timeHorizon)\beta(0)},
\end{equation*}
	\begin{equation*}
c_B = \frac{ \alpha(\timeHorizon)}
{\alpha(0)\beta(\timeHorizon) - \alpha(\timeHorizon)\beta(0)}\Big(\alpha(0)\Expectation\phi(\timeHorizon) - \initialInventory\Big).
\end{equation*}
For $0\leq t \leq \timeHorizon$, define 
\begin{equation}\label{eq.goodTradeExecutionTimeIC}
\inventory_t := c_A \alpha(t) + c_B\beta(t) - \alpha(t) \phi(t).
\end{equation}
Then, $(\inventory_t)_{t\in\timeWindow}$ is a $(C,\xi)$-good trade execution, where
\begin{equation*}
C=1/
\lVert 
\fundamentalPrice_{\timeHorizon} - 2\coeffMarketImpact\squared \dot{\alpha \phi}(\timeHorizon) 
\rVert_{\Ltwo(\Prob)},
\end{equation*}
\begin{equation*}
\xi=1/
\lvert 
\fundamentalPrice_{\timeHorizon} 
+2\coeffMarketImpact\squared c_A\dot{\alpha}(\timeHorizon)
+2\coeffMarketImpact\squared c_B \dot{\beta}(\timeHorizon)
-2\coeffMarketImpact\squared \dot{\alpha \phi}(\timeHorizon)
\rvert,
\end{equation*}
where the symbol $ \dot{\alpha \phi}$ denotes the time derivative of the product function $t\mapsto\alpha(t)\phi(t)$. 
\end{prop}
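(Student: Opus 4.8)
The plan is to follow the proof of Proposition \ref{prop.goodExecutionIC} step by step; the only structurally new feature is that the role of the integrating factor $\cosh(\ratioAversionOverImpact t)$ is now played by the Airy function $\alpha(t)=\airyFirstFunction(\ratioAversionOverImpact^{2/3}t)$. First I would record that $\airyFirstFunction$ has no zeros on $[0,\infty)$, so $\alpha$ is smooth, strictly positive and bounded away from zero on $\timeWindow$, which makes $\alpha\inverse$, and hence $\phi$ and $\dot\phi$, well defined. Rewriting the inner Young integral in $\phi$ through the integration-by-parts formula of Lemma \ref{lemma.integrationByPartsYoungIntegral}, namely $\int_{0}^{s}\alpha(u)d\fundamentalPrice_u=\alpha(s)\fundamentalPrice_s-\alpha(0)\fundamentalPrice_0-\int_{0}^{s}\fundamentalPrice_u\dot\alpha(u)du$, one sees that $\phi$ has absolutely continuous, $(\sigmaAlgebra_t)_t$-adapted paths and that $\phi$ and $\dot\phi$ are square-integrable by the moment hypotheses of Definition \ref{def.priceprocess}; consequently $\inventory$ of \eqref{eq.goodTradeExecutionTimeIC} belongs to $\spaceInventoryTrajectories^{0,\initialInventory}$. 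Since $\phi(0)=0$, the constants $c_A$ and $c_B$ are exactly the ones solving the linear system $c_A\alpha(0)+c_B\beta(0)=\initialInventory$ and $c_A\alpha(\timeHorizon)+c_B\beta(\timeHorizon)-\alpha(\timeHorizon)\Expectation\phi(\timeHorizon)=\liquidationTarget$, so $\inventory_0=\initialInventory$, $\Expectation[\inventory_\timeHorizon]=\liquidationTarget$ and $\inventory\in\spaceUnbiasedInventoryTrajectoriesInitialConstraint$.

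The heart of the argument is the Euler--Lagrange identity, the analogue of \eqref{eq.eulerLagrangeWithCancellationIC} and of Lemma \ref{lemma.characterisationEulerLagrangeIC}. Setting $f_t:=2\coeffMarketImpact\squared\inventoryRate_t+\fundamentalPrice_t$ and differentiating \eqref{eq.goodTradeExecutionTimeIC} (using $2\coeffMarketImpact\squared\dot\phi(t)=\alpha^{-2}(t)\int_{0}^{t}\alpha\,d\fundamentalPrice$), the price process enters $f_t$ only through the combination $\fundamentalPrice_t-\alpha\inverse(t)\int_{0}^{t}\alpha(u)d\fundamentalPrice_u$, which by one more Young integration by parts equals $\alpha\inverse(t)\big(\alpha(0)\fundamentalPrice_0+\int_{0}^{t}\fundamentalPrice_u\dot\alpha(u)du\big)$ and is therefore absolutely continuous; hence $f$ has absolutely continuous paths. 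Differentiating again and invoking the defining Airy identities $\ddot\alpha=\ratioAversionOverImpact\squared\,t\,\alpha$ and $\ddot\beta=\ratioAversionOverImpact\squared\,t\,\beta$ together with $\coeffRiskAversion\squared=\coeffMarketImpact\squared\ratioAversionOverImpact\squared$, the terms carrying the factor $\alpha^{-2}\dot\alpha$ cancel after substituting the integration-by-parts expression for $\int_{0}^{t}\alpha\,d\fundamentalPrice$, leaving $\dot f_t=2\coeffRiskAversion\squared\,t\,\inventory_t$.

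With $\dot f_t=2\coeffRiskAversion\squared t\,\inventory_t$ and $f$ absolutely continuous, the rest is the same quadratic bookkeeping as in Proposition \ref{prop.goodExecutionIC}. For $\eta\in\spaceUnbiasedInventoryTrajectoriesInitialConstraint$ and $e:=\eta-\inventory$ (so $e_0=0$, $\Expectation e_\timeHorizon=0$), expanding the Lagrangian \eqref{eq.LagrangianTimeIC} and integrating by parts gives $\costFunctional(\eta)-\costFunctional(\inventory)=f_\timeHorizon e_\timeHorizon+\lvert e\rvert_{\coeffMarketImpact,\coeffRiskAversion\sqrt{t}}\squared$ with the seminorm \eqref{eq.pathwiseSobolevNormTimeIC}, the boundary term at $0$ vanishing since $e_0=0$. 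Pathwise this is non-negative whenever $\abs{e_\timeHorizon}\leq\lvert e\rvert_{\coeffMarketImpact,\coeffRiskAversion\sqrt{t}}\squared/\abs{f_\timeHorizon}$, so $\xi=1/\abs{f_\timeHorizon}$; reading off $f_\timeHorizon=\fundamentalPrice_\timeHorizon+2\coeffMarketImpact\squared c_A\dot\alpha(\timeHorizon)+2\coeffMarketImpact\squared c_B\dot\beta(\timeHorizon)-2\coeffMarketImpact\squared\dot{\alpha\phi}(\timeHorizon)$ from \eqref{eq.goodTradeExecutionTimeIC} yields the stated $\xi\inverse$. Taking expectations, $\Expectation\costFunctional(\eta)-\Expectation\costFunctional(\inventory)=\Expectation[f_\timeHorizon e_\timeHorizon]+\norm[e]_{\coeffMarketImpact,\coeffRiskAversion\sqrt{t}}\squared$; since the $c_A,c_B$-terms in $f_\timeHorizon$ are deterministic constants and $\Expectation e_\timeHorizon=0$, Cauchy--Schwarz gives $\Expectation[f_\timeHorizon e_\timeHorizon]\leq\Variance^{\half}(f_\timeHorizon)\norm[e_\timeHorizon]_{\Ltwo(\Prob)}\leq\lVert\fundamentalPrice_\timeHorizon-2\coeffMarketImpact\squared\dot{\alpha\phi}(\timeHorizon)\rVert_{\Ltwo(\Prob)}\,\norm[e_\timeHorizon]_{\Ltwo(\Prob)}$, so the expected difference is non-negative on the tubular neighbourhood $\spaceUnbiasedInventoryTrajectoriesInitialConstraint(\inventory,C)$ with $C$ as in the statement. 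This verifies both conditions of Definition \ref{defi.goodTradeExecution}.

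I expect the main obstacle to be the cancellation in the second step: unlike the $\cosh$ case, where $2\coeffMarketImpact\squared\inventoryRate_t+\fundamentalPrice_t$ is manifestly differentiable, here the price enters through the \emph{weighted} Young integral $\int_{0}^{t}\alpha\,d\fundamentalPrice$, and one must combine it with $\fundamentalPrice_t$ and $\alpha\inverse$ via Young integration by parts before absolute continuity is visible; checking that the derivative then collapses exactly to $2\coeffRiskAversion\squared t\,\inventory_t$ is where the Airy equation and the relation $\coeffRiskAversion\squared=\coeffMarketImpact\squared\ratioAversionOverImpact\squared$ are used. A secondary technical point is the measurability and $\Ltwo(\Prob)$-integrability of $\phi$, $\dot\phi$ and of $\fundamentalPrice_\timeHorizon-2\coeffMarketImpact\squared\dot{\alpha\phi}(\timeHorizon)$, which follows from Definition \ref{def.priceprocess} once the Young integrals are rewritten as Lebesgue integrals against $\dot\alpha$; and, exactly as in Proposition \ref{prop.goodExecutionIC}, $\xi=1/\abs{f_\timeHorizon}$ is to be understood as $+\infty$ on the event $\{f_\timeHorizon=0\}$, where condition 2 of Definition \ref{defi.goodTradeExecution} is vacuous.
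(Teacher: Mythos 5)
Your proof is correct and is precisely the "straightforward adaptation" of the proof of Proposition \ref{prop.goodExecutionIC} that the paper alludes to when it omits this argument: the same Euler--Lagrange cancellation $\dot f_t=2\coeffRiskAversion\squared t\,\inventory_t$ followed by the same quadratic expansion and Cauchy--Schwarz step. You have in fact supplied the two details the paper leaves implicit (the positivity of $\airyFirstFunction$ on $[0,\infty)$ making $\alpha\inverse$ and $\phi$ well defined, and the Young integration by parts that exhibits the absolute continuity of $f$), so nothing is missing.
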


The Lagrangian $F$ in equation \eqref{eq.LagrangianTimeIC} satisfies Assumption \ref{assumption.decompositionOfF} from Appendix \ref{sec.eulerLagrangeInPresenceOfPricePath}.  Equation \eqref{eq.strongFormEulerLagrangeEquation} in the present case reads
\begin{equation}\label{eq.eulerLagrangeSystemTimeIC}
\begin{cases}
d\inventory_t=r_{t} dt \\
dr_{t} = \ratioAversionOverImpact\squared \, t \, \inventory_t dt - d\fundamentalPrice_t / 2\coeffMarketImpact\squared .
\end{cases}
\end{equation}
We show that  the good trade execution in Proposition \ref{prop.goodTradeExecutionTimeIC} is characterised as the unique solution to the random Young differential equation \eqref{eq.eulerLagrangeSystemTimeIC}.

\begin{prop}\label{prop.characterisationEulerLagrangeTimeIC}
The good trade execution of Proposition \ref{prop.goodTradeExecutionTimeIC} is characterised as the unique solution to the random Young differential equation \eqref{eq.eulerLagrangeSystemTimeIC} with initialisation
\begin{equation*}
\begin{cases}
\inventory_0 = \initialInventory \\
r_0 = e_A(0) \alpha_0 + e_B (0) \beta_0,
\end{cases}
\end{equation*}
where $\alpha$ and $\beta$ are as in Proposition \ref{prop.goodTradeExecutionTimeIC}, and 
\begin{equation*}
\begin{split}
&
e_A (0) 
=
% &
\frac{ \beta_\timeHorizon \initialInventory - \beta_0 \tilde{K}}{\alpha_0\beta_\timeHorizon - \alpha_\timeHorizon \beta_0}, 
% \\
\qquad
e_B (0) 
% =&
=
\frac{\alpha_0 \tilde{K} - \alpha_\timeHorizon \initialInventory}{\alpha_0\beta_\timeHorizon - \alpha_\timeHorizon \beta_0},
\\
&
\tilde{K} 
% = &
=
\frac{\fundamentalPrice_0}{2\coeffMarketImpact\squared W_0}\left(\alpha_\timeHorizon \beta_0 - \alpha_0 \beta_\timeHorizon \right) 
% \\
% &
+ \frac{1}{2\coeffMarketImpact\squared} \intZeroTimeHorizon \Expectation\left[\fundamentalPrice_u\right] 
\frac{d}{d u} \left(\frac{\alpha_\timeHorizon \beta_u - \beta_\timeHorizon \alpha_u}{W_u}\right) du , 
\\
&
W_t 
% = &
=
\alpha_t \dot{\beta}_t - \dot{\alpha}_t \beta_t. 
\end{split}
\end{equation*}
In particular, the good trade execution in Proposition \ref{prop.goodTradeExecutionTimeIC} is the only good trade execution for the minimisation \eqref{eq.optimisationGoodExecution} with the Lagrangian $F$ as in equation \eqref{eq.LagrangianTimeIC}.
\end{prop}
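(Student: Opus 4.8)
The plan is to reproduce, step by step, the argument that proves Proposition \ref{prop.eulerLagrangeCharacterisationIC}, with the fundamental system $\{\cosh(\ratioAversionOverImpact\,\cdot),\sinh(\ratioAversionOverImpact\,\cdot)\}$ of the homogeneous equation $u''=\ratioAversionOverImpact\squared u$ replaced throughout by the fundamental system $\{\alpha,\beta\}$ of the Airy-type equation $u''=\ratioAversionOverImpact\squared\, t\, u$. I take $\ratioAversionOverImpact>0$ (for $\ratioAversionOverImpact=0$ the Lagrangian \eqref{eq.LagrangianTimeIC} coincides with \eqref{eq.LagrangianIC} and Proposition \ref{prop.eulerLagrangeCharacterisationIC} applies), in which case $\alpha=\airyFirstFunction(\ratioAversionOverImpact^{2/3}\,\cdot)$ is strictly positive on $\timeWindow$, so that $\alpha^{-2}$ is bounded and $\phi$ of Proposition \ref{prop.goodTradeExecutionTimeIC} is well defined. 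There are three steps: (i) every good trade execution solves \eqref{eq.eulerLagrangeSystemTimeIC}; (ii) every solution of \eqref{eq.eulerLagrangeSystemTimeIC} carrying the stated initialisation is a good trade execution; (iii) the solution of \eqref{eq.eulerLagrangeSystemTimeIC} under the two constraints $\inventory_0=\initialInventory$ and $\Expectation[\inventory_\timeHorizon]=\liquidationTarget$ is unique, and this pair of constraints is equivalent to the displayed initialisation.

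Step (i) is immediate: \eqref{eq.LagrangianTimeIC} satisfies Assumption \ref{assumption.decompositionOfF}, equation \eqref{eq.strongFormEulerLagrangeEquation} then specialises to \eqref{eq.eulerLagrangeSystemTimeIC}, and for fixed $\omega$ a good trade execution $\tilde{\inventory}$ minimises $\costFunctional$ over $\tilde{\inventory}(\omega)+\sobolevSpaceCompactSupport(0,\timeHorizon)$, so Proposition \ref{prop.eulerLagrangeNecessity} forces $\tilde{\inventory}(\omega)$ to solve \eqref{eq.eulerLagrangeSystemTimeIC}. For Step (ii), the analogue of Lemma \ref{lemma.characterisationEulerLagrangeIC} shows that if $\inventory$ solves \eqref{eq.eulerLagrangeSystemTimeIC} then $f_t:=2\coeffMarketImpact\squared\inventoryRate_t+\fundamentalPrice_t$ is absolutely continuous with $\dot f_t=2\coeffRiskAversion\squared\, t\,\inventory_t$ (using $\coeffMarketImpact\squared\ratioAversionOverImpact\squared=\coeffRiskAversion\squared$). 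Setting $e:=\eta-\inventory$ with $e_0=0$ and integrating by parts yields $\costFunctional(\eta)-\costFunctional(\inventory)=f_\timeHorizon e_\timeHorizon+\lvert e\rvert_{\coeffMarketImpact,\coeffRiskAversion\sqrt{t}}\squared$; with $e_\timeHorizon=0$ this is nonnegative, so $\inventory$ is $(0,0)$-good, while in general the pathwise bound $\abs{f_\timeHorizon e_\timeHorizon}\leq\abs{f_\timeHorizon}\,\abs{e_\timeHorizon}$ and, since $\Expectation e_\timeHorizon=0$, the bound $\abs{\Expectation[f_\timeHorizon e_\timeHorizon]}\leq\Variance^{\half}(f_\timeHorizon)\,\norm[e_\timeHorizon]_{\Ltwo(\Prob)}\leq\norm[\fundamentalPrice_\timeHorizon-2\coeffMarketImpact\squared\dot{\alpha \phi}(\timeHorizon)]_{\Ltwo(\Prob)}\,\norm[e_\timeHorizon]_{\Ltwo(\Prob)}$ (using that $f_\timeHorizon$ and $\fundamentalPrice_\timeHorizon-2\coeffMarketImpact\squared\dot{\alpha \phi}(\timeHorizon)$ differ by the deterministic quantity $2\coeffMarketImpact\squared(c_A\dot\alpha(\timeHorizon)+c_B\dot\beta(\timeHorizon))$ for the trajectory \eqref{eq.goodTradeExecutionTimeIC}) produce exactly the $C$ and $\xi$ of Proposition \ref{prop.goodTradeExecutionTimeIC}.

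Step (iii) splits as in Lemmas \ref{lemma.twoSolutionsEulerLagrangeIC} and \ref{lemma.uniquenessOfEulerLagrangeIC}. If $\inventory,\tilde{\inventory}$ both solve \eqref{eq.eulerLagrangeSystemTimeIC}, then $\epsilon:=\inventory-\tilde{\inventory}$ is twice differentiable and solves $\ddot\epsilon_t=\ratioAversionOverImpact\squared\, t\,\epsilon_t$, hence $\epsilon_t=K_1\alpha(t)+K_2\beta(t)$ with deterministic $K_1,K_2$; imposing $\epsilon_0=0$ and $\Expectation[\epsilon_\timeHorizon]=\epsilon_\timeHorizon=0$ gives a linear system with determinant $\alpha_0\beta_\timeHorizon-\alpha_\timeHorizon\beta_0$, which is nonzero because the operator $u\mapsto-u''+\ratioAversionOverImpact\squared\, t\, u$ has nonnegative potential on $\timeWindow$ and therefore admits no nontrivial homogeneous Dirichlet solution, so $\epsilon\equiv 0$. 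For the equivalence of the initialisation with the two constraints, write $Y=(\inventory,r)\transpose$ and $\mu_t=\Expectation[Y_t]$; assuming first that $t\mapsto\Expectation[\fundamentalPrice_t]$ is differentiable (the general case following by approximation), $\mu$ solves the linear ODE $\dot\mu_t=A_t\mu_t+\varphi_t D$ with $A_t=\left(\begin{smallmatrix}0&1\\ \ratioAversionOverImpact\squared t&0\end{smallmatrix}\right)$, $D=(0,-1/2\coeffMarketImpact\squared)\transpose$, $\varphi_t=d\Expectation[\fundamentalPrice_t]/dt$; variation of parameters against $(\alpha,\beta)$, whose Wronskian $W_t=\alpha_t\dot\beta_t-\dot\alpha_t\beta_t=\ratioAversionOverImpact^{2/3}/\pi$ is a nonzero constant, writes $\mu^{(1)}_t=e_A(t)\alpha(t)+e_B(t)\beta(t)$, and the boundary conditions $\mu^{(1)}_0=\initialInventory$, $\mu^{(1)}_\timeHorizon=\liquidationTarget$ pin $e_A(0),e_B(0)$ (hence $\tilde K$) to the values in the statement and $r_0=\mu^{(2)}_0=e_A(0)\alpha_0+e_B(0)\beta_0$. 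Combining (i)--(iii) gives the characterisation, and the uniqueness of the initial-value problem gives the uniqueness of the good trade execution.

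The step I expect to be the main obstacle is (iii): proving $\alpha_0\beta_\timeHorizon\neq\alpha_\timeHorizon\beta_0$, i.e.\ non-degeneracy of the two-point problem for the Airy operator on $\timeWindow$, and pushing the variation-of-parameters bookkeeping through without sign slips -- in particular the integration by parts turning $\intZeroTimeHorizon\varphi_u(\cdot)\,du$ into $\intZeroTimeHorizon\Expectation[\fundamentalPrice_u]\,\frac{d}{du}(\cdot)\,du$ and the exact placement of the Wronskian $W_u$ inside $\tilde K$.
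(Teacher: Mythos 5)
Your sketch is exactly the adaptation the paper has in mind: the authors explicitly omit the proof of this proposition, stating that all arguments are straightforward adaptations of those in Sections \ref{sec.closedformIC} and \ref{sec.eulerLagrangeIC}, and your three steps mirror the proof of Proposition \ref{prop.eulerLagrangeCharacterisationIC} with the fundamental system $\{\cosh(\ratioAversionOverImpact\,\cdot),\sinh(\ratioAversionOverImpact\,\cdot)\}$ replaced by the Airy pair $\{\alpha,\beta\}$. The one point where the adaptation is not purely mechanical --- the non-degeneracy $\alpha_0\beta_\timeHorizon\neq\alpha_\timeHorizon\beta_0$ of the two-point problem, which in the hyperbolic case is checked by hand --- you handle correctly via the non-negativity of the potential $\ratioAversionOverImpact\squared\, t$ on $\timeWindow$.
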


\subsubsection{VaR-inspired risk criterion}\label{sec.varInspiredRiskCriterion}

\cite{GS11opt} model the fundamental price $\fundamentalPrice$ as a geometric Brownian motion and they adopt the value-at-risk as measure of risk aversion. This means penalising instantaneous revenues  from trade  by subtracting a term proportional to $\inventory_t \fundamentalPrice_t$ at every time $t$. Inspired by their modelling choices, we now consider the Lagrangian
\begin{equation}\label{eq.LagrangianVaR}
F(t,S,q,r):= rS + \coeffMarketImpact\squared r\squared +\coeffRiskAversion\squared \, q S,
\end{equation}
where $\coeffMarketImpact>0$ is a coefficient of market impact and $\coeffRiskAversion\geq0$ is a coefficient of risk aversion. Notice that in fact $F$ does not depend on $t$. We study the minimisation problem in \eqref{eq.optimisationGoodExecution} with $F$ as in equation \eqref{eq.LagrangianVaR}. 

\begin{lemma}
Let $F$ be as in equation \eqref{eq.LagrangianVaR}. Then, $F$ satisfies Assumption \ref{assumption.Lagrangian}. Moreover, the pathwise $F$-weight is  a seminorm on $\sobolevSpaceOneTwo(0,\timeHorizon)$, and the $F$-weight is a seminorm on $\spaceUnbiasedInventoryTrajectoriesInitialConstraint$.  
\end{lemma}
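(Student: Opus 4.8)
The plan is to verify the three points of Assumption~\ref{assumption.Lagrangian} for $F(t,S,q,r)=rS+\coeffMarketImpact\squared r\squared+\coeffRiskAversion\squared qS$ from equation~\eqref{eq.LagrangianVaR}, and then to read the pathwise $F$-weight off Definition~\ref{def.pwFnorm} in closed form. Writing $(x_1,x_2,x_3)=(S,q,r)$, the associated function is $\Lagrangian(t,x_1,x_2,x_3)=F-x_1x_3=\coeffMarketImpact\squared x_3\squared+\coeffRiskAversion\squared x_1x_2$, a polynomial that is independent of $t$. Hence $F$ is a space-differentiable Caratheodory function, and on any compact $K\subset\R^{3}$ the function $\Lagrangian$ and its gradient (again polynomials) are bounded, so $\Lagrangian\in\sobolevSpace[1,4]((0,\timeHorizon)\times K)$; this is point~1. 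For point~2 I would compute $\partial_{x_2}\Lagrangian(t,0,x,0)=\coeffRiskAversion\squared\cdot 0=0$ and $\partial_{x_3}\Lagrangian(t,0,0,x)=2\coeffMarketImpact\squared x$, so that $(\partial_{x_2}\Lagrangian)\squared(t,0,x,0)+(\partial_{x_3}\Lagrangian)\squared(t,0,0,x)=4\coeffMarketImpact^{4}x\squared$, which vanishes only at $x=0$.

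The one genuine subtlety is point~3: here $\Lagrangian(t,0,1,0)=\coeffMarketImpact\squared\cdot 0+\coeffRiskAversion\squared\cdot 0=0$, so $t\mapsto 1/\Lagrangian(t,0,1,0)$ is not even defined, and Assumption~\ref{assumption.Lagrangian} cannot be invoked verbatim. This is precisely the degenerate situation covered by Remark~\ref{remark.degenerateAssumption}: since $\partial_{x_2}\Lagrangian(t,0,x,0)=0$ for all $t$ and all $x$, and since $t\mapsto 1/\Lagrangian(t,0,0,1)=1/\coeffMarketImpact\squared$ is a positive constant, hence non-negative and square-integrable on the bounded interval $(0,\timeHorizon)$, the requirement on $t\mapsto 1/\Lagrangian(t,0,1,0)$ is dropped and equation~\eqref{eq.definitionPathWiseFnorm} is understood with the convention $\ell_2\equiv 0$. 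In this sense $F$ satisfies Assumption~\ref{assumption.Lagrangian}, exactly as the case $\coeffRiskAversion=0$ was handled in the analogous lemma for the Lagrangian of equation~\eqref{eq.LagrangianIC}.

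It then remains to identify the weights. With $\ell_2\equiv 0$ and $\ell_3(t,x)=\partial_{x_3}\Lagrangian(t,0,0,x)/\big(2\sqrt{\Lagrangian(t,0,0,1)}\big)=2\coeffMarketImpact\squared x/(2\coeffMarketImpact)=\coeffMarketImpact x$, Definition~\ref{def.pwFnorm} gives $\lvert\eta\rvert_{F}\squared=\coeffMarketImpact\squared\intZeroTimeHorizon\dotEta_t\squared\,dt$ for $\eta$ in $\sobolevSpaceOneTwo(0,\timeHorizon)$; that is, $\lvert\eta\rvert_{F}$ is $\coeffMarketImpact$ times the standard $\sobolevSpaceOneTwo$-seminorm $\eta\mapsto\big(\intZeroTimeHorizon\dotEta_t\squared\,dt\big)^{\half}$, which is absolutely homogeneous and subadditive (Minkowski) but annihilates constants, hence a seminorm and not a norm. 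One may equally invoke Lemma~\ref{lemma.SobolevSeminorm} from Appendix~\ref{sec.eulerLagrangeInPresenceOfPricePath}, as in the proof of the corresponding lemma for~\eqref{eq.LagrangianIC}. For the $F$-weight, Definition~\ref{def.Fnorm} yields $\Fnorm{\eta}\squared=\Expectation\big[\lvert\eta\rvert_{F}\squared\big]=\coeffMarketImpact\squared\,\Expectation\intZeroTimeHorizon\dotEta_t\squared\,dt$; for $\eta\in\spaceUnbiasedInventoryTrajectoriesInitialConstraint$ the derivative $\dotEta$ lies in the space $\spaceExecutionRates$ of square-integrable execution rates, so this is finite and equals $\coeffMarketImpact\squared$ times the square of the $\Ltwo(dt\otimes\Prob)$-norm of $\dotEta$; absolute homogeneity and subadditivity descend from the pathwise case (apply Minkowski in $\Ltwo(\Prob)$ to $\lvert\eta-\inventory\rvert_F$), so $\Fnorm{\cdot}$ is a seminorm on $\spaceUnbiasedInventoryTrajectoriesInitialConstraint$ (understood, since $\spaceUnbiasedInventoryTrajectoriesInitialConstraint$ is affine, as a seminorm on the linear space of differences $\eta-\inventory$). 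There is essentially no obstacle in this proof; the only point requiring attention is the degeneracy $\Lagrangian(t,0,1,0)=0$, handled by Remark~\ref{remark.degenerateAssumption}, which also explains why — unlike in the quadratic-inventory-cost case with $\coeffRiskAversion>0$ — these weights remain seminorms and never become norms, the coefficient $\coeffRiskAversion$ not entering the weight at all.
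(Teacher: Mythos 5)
Your proposal is correct and takes essentially the same route as the paper: the paper's proof consists precisely of invoking Remark \ref{remark.degenerateAssumption} to drop the requirement on $t\mapsto 1/\Lagrangian(t,0,1,0)$ because $\partial_{x_2}\Lagrangian(\cdot,0,\cdot,0)\equiv 0$ (the paper writes $\partial_{x_3}$ there, which is evidently a typo for $\partial_{x_2}$, the hypothesis actually appearing in the Remark), leaving the remaining verifications implicit. Your additional explicit checks of points 1 and 2 and the identification $\ell_3(t,x)=\coeffMarketImpact x$, recovering equation \eqref{eq.pathwiseSobolevNormVaR}, are all accurate.
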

\begin{proof}
 Assumption \ref{assumption.Lagrangian} is understood as per Remark \ref{remark.degenerateAssumption}, namely we drop the requirement on $t\mapsto 1/\Lagrangian(t,0,1,0)$ because $\partial_{x_3}\Lagrangian(\cdot,0,\cdot,0)\equiv 0$.
\end{proof}

We denote the pathwise seminorm induced by the pathwise $F$-weight by $\lvert\cdot\rvert_{\coeffMarketImpact}$. More precisely, we set 
\begin{equation}\label{eq.pathwiseSobolevNormVaR}
\lvert \eta \rvert_{\coeffMarketImpact}\squared := \intZeroTimeHorizon  \coeffMarketImpact\squared \dotEta_t\squared dt,
\end{equation}
for $\eta$ in $\sobolevSpaceOneTwo(0,\timeHorizon)$. 
Moreover, we denote the seminorm on $\spaceUnbiasedInventoryTrajectoriesInitialConstraint$ induced by the $F$-weight  by $\norm[\cdot]_{\coeffMarketImpact}$.

We proceed with statements analogous to those  in Sections \ref{sec.closedformIC} and \ref{sec.eulerLagrangeIC}, namely: in Proposition \ref{prop.goodTradeExecutionVaR} we give a closed-form formula for a good trade execution in the case of the Lagrangian $F$ in equation  \eqref{eq.LagrangianVaR}; then, in Proposition \ref{prop.characterisationEulerLagrangeVaR} we show that such a good trade execution is unique, and we characterise it as the solution of a random Young differential equation. 
All the arguments are straightforward adaptations from those presented above, and thus we omit the proofs.

\begin{prop}\label{prop.goodTradeExecutionVaR}
	Let $F$ be as in equation \eqref{eq.LagrangianVaR}. Let $K$ be the constant 
	\begin{equation*}
	K=\frac{1}{2\coeffMarketImpact\squared \timeHorizon}\int_{0}^{\timeHorizon} 
	\left(
	\Expectation\left[\fundamentalPrice_s\right] - \coeffRiskAversion\squared \int_{0}^{s}\Expectation\left[\fundamentalPrice_u\right]du 
	 \right) ds.
	\end{equation*}
	For $0\leq t\leq\timeHorizon$, define
	\begin{equation}\label{eq.goodExecutionVaR}
	\begin{split}
	\inventory_t =& \left(1-\frac{t}{\timeHorizon}\right)\initialInventory + \frac{t}{\timeHorizon}\liquidationTarget \\
	& - \frac{1}{2\coeffMarketImpact\squared}\int_{0}^{t} 
	\left(
	\fundamentalPrice_s - \coeffRiskAversion\squared \int_{0}^{s}\fundamentalPrice_u du 
	\right) ds\\
	& +Kt
	\end{split}
	\end{equation}
	Then, $(\inventory_t)_{t\in\timeWindow}$ is a $(C,\xi)$-good trade execution. The random variable $\xi$ is explicitly given by the formula  
	\begin{equation*}
	\xi\inverse  = \left\lvert
	\frac{2\coeffMarketImpact\squared}{\timeHorizon} \big(\liquidationTarget - \initialInventory \big)
	+2\coeffMarketImpact\squared K 
	+\coeffRiskAversion\squared \int_{0}^{\timeHorizon} \fundamentalPrice_t dt 
	\right\rvert
	\end{equation*}
	and the constant $C$ is explicitly given by the formula $C\inverse=\lVert \xi\inverse \rVert_{\Ltwo(\Prob)}$. 
\end{prop}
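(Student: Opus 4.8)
The plan is to mirror, essentially line for line, the proof of Proposition~\ref{prop.goodExecutionIC}: only the Euler--Lagrange differential identity attached to the Lagrangian~\eqref{eq.LagrangianVaR} and the resulting terminal datum change. First I would check that the trajectory $\inventory$ of~\eqref{eq.goodExecutionVaR} lies in $\spaceUnbiasedInventoryTrajectoriesInitialConstraint$: it is plainly absolutely continuous, $(\sigmaAlgebra_t)_t$-adapted and satisfies $\inventory_0=\initialInventory$, so the only point to verify is $\Expectation[\inventory_\timeHorizon]=\liquidationTarget$. Setting $t=\timeHorizon$ in~\eqref{eq.goodExecutionVaR}, taking expectations, using $\liquidationTarget=0$ and interchanging the time integrals --- legitimate since $t\mapsto\Expectation[\fundamentalPrice_t]$ and $t\mapsto\Expectation[\fundamentalPrice_t\squared]$ are in $L^1[0,\timeHorizon]$ by Definition~\ref{def.priceprocess} --- the equation $\Expectation[\inventory_\timeHorizon]=0$ turns out to be exactly the defining relation for the constant $K$ in the statement. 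The seminorm bookkeeping required by Assumption~\ref{assumption.Lagrangian} is already supplied by the lemma stated just before the proposition, read with the convention $\ell_2\equiv 0$ of Remark~\ref{remark.degenerateAssumption}.

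The heart of the argument is the Euler--Lagrange cancellation. I would set $f_t:=2\coeffMarketImpact\squared\inventoryRate_t+\fundamentalPrice_t$ and differentiate~\eqref{eq.goodExecutionVaR} once: the non-smooth term $-\fundamentalPrice_t/(2\coeffMarketImpact\squared)$ appearing in $\inventoryRate_t$ is cancelled by the $\fundamentalPrice_t$ in $f_t$, leaving
\begin{equation*}
f_t=\frac{2\coeffMarketImpact\squared}{\timeHorizon}\big(\liquidationTarget-\initialInventory\big)+2\coeffMarketImpact\squared K+\coeffRiskAversion\squared\intzerot\fundamentalPrice_u\,du,
\end{equation*}
which is absolutely continuous with $\dot{f}_t=\coeffRiskAversion\squared\fundamentalPrice_t$ for a.e.\ $t$; this is $\frac{d}{dt}\partial_rF=\partial_qF$ for the Lagrangian~\eqref{eq.LagrangianVaR}. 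Then, for an arbitrary competitor $\eta\in\spaceUnbiasedInventoryTrajectoriesInitialConstraint$ I put $e:=\eta-\inventory$ (so $e_0=0$ and $\Expectation[e_\timeHorizon]=0$) and, using $F(t,S,q,r)=rS+\coeffMarketImpact\squared r\squared+\coeffRiskAversion\squared qS$ together with the identity $\dotEta_t\squared-\inventoryRate_t\squared=\dot{e}_t\squared+2\dot{e}_t\inventoryRate_t$, expand
\begin{equation*}
\costFunctional(\eta)-\costFunctional(\inventory)=\intZeroTimeHorizon\big(\dot{e}_t f_t+\coeffRiskAversion\squared e_t\fundamentalPrice_t\big)dt+\lvert e\rvert_{\coeffMarketImpact}\squared.
\end{equation*}
Integrating the first summand by parts and invoking $\dot{f}_t=\coeffRiskAversion\squared\fundamentalPrice_t$ and $e_0=0$ collapses it to $f_\timeHorizon e_\timeHorizon$, giving $\costFunctional(\eta)-\costFunctional(\inventory)=f_\timeHorizon e_\timeHorizon+\lvert e\rvert_{\coeffMarketImpact}\squared$.

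From this identity the two constants are read off exactly as in Proposition~\ref{prop.goodExecutionIC}. Pathwise, the right-hand side is non-negative whenever $\abs{e_\timeHorizon}\leq\lvert e\rvert_{\coeffMarketImpact}\squared/\abs{f_\timeHorizon}$, i.e.\ for every $\eta$ in $\spaceUnbiasedInventoryTrajectoriesInitialConstraint_{\text{pw}}(\inventory,\xi)$ with $\xi=1/\abs{f_\timeHorizon}$, and evaluating the displayed formula for $f_t$ at $t=\timeHorizon$ yields precisely the stated $\xi\inverse$. Taking expectations and bounding $\Expectation[f_\timeHorizon e_\timeHorizon]\leq\norm[f_\timeHorizon]_{\Ltwo(\Prob)}\norm[e_\timeHorizon]_{\Ltwo(\Prob)}$ by Cauchy--Schwarz, the expected difference $\Expectation\costFunctional(\eta)-\Expectation\costFunctional(\inventory)=\Expectation[f_\timeHorizon e_\timeHorizon]+\Fnorm{e}\squared$ is non-negative for every $\eta$ in $\spaceUnbiasedInventoryTrajectoriesInitialConstraint(\inventory,C)$ with $C=1/\norm[f_\timeHorizon]_{\Ltwo(\Prob)}=1/\norm[\xi\inverse]_{\Ltwo(\Prob)}$; here $\norm[\xi\inverse]_{\Ltwo(\Prob)}<\infty$ because $\intZeroTimeHorizon\fundamentalPrice_t\,dt\in\Ltwo(\Prob)$, again by Definition~\ref{def.priceprocess}.

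The one place where the ``straightforward adaptation'' could hide a slip --- and hence the step I would be most careful with --- is the claim $\dot{f}_t=\coeffRiskAversion\squared\fundamentalPrice_t$: one must verify that the possibly irregular (indeed possibly discontinuous) price path enters $\inventoryRate$, and therefore $f$, only through the combination $2\coeffMarketImpact\squared\inventoryRate_t+\fundamentalPrice_t$, in which the non-differentiable contribution disappears, so that this is an honest Lebesgue-almost-everywhere identity and the integration by parts used in the expansion is legitimate --- no Young integral against $\fundamentalPrice$ is ever needed. Everything else is identical to the quadratic-inventory-cost case of Proposition~\ref{prop.goodExecutionIC}.
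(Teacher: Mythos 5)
Your proof is correct and is precisely the ``straightforward adaptation'' of the proof of Proposition \ref{prop.goodExecutionIC} that the paper invokes when it omits this argument: the same cancellation $f_t=2\coeffMarketImpact\squared\inventoryRate_t+\fundamentalPrice_t$ (here with $\dot f_t=\coeffRiskAversion\squared\fundamentalPrice_t$), the same identity $\costFunctional(\eta)-\costFunctional(\inventory)=f_\timeHorizon e_\timeHorizon+\lvert e\rvert_{\coeffMarketImpact}\squared$, and the same reading-off of $\xi=1/\abs{f_\timeHorizon}$ and $C$. Your use of plain Cauchy--Schwarz, $\Expectation[f_\timeHorizon e_\timeHorizon]\leq\norm[f_\timeHorizon]_{\Ltwo(\Prob)}\norm[e_\timeHorizon]_{\Ltwo(\Prob)}$, rather than the variance bound used in Proposition \ref{prop.goodExecutionIC}, is exactly what the stated formula $C\inverse=\lVert\xi\inverse\rVert_{\Ltwo(\Prob)}$ calls for, so the constants match the statement.
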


\begin{remark}
The good trade execution in equation \eqref{eq.goodExecutionVaR} has the same structure of the one in equation \eqref{eq.goodExecutionIC}, namely: a time-dependent convex combination of $\initialInventory$ and $\liquidationTarget$ (first line), a dynamic response to the realisation of the price path (second line), and an adjustment for the constraint $\Expectation[\inventory_\timeHorizon] = \liquidationTarget$ (third line). Moreover, notice that 
\begin{equation*}
\lim_{\coeffRiskAversion\downarrow 0}
\frac{\sinh\left(\frac{\coeffRiskAversion}{\coeffMarketImpact} (\timeHorizon - t)\right)}{\sinh\left(\frac{\coeffRiskAversion}{\coeffMarketImpact} \timeHorizon \right)}
= 1 - 	\lim_{\coeffRiskAversion\downarrow 0}
\frac{\sinh\left(\frac{\coeffRiskAversion}{\coeffMarketImpact}  t\right)}{\sinh\left(\frac{\coeffRiskAversion}{\coeffMarketImpact} \timeHorizon \right)}
= 1- \frac{t}{\timeHorizon},
\end{equation*} 
so that the good trade execution in equation \eqref{eq.goodExecutionVaR} and the good trade execution in equation \eqref{eq.goodExecutionIC} agree when the risk aversion vanishes, i.e. in the limit as $\coeffRiskAversion\downarrow 0$. 
\end{remark}

\begin{remark}
Of the good trade execution in equation \eqref{eq.goodExecutionVaR}, we can compute 
\begin{equation*}
\lVert \inventory_\timeHorizon - \liquidationTarget \rVert_{\Ltwo(\Prob)}
= \frac{1}{2\coeffMarketImpact\squared} 
\Expectation^{\half} \Big[
\big(
\int_{0}^{\timeHorizon}
	\Big\lbrace
		\coeffRiskAversion\squared \intzerot (\fundamentalPrice_u - \Expectation[\fundamentalPrice_u])du 
		-  (\fundamentalPrice_t - \Expectation[\fundamentalPrice_t])
	\Big\rbrace dt 
\big)\squared 
\Big]
\end{equation*}
and estimate 
\begin{equation*}
\Variance(\inventory_\timeHorizon) \leq 
\frac{\timeHorizon}{2\coeffMarketImpact^{4}} \int_{0}^{\timeHorizon}
\left(
\coeffRiskAversion^{4} t \intzerot \Variance(\fundamentalPrice_u)du + \Variance(\fundamentalPrice_t)
\right)dt.
\end{equation*}
Therefore, the two facts presented in Remark \ref{remark.varianceOfLiquidationError} also hold for the good trade execution of Proposition \ref{prop.goodTradeExecutionVaR}.
\end{remark}

The Lagrangian $F$ in equation \eqref{eq.LagrangianVaR}  satisfies Assumption \ref{assumption.decompositionOfF} from Appendix \ref{sec.eulerLagrangeInPresenceOfPricePath}. With this $F$, equation \eqref{eq.strongFormEulerLagrangeEquation} reads
\begin{equation}\label{eq.eulerLagrangeSystemVaR}
\begin{cases}
d\inventory_t=r_{t} dt \\
dr_{t} = \ratioAversionOverImpact\squared \, t \, \fundamentalPrice_t dt/2 - d\fundamentalPrice_t / 2\coeffMarketImpact\squared ,
\end{cases}
\end{equation}
where $\ratioAversionOverImpact:=\coeffRiskAversion/\coeffMarketImpact$.
We now characterise the good trade execution in Proposition \ref{prop.goodTradeExecutionVaR} as the unique solution to the random Young differential equation \eqref{eq.eulerLagrangeSystemVaR}. Observe that equation \eqref{eq.eulerLagrangeSystemVaR} is solved by direct integration and this simplifies several calculations compared to the cases of Sections \ref{sec.eulerLagrangeIC} and \ref{sec.timeIC}. 

\begin{prop}\label{prop.characterisationEulerLagrangeVaR}
The good trade execution of Proposition \ref{prop.goodTradeExecutionVaR} is characterised as the unique solution to the random Young differential equation \eqref{eq.eulerLagrangeSystemVaR} with initialisation
\begin{equation*}
\begin{cases}
\inventory_0 =& \initialInventory \\
r_0 =& \frac{\liquidationTarget - \initialInventory}{\timeHorizon}  +\frac{1}{2\coeffMarketImpact\squared \timeHorizon} \intZeroTimeHorizon \left(\Expectation[\fundamentalPrice_t] - \fundamentalPrice_0 - \coeffRiskAversion\squared \intzerot \Expectation[\fundamentalPrice_u]du\right)dt.
\end{cases}
\end{equation*}
In particular, the good trade execution in Proposition \ref{prop.goodTradeExecutionVaR} is the only good trade execution for the minimisation \eqref{eq.optimisationGoodExecution} with $F$ is as in equation \eqref{eq.LagrangianVaR}.
\end{prop}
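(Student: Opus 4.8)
The plan is to reproduce, essentially line for line, the argument used for Proposition~\ref{prop.eulerLagrangeCharacterisationIC}, which is simpler here for two reasons: the Lagrangian \eqref{eq.LagrangianVaR} is affine in the inventory $\inventory$, so there is no $\inventory$-quadratic remainder in the expansion of $\costFunctional$, and the drift in the $\inventoryRate$-line of \eqref{eq.eulerLagrangeSystemVaR} depends only on the given price path $\fundamentalPrice$, so that system is solvable by quadratures. \emph{Step~1: a good trade execution solves \eqref{eq.eulerLagrangeSystemVaR}.} If $\tilde\inventory$ is a good trade execution then, exactly as in the proof of Proposition~\ref{prop.eulerLagrangeCharacterisationIC}, for every $\omega$ the path $\tilde\inventory(\omega)$ minimises $\costFunctional$ over $\tilde\inventory(\omega)+\sobolevSpaceCompactSupport(0,\timeHorizon)$. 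Since $F$ in \eqref{eq.LagrangianVaR} satisfies Assumption~\ref{assumption.decompositionOfF} with $\Lagrangian(t,x_1,x_2,x_3)=\coeffMarketImpact\squared x_3\squared+\coeffRiskAversion\squared x_1 x_2$ (which, in contrast with the quadratic inventory cost, genuinely involves the price variable), the Euler--Lagrange necessity result from the Appendix, Proposition~\ref{prop.eulerLagrangeNecessity}, shows that $\tilde\inventory(\omega)$ solves the weak form of \eqref{eq.eulerLagrangeSystemVaR}, the Young integral against $\fundamentalPrice$ arising from the $x_1x_3$ part of $F$.

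\emph{Step~2: a solution of \eqref{eq.eulerLagrangeSystemVaR} is a good trade execution.} By the analogue of Lemma~\ref{lemma.characterisationEulerLagrangeIC}, a solution $\inventory$ of \eqref{eq.eulerLagrangeSystemVaR} has the property that $f_t:=2\coeffMarketImpact\squared\inventoryRate_t+\fundamentalPrice_t$ is absolutely continuous with $\dot f_t=\coeffRiskAversion\squared\fundamentalPrice_t$, i.e. $\dot f_t=\partial_{x_2}F$ evaluated at $(\fundamentalPrice_t,\inventory_t,\inventoryRate_t)$. For any competitor $\eta=\inventory(\omega)+e$ with $e$ in $\sobolevSpaceCompactSupport(0,\timeHorizon)$, expanding $\costFunctional(\eta)-\costFunctional(\inventory(\omega))$ produces the first-order part $\intZeroTimeHorizon\dot e_t f_t\,dt+\intZeroTimeHorizon\coeffRiskAversion\squared e_t\fundamentalPrice_t\,dt$ plus the remainder $\intZeroTimeHorizon\coeffMarketImpact\squared\dot e_t\squared\,dt$; integrating the first integral by parts (the boundary term vanishes since $e_0=e_\timeHorizon=0$) and using $\dot f_t=\coeffRiskAversion\squared\fundamentalPrice_t$, the two first-order contributions cancel, leaving $\costFunctional(\eta)-\costFunctional(\inventory(\omega))=\intZeroTimeHorizon\coeffMarketImpact\squared\dot e_t\squared\,dt\ge 0$. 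Hence such an $\inventory$ is a $(0,0)$-good trade execution; the sharper $(C,\xi)$ statement, with the explicit $C$ and $\xi$, is exactly Proposition~\ref{prop.goodTradeExecutionVaR}.

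\emph{Step~3: uniqueness, identification, and the explicit $r_0$.} Mimicking Lemmas~\ref{lemma.twoSolutionsEulerLagrangeIC} and~\ref{lemma.uniquenessOfEulerLagrangeIC}: the difference $\epsilon=\inventory-\tilde\inventory$ of two solutions of \eqref{eq.eulerLagrangeSystemVaR} satisfies $\ddot\epsilon_t=0$ (here the dependence on the inventory disappears entirely from the system), so $\epsilon_t=K_1+K_2t$, and $\epsilon_0=0$, $\Expectation[\epsilon_\timeHorizon]=0$ force $K_1=K_2=0$; thus \eqref{eq.eulerLagrangeSystemVaR} together with $\inventory_0=\initialInventory$ and $\Expectation[\inventory_\timeHorizon]=\liquidationTarget$ admits at most one solution. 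Since the trajectory of Proposition~\ref{prop.goodTradeExecutionVaR} lies in $\spaceUnbiasedInventoryTrajectoriesInitialConstraint$ and, by Step~2, solves \eqref{eq.eulerLagrangeSystemVaR}, it is that unique solution, and combined with Step~1 it is the only good trade execution. Finally, to turn the condition $\Expectation[\inventory_\timeHorizon]=\liquidationTarget$ into a value of $r_0$, one integrates the $\inventoryRate$-line of \eqref{eq.eulerLagrangeSystemVaR} directly (this is where solvability by quadratures is used): assuming first that $t\mapsto\Expectation[\fundamentalPrice_t]$ is differentiable and setting $\mu_t=\Expectation[\inventory_t]$, $\nu_t=\Expectation[\inventoryRate_t]$, one gets $\dot\nu_t$ as a combination of $\Expectation[\fundamentalPrice_t]$ and $\tfrac{d}{dt}\Expectation[\fundamentalPrice_t]$, hence $\nu_t$ by one quadrature and $\mu_\timeHorizon$ by a second; solving the linear system $\mu_0=\initialInventory$, $\mu_\timeHorizon=\liquidationTarget$ for $r_0=\nu_0$ reproduces the stated initialisation, and the general, non-differentiable case follows by the same approximation argument as in Proposition~\ref{prop.eulerLagrangeCharacterisationIC}.

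The only step that is not a pure transcription is Step~1: one has to check that $F$ as in \eqref{eq.LagrangianVaR} meets \emph{all} the hypotheses of the Appendix's Euler--Lagrange theory in the presence of a price path (in particular the required Sobolev regularity of $\Lagrangian$ and the decomposition in Assumption~\ref{assumption.decompositionOfF}), since, unlike in Section~\ref{sec.IC}, here $\Lagrangian$ itself depends on $\fundamentalPrice$ through the $\coeffRiskAversion\squared\inventory\fundamentalPrice$ term; once that verification is in place, everything else is a routine rerun of the quadratic-inventory-cost computations, made lighter by the fact that \eqref{eq.eulerLagrangeSystemVaR} can be solved by direct integration.
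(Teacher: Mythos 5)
Your proposal is correct and is precisely the ``straightforward adaptation'' of the proof of Proposition \ref{prop.eulerLagrangeCharacterisationIC} that the paper invokes when it omits this proof: the same three steps (necessity via Proposition \ref{prop.eulerLagrangeNecessity}, sufficiency via the analogue of Lemma \ref{lemma.characterisationEulerLagrangeIC} and the completion-of-square identity $\costFunctional(\eta)-\costFunctional(\inventory)=f_\timeHorizon e_\timeHorizon+\lvert e\rvert_{\coeffMarketImpact}\squared$, uniqueness from $\ddot{\epsilon}=0$ plus the two constraints, and the quadrature giving $r_0$). Note only that you are implicitly reading the drift in \eqref{eq.eulerLagrangeSystemVaR} as $\ratioAversionOverImpact\squared\fundamentalPrice_t\,dt/2$ (equivalently $\dot f_t=\coeffRiskAversion\squared\fundamentalPrice_t$), which is what Proposition \ref{prop.eulerLagrangeNecessity} and the closed form of Proposition \ref{prop.goodTradeExecutionVaR} actually produce; the extra factor $t$ printed in \eqref{eq.eulerLagrangeSystemVaR} is a typo carried over from the time-dependent case.
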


\section{Applications}\label{sec.applications}
In this section we given two applications of the trading schedule proposed in Proposition \ref{prop.goodExecutionIC}.

Because of the reliance on the expected trajectory $t\mapsto\Expectation[\fundamentalPrice_t]$ of the fundamental price, the use case of our framework is one where such an expected trajectory can serve as a reliable forecast for the price. This means that an implicit mean-reversion is assumed, and we will be choosing our price processes accordingly.   

\subsection{INTC shares: high-frequency mean-reverting jump diffusion}
We consider the liquidation of a large portfolio of shares;  the whole liquidation happens during intraday trading hours and  on a single limit order book.
We model the fundamental price $\fundamentalPrice$ on the mid-price of the order book, according to the following high-frequency mean-reverting jump-diffusion model:
\begin{equation}\label{eq.jumpdiffusion}
 \fundamentalPrice_t
 =
 \exp
 \left(
 m(t)
 +
 Y(t)
 +
 N(t)
 \right),
\end{equation}
where $m(t)$ is $\sigmaAlgebra_0$-measurable, continuous and of finite variation, $Y(t)$ is the Ornstein-Uhlenbeck process $dY(t) = -\alpha Y((t)dt + \sigma dW(t)$, and $N(t)$ is a compound Poisson process  independent from $Y(t)$ and with i.i.d marks symmetrically distributed around zero. Therefore, the expected price trajectory is $\Expectation[\fundamentalPrice_t]=\exp(m(t))$, and it represents the reversion target. The liquidator acts adopting this as her price forecast during the execution. 

As an example, we calibrate our model to the high-frequency NASDAQ order book data available for the trading of INTC on 22 January 2019. The dataset is provided by LOBSTER (\url{https://lobsterdata.com/}). The calibration procedure is straightforward: we first extrapolate the mean-reversion target from the raw data, we then estimate the Ornstein-Uhlenbeck factor from the difference between the log-price and the logarithm of the mean-reversion target, and we finally calibrate the point process on the data points $\lbrace Y(t_i): \, \vert Y(t_i)-Y(t_{i-1})\exp(-\alpha(t_i - t_{i-1})) \vert > k \sigma \sqrt{1-\exp(-2\alpha(t_{i}-t_{i-1}))/\alpha} \rbrace$ that are displaced beyond $k$ standard deviations from the expected value. In the interest of conciseness,  we omit the details of this calibration procedure.  
The so-calibrated price process is displayed in the upper quadrant of Figure \ref{fig.INTCliquidation}, together with the original data stream. 

The lower quadrant in Figure \ref{fig.INTCliquidation} reports the inventory trajectory and its rate of execution computed as a solution to the Cauchy problem of equation \eqref{eq.eulerLagrangeSystemIC} with initialisation \eqref{eq.initialisationEulerLagrangeIC}. 
We remark two aspects observable from the picture. The first aspect is that the liquidation terminates after the initially decided horizon $T=1.0$ for the liquidation. This is because the realisation of the price path dwells for most of the time window of the liquidation below its expected value. Consequently, the liquidator decreases the rate of liquidation, in order to prevent her own price impact from exacerbating her trading cost even further. The trade-off between a limited exposure to market volatility and a parsimonious rate is resolved in favour of the latter.
The second aspect is the reaction to the price jump that happens around $t=0.21$. The jump is upward and hence favourable to the liquidation. Consequently, the liquidator reacts by increasing the rate of execution to exploit this.

\subsection{5Y government bonds: Brownian bridge}
As a second application, we consider an entirely different time scale from the one of intraday INTC stock price, and in fact different from the time scale at which models of trade executions are usually applied. In this sense, the application is at the boundary between trade execution and dynamic portfolio management. 

We consider the sale of 5Y government bonds motivated by market conditions whereby bond yields are negative. Historically, this was observed in 2016 for German bunds and such a phenomenon reappeared in March 2019. Four examples are reported in the upper quadrant of Figure \ref{fig.bonds}, where we show the prices of 5Y German bunds with maturities October 2019, April 2020, October 2020, and April 2021.

The reason for negative yields was twofold. On the one hand, central banks launched programmes to stimulate the economy by cutting interest rates and by injecting liquidity via the so-called quantitative easing. On the other hand, in volatile economic regimes risk-adversed investors tend to move their capital to safe investments such as government bonds (a phenomenon known as flight-to-quality), and this exerted upward pressure on bond prices.  

We consider an investor who wishes to unwind a long position on a negative yielding bond before incurring into the sure loss at the bond's maturity. The sale is done gradually in time, on the one hand because the investor does not want to suddenly loose all the liquidity associated with the bonds, and on the other hand because the investor is mindful of the market impact that an abrupt sale would incur into.  

In the upper quadrant of Figure \ref{fig.bonds} we observe that, after the period of overprice and as the maturity approaches, the trajectories converge on a downward slope towards the face value, which pins the trajectories at maturity. We base our modelling choices on this observation and, in line with a tradition that dates back to the Seventies (\cite{Boy70sto}), we adopt a Brownian bridge as the price process. More precisely, the price process $\fundamentalPrice$ is modelled as 
\begin{equation}\label{eq.priceProcessBond}
d\fundamentalPrice_t = \frac{V - \fundamentalPrice_t}{\timeHorizon - t} dt + \sigma d\brownianMotion_t, 
\end{equation}  
where $V$ is the face value of the bond, $\brownianMotion$ is a standard one-dimensional Brownian motion, and $\sigma$ is the volatility coefficient. In this model for the price process, the expected price path is the line segment from the decision price $\actualPricePath_0$ at time $t=0$ to the face value at maturity. 

The second quadrant in Figure \ref{fig.bonds} shows good trade executions in this setting. The inventory trajectories are simulated following the dynamics in equation \eqref{eq.eulerLagrangeSystemIC} with initialisation \eqref{eq.initialisationEulerLagrangeIC}.

The way in which the good trade executions react to market scenarios is indicative of the dynamic adjustment of inventory trajectories during the liquidation. In particular, in scenarios \#1 and \#3 the liquidation is faster than the static solution; in this way it exploits favourable market conditions. Notice that this is advantageous especially in scenario \#3, where a faster liquidation means that the liquidator concentrates her sale before the price plunges below its expected trend. On the contrary, in scenario \#4, the good inventory trajectory is less steep than the static solution, because unfavourable market conditions recommend to parsimoniously impact the price.  

\section{Conclusions}\label{sec.conclusions}
In this paper, we examined the mathematical models of optimal trade execution with respect to two properties: non-static trajectories and unbiased liquidation errors. Non-static trajectories are those that react to the actual realisation of the price path during the execution, rather than being based only on assumed distributional properties of this price. Secondly, a liquidation error is said to be unbiased if its expectation is zero, entailing that the expected value of the terminal  inventory coincides with the execution target.

We introduced our proposal for execution strategies, which 
enjoy both properties. In particular, in order to have non-static solutions even when the fundamental price is modelled as a martingale, we considered the minimisation of  trading costs from a \emph{pathwise} perspective, rather than the minimisation of \emph{expected} trading costs. 

We considered three risk criteria. The first criterion is the classical quadratic inventory cost; the second is a time-dependent modification of the first; the third was inspired by the value-at-risk employed in \cite{GS11opt}.  For all of them, we derived explicit closed-form formulae of our inventory trajectories. Furthermore, we characterised them through initial value problems that allow to easily implement our strategies in practice. We demonstrated this through two applications, one on the liquidation of INTC shares, the other on the liquidation of German bunds.  

\section*{Acknowledgements}
We are grateful to Eyal Neuman for discussion and suggestions that helped us improve the paper.
We would also like to thank two anonymous referees for their comments and recommendations. 

\section*{}

\begin{center}
\begin{figure}
	\centering
	\begin{tabular}{c}
	 \includegraphics[width=0.89\textwidth]{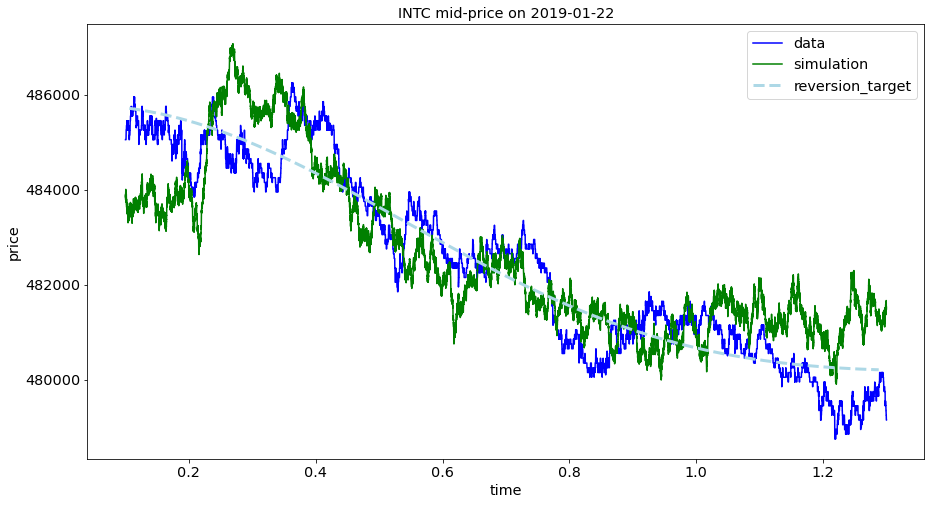}
	 \\
	 \hline
	 \\
	\includegraphics[width=0.89\textwidth]{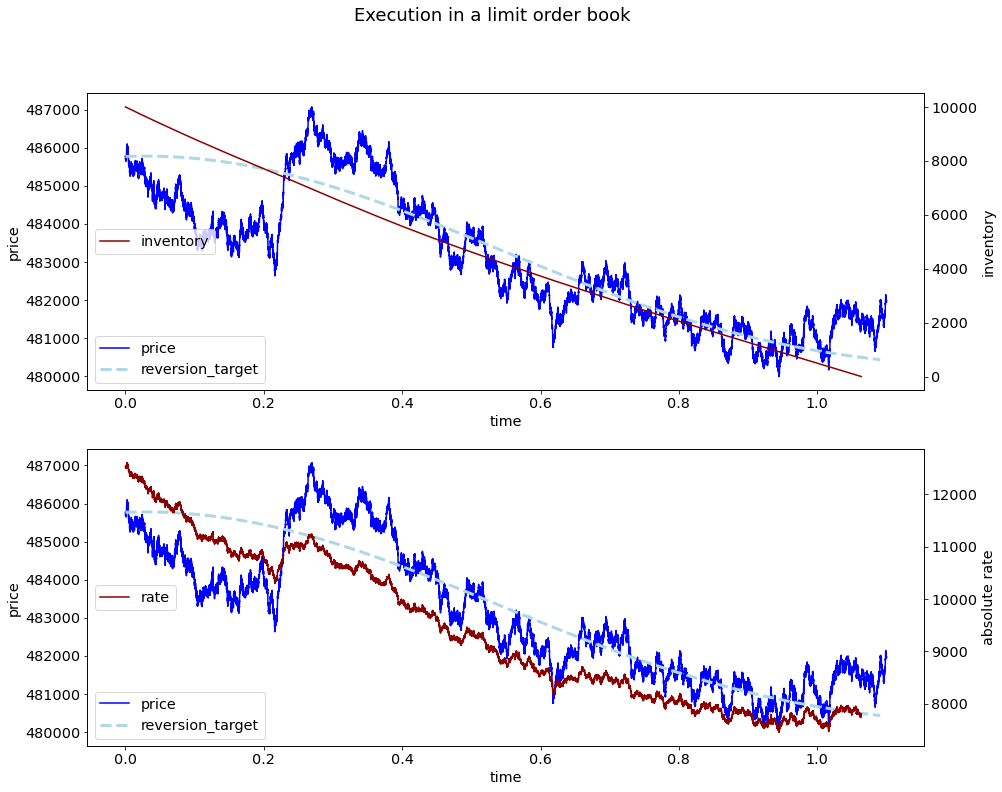} \\
				\begin{small}
			\begin{tabular}{lr | lr}
				initial inventory  $\initialInventory   $ : & 10000.0   & price process : & see equation \eqref{eq.jumpdiffusion} \\
				liquidation target $\liquidationTarget   $ : & 0.0 & coef market impact $ \coeffMarketImpact  $: & 1.35\\
				time horizon of liquidation $\timeHorizon   $ : & 1.0 & coef risk aversion $ \coeffRiskAversion   $ : & 1.15 \\
				initial price $\fundamentalPrice_0 $ : & 485777 & 	\\
			\end{tabular}
		\end{small}
	\end{tabular}
	\caption{{Upper quadrant: Historical intraday mid-price of INTC on 2019-01-22 (10am-3pm) and a sample path of the process in equation \eqref{eq.jumpdiffusion}. Lower quadrant:  Inventory trajectory and rate for the liquidation referring to the sampled price path.}}
	\label{fig.INTCliquidation}
\end{figure}
\end{center}

\begin{center}
 \begin{figure}
\centering
\begin{tabular}{c}
\includegraphics[width=.9\textwidth]{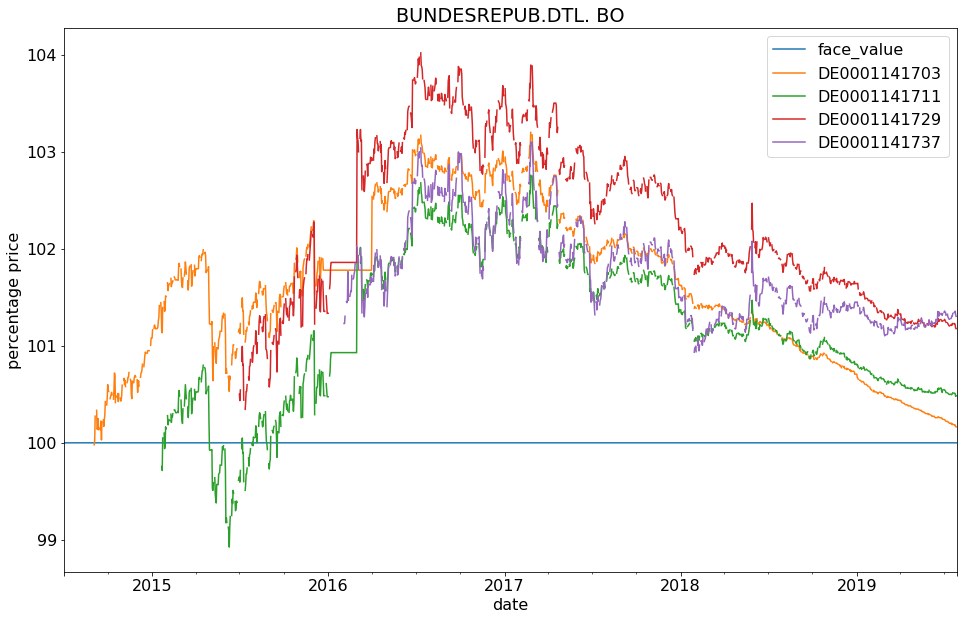}
\\
\begin{tiny}
\begin{tabular}{lllllrr}
	  \textbf{bond name} &          \textbf{ISIN} & \textbf{borrower} & \textbf{issue date} & \textbf{maturity} &  \textbf{cpn} &  \textbf{red.yield}  \\

    	BUNDESREPUB.DTL.BO 2015 ZERO 17/04/20  &  DE0001141711 &         BCKKE & 2015-01-23 &           2020-04-17 &    0.00 &           -0.6459  \\
     	BUNDESREPUB.DTL.BO 2015 1/4\% 16/10/20  &  DE0001141729 &         BCKKE & 2015-07-03 &           2020-10-16 &    0.25 &           -0.6955  \\ 
      	BUNDESREPUB.DTL.BO 2014 1/4\% 11/10/19  &  DE0001141703 &         BCKKE & 2014-09-05 &           2019-10-11 &    0.25 &           -0.4781  \\ 
       	BUNDESREPUB.DTL.BO 2016 ZERO 09/01/21 &  DE0001141737 &         BCKKE & 2016-02-05 &           2021-04-09 &    0.00 &           -0.7484  \\
\end{tabular}
\end{tiny}
\\
\\
\hline\\
\includegraphics[width=0.99\textwidth]{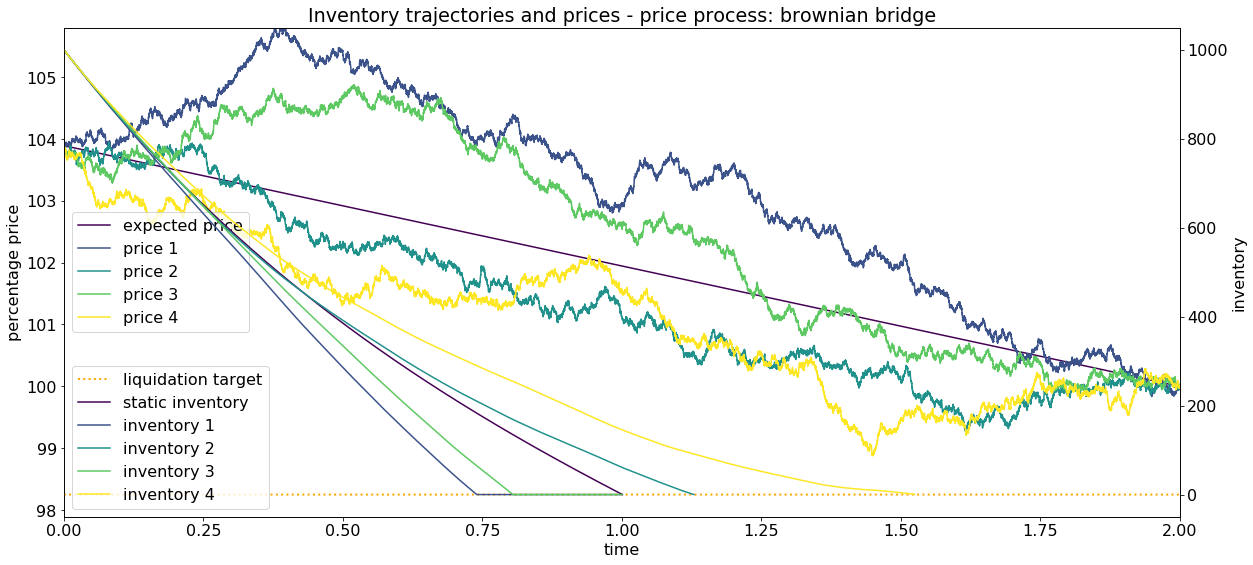} \\
		\begin{small}
			\begin{tabular}{lr | lr}
				initial inventory  $\initialInventory   $ : & 1000.0   & price process : & see equation \eqref{eq.priceProcessBond} \\
				liquidation target $\liquidationTarget   $ : & 0.0 & volatility $\sigma   $ : & 1.1642\\
				time horizon of liquidation $\timeHorizon   $ : & 1.0 & coef market impact $ \coeffMarketImpact  $: & 0.05855\\
				initial price $\fundamentalPrice_0 $ : & 103.893 & 	coef risk aversion $ \coeffRiskAversion   $ : & 0.07341 \\
			\end{tabular}
		\end{small}
	\end{tabular}
	\caption{{Upper quadrant: Historical prices of bonds with negative yields. Lower quadrant: Good trade executions in the context of a Brownian bridge.}}
	\label{fig.bonds}
\end{figure}
\end{center}

\newpage 

\addcontentsline{toc}{section}{References}
\bibliographystyle{apalike}
\bibliography{bibliography}

\begin{appendices}
\section{Euler-Lagrange equations with price paths} \label{sec.eulerLagrangeInPresenceOfPricePath}
All the statements in this appendix are pathwise, hence we leave probability out of the picture. In particular, the symbol $\fundamentalPrice$ will denote a (possibly discontinuous) single path of finite $p$-variation, for some $p\geq 1$. We consider the following functions spaces. The space $\smoothCompactlySupportedFunctions(0,\timeHorizon)$ is the space of smooth functions with compact support in the open interval $(0,\timeHorizon)$. The Sobolev space $\sobolevSpaceOneTwo(0,\timeHorizon)$ is the space of absolutely continuous functions $u$ on the closed interval $\timeWindow$ such that $u$ and its derivative  $\dot{u}$ are square integrable over $\timeWindow$. The Sobolev space  $\sobolevSpaceOneTwo(0,\timeHorizon)$ is endowed with the norm 
\begin{equation*}
\norm[u]\squared _{\sobolevSpaceOneTwo} = \int_{0}^{\timeHorizon} \left(u_t\squared + \dot{u}_t\squared \right) dt. 
\end{equation*}
The space $\sobolevSpaceOneTwoCompactSupport(0,\timeHorizon)$ is defined as the closure of $\smoothCompactlySupportedFunctions(0,\timeHorizon)$ in  $\sobolevSpaceOneTwo(0,\timeHorizon)$. 
\begin{lemma}\label{lemma.SobolevSeminorm}
 Let $f$ and $g$ be non-negative bounded measurable functions defined on $(0,\timeHorizon)$. Then,
 \begin{equation*}
  \norm[u]
  :=
  \left(
  \int_{0}^{\timeHorizon} 
  \left(
  u\squared(t)f(t)+\dot{u}\squared(t)g(t)
  \right)
  dt
  \right)^{\frac{1}{2}}
 \end{equation*}
is a seminorm on $\sobolevSpaceOneTwo(0,\timeHorizon)$.
\end{lemma}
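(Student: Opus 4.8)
The plan is to check the three defining properties of a seminorm on $\sobolevSpaceOneTwo(0,\timeHorizon)$: non-negativity, absolute homogeneity, and subadditivity. The first two are immediate from the formula, so the content of the statement is the triangle inequality. Before turning to it I would dispose of well-definedness: for $u$ in $\sobolevSpaceOneTwo(0,\timeHorizon)$ both $u$ and $\dot u$ lie in $\Ltwo(0,\timeHorizon)$, and since $f$ and $g$ are bounded,
\[
\int_{0}^{\timeHorizon}\!\big(u\squared(t) f(t) + \dot u\squared(t) g(t)\big)\,dt \;\leq\; \supNorm[f]\!\int_{0}^{\timeHorizon}\! u\squared(t)\,dt \;+\; \supNorm[g]\!\int_{0}^{\timeHorizon}\!\dot u\squared(t)\,dt \;<\;\infty ;
\]
moreover the integrand is non-negative because $f,g\geq 0$, so $\norm[u]$ is a well-defined element of $[0,\infty)$. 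From the explicit formula one reads off $\norm[u]\geq 0$ and $\norm[\lambda u] = \abs{\lambda}\,\norm[u]$ for every $\lambda$ in $\R$.

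For the triangle inequality the cleanest route is to exhibit $\norm[\cdot]$ as the pullback of a genuine Hilbert-space norm. Define the linear map $\Theta:\sobolevSpaceOneTwo(0,\timeHorizon)\to \Ltwo\big((0,\timeHorizon);\R\squared\big)$ by $\Theta u := \big(\sqrt{f}\,u,\ \sqrt{g}\,\dot u\big)$, which is well defined by the estimate above; by construction $\norm[u]$ equals the norm of $\Theta u$ in $\Ltwo\big((0,\timeHorizon);\R\squared\big)$. Since the latter space carries a genuine norm, which therefore satisfies the triangle inequality, and since $\Theta$ is linear, for all $u,v$ in $\sobolevSpaceOneTwo(0,\timeHorizon)$ we obtain $\norm[u+v] = \lVert \Theta u + \Theta v\rVert \leq \lVert \Theta u\rVert + \lVert \Theta v\rVert = \norm[u] + \norm[v]$. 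Alternatively one could apply Minkowski's inequality in $\Ltwo(0,\timeHorizon)$ to each of the two summands separately and then combine with the triangle inequality in $\R\squared$, but the pullback formulation avoids any bookkeeping.

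There is essentially no obstacle here. The only point requiring the stated hypotheses is the finiteness of the defining integral, which is exactly where boundedness of $f$ and $g$ is used, while non-negativity of $f$ and $g$ is what makes the square root legitimate. It is worth noting that the lemma asserts only a seminorm and not a norm: indeed $\norm[u] = 0$ as soon as $u$ vanishes almost everywhere on $\{f>0\}$ and $\dot u$ vanishes almost everywhere on $\{g>0\}$, so no positive-definiteness is claimed and none needs to be established.
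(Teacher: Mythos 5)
Your proof is correct, and the ``alternative'' you mention in passing --- Minkowski's inequality in each weighted $\Ltwo(0,\timeHorizon)$ summand followed by the triangle inequality in $\R\squared$ --- is exactly the paper's argument; your preferred formulation via the linear map $\Theta u = (\sqrt{f}\,u,\sqrt{g}\,\dot u)$ into $\Ltwo\big((0,\timeHorizon);\R\squared\big)$ is the same computation packaged as the general fact that a norm precomposed with a linear map is a seminorm. Your additional remarks on well-definedness (where boundedness of $f$ and $g$ enters) and on why only a seminorm, not a norm, can be claimed are correct and slightly more explicit than the paper, which leaves them implicit.
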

\begin{proof}
 Notice that for $u$ in $\sobolevSpaceOneTwo(0,T)$ we can write $\norm[u]\squared = $ $\norm[u]_{f}\squared + \norm[\dot{u}]_{g}\squared$, where $\norm_{f}$ is the $\Ltwo$ norm associated with the measure $f(t)dt$, and $\norm_{g}$ is the $\Ltwo$ norm associated with the measure $g(t)dt$. Therefore, the triangle inequality is proved by the following chain of inequalities:
 \begin{equation*}
  \begin{split}
   \norm[u+v]
   \leq &
   \left( (\norm[u]_f + \norm[v]_f)\squared
   + (\norm[\dot{u}]_g + \norm[\dot{v}]_g)\squared
   \right)^{\half}
   \\
   \leq &
   \left( \norm[u]_f\squared + \norm[\dot{u}]_g\squared \right)^{\half}
   +\left( \norm[v]_f\squared + \norm[\dot{v}]_g\squared \right)^{\half},
  \end{split}
 \end{equation*}
where on the first line we used Minkowski inequality in $\Ltwo_{f}$ and $\Ltwo_{g}$, and on the second line we applied Minkowski inequality in $\ell\squared$. 
\end{proof}

\begin{defi}[``Caratheodory function'']\label{def.caratheodory}
Let $U$ be an open set in $\Rn$. The function $F:U\times \Rd \rightarrow \R$ is called Caratheodory function  if:
1. for almost every $t$ in $U$, the map $x \mapsto F(t,x)$ is continuous;
2. for every $x$ in $\Rd$, the function $t\mapsto F(t,x)$ is Lebesgue-measurable. 
\end{defi}

\begin{defi}\label{def.space-diff_caratheodory}
Let  $U$ be an open subset of $\R_+$. Let $F:U\times \Rd \rightarrow \R$ be a  Caratheodory function. We say that $F$ is space-differentiable if for all $t$ in $U$ the map $x\mapsto F(t,x)$ is in $C^{1}(\Rd)$. 
\end{defi}
In this appendix we consider Lagrangians $F$ for which the following assumption holds.
\begin{assumption}\label{assumption.decompositionOfF}
Let $F:(0,\timeHorizon)\times \R^{3} \rightarrow \R$ be a Caratheodory function. It is assumed that 
\begin{equation}\label{eq.abstractDecompositionOfLagrangian}
F(t,x_1,x_2,x_3) = x_1 x_3 + \Lagrangian (t,x_1,x_2,x_3),
\end{equation}
where $\Lagrangian$ is a space-differentiable Caratheodory function on $(0,\timeHorizon)\times \R^{3}$ such that  there exist a function $\alpha$ in $L^{1}(0,\timeHorizon)$ and a constant $\beta\geq 0$ such that 
\begin{equation}
\label{eq.growthConditionOnLarangian}
\begin{split}
\abs{\Lagrangian(t,x_1,x_2,x_3)}, 
\abs{\partial_{x_2}\Lagrangian(t,x_1,x_2,x_3)}, &
\abs{\partial_{x_3}\Lagrangian(t,x_1,x_2,x_3)}\\
\leq & \quad  \alpha(t) + \beta\Big(x_1\squared + x_2\squared + x_3\squared\Big),
\end{split}
\end{equation}
for all $t$ in $(0,\timeHorizon)$ and all $x_1$,  $x_2$, $x_3$ in $\R$.
\end{assumption}

The space variables $x=(x_1,x_2,x_3)$ are alternatively labelled as $(x_1,x_2,x_3) = (\fundamentalPrice,q,r)$. Hence, we interpret $x_1$ as the placeholder for the variable $\fundamentalPrice$ denoting the fundamental price, we interpret $x_2$ as the placeholder for the variable $\inventory$ denoting the inventory, and we interpret $x_3$ as the placeholder for the variable $r$ denoting the rate of execution. In the following, we adopt these letters for the space variables; hence, in particular $\partial_q \Lagrangian$ denotes the derivative of $\Lagrangian$ with respect to the variable $x_2$, and $\partial_r \Lagrangian$ denotes the derivative of $\Lagrangian$ with respect to the variable $x_3$.

\begin{remark}
When $\Lagrangian$ does not depend on $x_1$, the  decomposition in equation \eqref{eq.abstractDecompositionOfLagrangian} has the form of  the decomposition in equation \eqref{eq.decompositionOfLagrangian}. However, we do not restrict to that case in this appendix. In particular, our treatment encompasses the case of the functional used in \cite{GS11opt}, which we discuss in Section \ref{sec.varInspiredRiskCriterion}. 
\end{remark}

Let $\fundamentalPrice = (\fundamentalPrice_t)_{0\leq t \leq \timeHorizon}$ be a path on $\timeWindow$ and assume that $\fundamentalPrice$ is of finite $p$-variation for some $p\geq 1$. Let $F$ be a Caratheodory function satisfying Assumption \ref{assumption.decompositionOfF}.  Let $\costFunctional$ be the functional 
\begin{equation}\label{eq.definitionOfCostFunctionalForBoundaryValueProblem}
\begin{split}
\costFunctional (\eta) := \intZeroTimeHorizon F(t,\fundamentalPrice_t,\eta_t,\dotEta_t) dt , 
% & \\ 
% 0
\qquad \eta \in \sobolevSpaceOneTwo(0,\timeHorizon).
\end{split}
\end{equation}
Let $q_0$ be in $\sobolevSpaceOneTwo(0,\timeHorizon)$ and consider the affine space
\begin{equation*}
 q_0 +\sobolevSpaceOneTwoCompactSupport(0,\timeHorizon) = \left\lbrace
 q_0 + u: \, u \in \sobolevSpaceOneTwoCompactSupport(0,\timeHorizon)
 \right\rbrace.
\end{equation*}
We study the minimisation problem
\begin{equation}
\label{eq.minimisationBoundaryValueProblem}
\inf\left\lbrace
\costFunctional (\eta): \, \eta \in q_0 +\sobolevSpaceOneTwoCompactSupport(0,\timeHorizon)
\right\rbrace.
\end{equation}

\begin{prop}[``Weak form of Euler-Lagrange equation'']\label{prop.weakFormEulerLagrange}
Let $F$ be as in Assumption \ref{assumption.decompositionOfF}. Let $\Lagrangian(t,x_1,x_2,x_3):= F(t,x_1,x_2,x_3)-x_1 x_3$.  Assume that $q$ is a minimiser for \eqref{eq.minimisationBoundaryValueProblem}. Then, for all $\psi$ in $\smoothCompactlySupportedFunctions(0,\timeHorizon)$ it holds 
\begin{equation}
\label{eq.weakFormEulerLagrange}
\intZeroTimeHorizon \big( \fundamentalPrice_t + \partial_r \Lagrangian (t,S_t,q_t,\dot{q}_t) \big ) \dot{\psi}_t dt 
= 
- \intZeroTimeHorizon \psi_t \partial_q \Lagrangian (t,S_t,q_t,\dot{q}_t) \, dt .
\end{equation}

Conversely, if $(x_2, x_3)\mapsto F(t,x_1,x_2,x_3)$ is convex for all $t$ and all $x_1$, then \eqref{eq.weakFormEulerLagrange} is a sufficient condition on $q \in q_0 +  \sobolevSpaceOneTwoCompactSupport(0,\timeHorizon)$ for being a minimiser for \eqref{eq.minimisationBoundaryValueProblem}.
\end{prop}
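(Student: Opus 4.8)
The plan is to prove the two implications by the classical route of the calculus of variations, letting the growth condition \eqref{eq.growthConditionOnLarangian} carry all the integrability bookkeeping.

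\emph{Necessity.} Fix $\psi$ in $\smoothCompactlySupportedFunctions(0,\timeHorizon)$. Since $\psi$ vanishes near the endpoints, $q+\epsilon\psi$ lies in $q_0+\sobolevSpaceOneTwoCompactSupport(0,\timeHorizon)$ for every $\epsilon$ in $\R$, so $g(\epsilon):=\costFunctional(q+\epsilon\psi)$ is finite (by the quadratic bound on $\Lagrangian$) and attains a global minimum at $\epsilon=0$; it then suffices to show $g$ is differentiable at $0$ and to compute $g'(0)$. Using the decomposition \eqref{eq.abstractDecompositionOfLagrangian}, the contribution of $x_1x_3$ is $\intZeroTimeHorizon \fundamentalPrice_t(\dot{q}_t+\epsilon\dot{\psi}_t)\,dt$, which is affine in $\epsilon$. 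For the $\Lagrangian$-part I differentiate under the integral sign: by the mean value theorem (valid since $x\mapsto\Lagrangian(t,x)$ is $C^1$) the difference quotient at step $\epsilon$ equals $\psi_t\,\partial_q\Lagrangian(t,\fundamentalPrice_t,q_t+\theta\psi_t,\dot{q}_t+\theta\dot{\psi}_t)+\dot{\psi}_t\,\partial_r\Lagrangian(t,\dots)$ for some $\theta=\theta(t,\epsilon)$ in $(0,1)$, and for $\abs{\epsilon}\le 1$ this is bounded in absolute value, via \eqref{eq.growthConditionOnLarangian}, by a fixed multiple of $\alpha(t)+\fundamentalPrice_t\squared+q_t\squared+\dot{q}_t\squared+1$ — an $L^1(0,\timeHorizon)$ function, because $\alpha\in L^1$, $\fundamentalPrice$ is bounded, $q$ is continuous on $\timeWindow$, and $\dot{q}\in\Ltwo\timeWindow$. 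Dominated convergence gives $g'(0)=\intZeroTimeHorizon \fundamentalPrice_t\dot{\psi}_t\,dt+\intZeroTimeHorizon\big(\psi_t\,\partial_q\Lagrangian(t,\fundamentalPrice_t,q_t,\dot{q}_t)+\dot{\psi}_t\,\partial_r\Lagrangian(t,\fundamentalPrice_t,q_t,\dot{q}_t)\big)\,dt$, and setting this equal to zero and rearranging is exactly \eqref{eq.weakFormEulerLagrange}.

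\emph{Sufficiency, Step 1.} The key preliminary observation is that \eqref{eq.weakFormEulerLagrange} already forces the conjugate momentum $t\mapsto\fundamentalPrice_t+\partial_r\Lagrangian(t,\fundamentalPrice_t,q_t,\dot{q}_t)$ to be absolutely continuous. The same bookkeeping as above shows $\partial_q\Lagrangian(t,\dots)$ and $\fundamentalPrice_t+\partial_r\Lagrangian(t,\dots)$ both lie in $L^1(0,\timeHorizon)$; setting $G(t):=\int_0^t\partial_q\Lagrangian(s,\fundamentalPrice_s,q_s,\dot{q}_s)\,ds$ and integrating by parts, \eqref{eq.weakFormEulerLagrange} becomes $\intZeroTimeHorizon\big(\fundamentalPrice_t+\partial_r\Lagrangian(t,\dots)-G(t)\big)\dot{\psi}_t\,dt=0$ for all $\psi$ in $\smoothCompactlySupportedFunctions(0,\timeHorizon)$, whence the fundamental lemma of the calculus of variations (du Bois-Reymond form) produces a constant $c$ with $\fundamentalPrice_t+\partial_r\Lagrangian(t,\dots)=G(t)+c$ almost everywhere; write $f:=G+c$, an absolutely continuous, hence bounded, function with $\dot{f}=\partial_q\Lagrangian(t,\dots)$ a.e.

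\emph{Sufficiency, Step 2, and the main obstacle.} Let $\eta=q+u$ with $u$ in $\sobolevSpaceOneTwoCompactSupport(0,\timeHorizon)$. Convexity of $(x_2,x_3)\mapsto F(t,x_1,x_2,x_3)$ together with its $C^1$ regularity gives the pointwise subgradient inequality $F(t,\fundamentalPrice_t,\eta_t,\dot{\eta}_t)\ge F(t,\fundamentalPrice_t,q_t,\dot{q}_t)+\partial_q\Lagrangian(t,\dots)\,u_t+\big(\fundamentalPrice_t+\partial_r\Lagrangian(t,\dots)\big)\dot{u}_t$; all three terms are integrable (the first two by \eqref{eq.growthConditionOnLarangian}; the last because $\fundamentalPrice+\partial_r\Lagrangian=f$ is bounded and $\dot{u}\in\Ltwo\timeWindow$), so integrating over $\timeWindow$ and then integrating the last term by parts — using $u(0)=u(\timeHorizon)=0$ and $\dot{f}=\partial_q\Lagrangian(t,\dots)$ — the first-variation integral collapses to $\intZeroTimeHorizon f\dot{u}\,dt=-\intZeroTimeHorizon\dot{f}u\,dt=-\intZeroTimeHorizon\partial_q\Lagrangian(t,\dots)u\,dt$, which cancels against the other half, yielding $\costFunctional(\eta)\ge\costFunctional(q)$. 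The only genuinely delicate point is hidden here: under merely the quadratic growth with $L^1$ weight in \eqref{eq.growthConditionOnLarangian}, $\partial_r\Lagrangian(t,\fundamentalPrice_t,q_t,\dot{q}_t)$ is a priori only in $L^1$, not in $\Ltwo$, so one cannot naively approximate $u$ by $\smoothCompactlySupportedFunctions$ functions and pass to the limit in $\intZeroTimeHorizon\partial_r\Lagrangian(t,\dots)\dot{u}\,dt$. The remedy built into the plan is to extract from \eqref{eq.weakFormEulerLagrange} itself, in Step 1, that $\fundamentalPrice+\partial_r\Lagrangian$ agrees a.e. with an absolutely continuous, hence bounded, function; after that the integration by parts against $\sobolevSpaceOneTwoCompactSupport(0,\timeHorizon)$ test functions is unproblematic and no density argument is needed.
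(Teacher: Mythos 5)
Your proof is correct. The necessity half is essentially the paper's argument (Gateaux differentiation of $\epsilon\mapsto\costFunctional(q+\epsilon\psi)$ at $0$, with the growth bound \eqref{eq.growthConditionOnLarangian} justifying the passage to the limit), only with the mean-value-theorem/dominated-convergence bookkeeping spelled out. The sufficiency half, however, takes a genuinely different route. The paper applies the convexity inequality and then disposes of the first-order terms by saying that $\eta-q\in\sobolevSpaceOneTwoCompactSupport(0,\timeHorizon)$ ``is approximated by $\psi$ in $\smoothCompactlySupportedFunctions(0,\timeHorizon)$'' and invoking \eqref{eq.weakFormEulerLagrange} in the limit; as you correctly point out, this density step is delicate, because under \eqref{eq.growthConditionOnLarangian} the integrand $\partial_r\Lagrangian(t,\fundamentalPrice_t,q_t,\dot q_t)$ is a priori only in $L^1(0,\timeHorizon)$ (it is controlled by $\alpha(t)+\beta(\fundamentalPrice_t\squared+q_t\squared+\dot q_t\squared)$ with $\dot q\in\Ltwo$ only), so convergence of $\dot\psi_n\to\dot u$ in $\Ltwo$ does not by itself give convergence of $\intZeroTimeHorizon\partial_r\Lagrangian\,\dot\psi_n\,dt$. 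Your remedy — first upgrading \eqref{eq.weakFormEulerLagrange} via the du Bois-Reymond lemma to the statement that $\fundamentalPrice+\partial_r\Lagrangian$ coincides a.e.\ with the absolutely continuous (hence bounded) primitive $G+c$ of $\partial_q\Lagrangian$, and only then integrating the subgradient inequality by parts against an arbitrary $u\in\sobolevSpaceOneTwoCompactSupport(0,\timeHorizon)$ — closes this gap cleanly and makes the density argument unnecessary. What the paper's route buys is brevity; what yours buys is a rigorous treatment of the $L^1$-integrand issue, plus the absolute continuity of the conjugate momentum as a by-product (which the paper in any case needs later, cf.\ Lemma \ref{lemma.characterisationOfSolutionToSecondOrderRDE}). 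Both halves of your argument are sound.
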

\begin{proof}
For $\psi$ in $\smoothCompactlySupportedFunctions(0,\timeHorizon)$ it holds 
\begin{equation*}
\begin{split}
\lim_{\epsilon \downarrow 0} \frac{1}{\epsilon }\intZeroTimeHorizon \big[
	\Lagrangian &(t, S_t, q_t + \epsilon \psi_t, \dot{q}_t + \epsilon \dot{\psi}_t ) -  \Lagrangian (t,S_t,q_t, \dot{q}_t  )
\big] dt \\
& = \intZeroTimeHorizon \big[
	\psi_t \partial_q  \Lagrangian (t,S_t,q_t, \dot{q}_t  ) + \dot{\psi}_t \partial_r  \Lagrangian (t,S_t,q_t, \dot{q}_t  )
\big] dt, 
\end{split}
\end{equation*}
owing to assumption \eqref{eq.growthConditionOnLarangian} on the growth of $\Lagrangian$ and its space derivatives. For all $\psi$ in $\smoothCompactlySupportedFunctions(0,\timeHorizon)$ and all $\epsilon>0$ we have that 
\begin{equation*}
\costFunctional \big( q+\epsilon\psi \big) \geq \costFunctional \big( q\big),
\end{equation*}
and thus necessity of \eqref{eq.weakFormEulerLagrange} is established.

Conversely, assume that $(x_2, x_3)\mapsto F(t,x_1,x_2,x_3)$ is convex for all $t$ and all $x_1$. Assume that $q$ in $q_0 +  \sobolevSpaceOneTwoCompactSupport(0,\timeHorizon)$ satisfies equation \eqref{eq.weakFormEulerLagrange}. Let $\eta$ be arbitrary in $q_0 +  \sobolevSpaceOneTwoCompactSupport(0,\timeHorizon)$, and let $t$ be arbitrary in $\timeWindow$. By convexity it holds
\begin{equation*}
\begin{split}
 F(t,S_t,\eta_t,\dot{\eta}_t)
 \geq
 &
 F(t,S_t,q_t,\dot{q}_t)
 + \partial_q \Lagrangian(t,S_t,q_t,\dot{q}_t)(\eta_t - q_t)
 \\
 &
 + 
 \left[
 S_t + \partial_r \Lagrangian(t,S_t,q_t,\dot{q}_t)
 \right]
 (\dot{\eta}_t - \dot{q}_t) 
 \end{split}.
\end{equation*}
Notice that $\eta - q$ is in $\sobolevSpaceOneTwoCompactSupport(0,\timeHorizon)$ and thus it is approximated by  $\psi$ in $\smoothCompactlySupportedFunctions(0,\timeHorizon)$. Therefore, if we integrate the latter inequality in $dt$ and use \eqref{eq.weakFormEulerLagrange}, we conclude 
$ \int_{0}^{T}F(t,S_t,\eta_t,\dot{\eta}_t) dt
 \geq
 \int_{0}^{T}
 F(t,S_t,q_t,\dot{q}_t)dt$. 
\end{proof}

As an application of the sufficient condition for optimality established in Proposition \ref{prop.weakFormEulerLagrange}, we derive the optimal static inventory trajectory in the case of linear temporary market impact and quadratic inventory cost. 

\begin{corol}\label{corol.staticOptimalSolIC}
Let $\fundamentalPrice$ be a price process as defined in Definition \ref{def.priceprocess}. Let $\initialInventory$, $\coeffMarketImpact$ and $\coeffRiskAversion$ be positive real numbers, and let $\ratioAversionOverImpact:=\coeffRiskAversion/\coeffMarketImpact$. Let $\alpha$ and $K$ be as in Proposition \ref{prop.goodExecutionIC}.  For $0\leq t \leq \timeHorizon$ define
\begin{equation*}
\begin{split}
\inventory_t:=  
\big(1-\alpha(t)\big) \initialInventory +\alpha(t)\liquidationTarget
- \frac{1}{2\coeffMarketImpact\squared}
\int_{0}^{t} \cosh\big(\ratioAversionOverImpact (t-u)\big)\Expectation[\fundamentalPrice_u]du 
+ K \sinh(\ratioAversionOverImpact t).
\end{split}
\end{equation*}
Then, the function $t\mapsto \inventory_t$ is a minimiser for
\begin{equation*}
 \inf
 \left\lbrace
 \Expectation
 \left[
 \int_{0}^{\timeHorizon}
 \left(
 \fundamentalPrice_t \dot{\eta}_t + \coeffMarketImpact\squared \dot{\eta}_{t}\squared + 
 \coeffRiskAversion\squared \eta_{t}\squared
 \right)
 dt
 \right] : \, \, 
 \eta \in \spaceInventoryTrajectoriesStatic^{0,\initialInventory}
 \right\rbrace.
\end{equation*}
\end{corol}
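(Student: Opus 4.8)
The plan is to reduce the static stochastic problem to the purely deterministic variational problem \eqref{eq.minimisationBoundaryValueProblem} of the appendix and then invoke the sufficiency part of Proposition \ref{prop.weakFormEulerLagrange}. First I would observe that every $\eta$ in $\spaceInventoryTrajectoriesStatic^{0,\initialInventory}$ is deterministic, so — the integrands being in $L^{1}(dt\otimes\Prob)$ by Cauchy--Schwarz together with the integrability in Definition \ref{def.priceprocess} — Fubini gives
\[
\Expectation\left[\intZeroTimeHorizon\left(\fundamentalPrice_t\dotEta_t+\coeffMarketImpact\squared\dotEta_t\squared+\coeffRiskAversion\squared\eta_t\squared\right)dt\right]=\intZeroTimeHorizon\left(\Expectation[\fundamentalPrice_t]\,\dotEta_t+\coeffMarketImpact\squared\dotEta_t\squared+\coeffRiskAversion\squared\eta_t\squared\right)dt .
\]
Hence the static minimisation coincides with minimising the functional $\costFunctional$ of \eqref{eq.definitionOfCostFunctionalForBoundaryValueProblem} for the Lagrangian $F(t,x_1,x_2,x_3)=x_1x_3+\coeffMarketImpact\squared x_3\squared+\coeffRiskAversion\squared x_2\squared$ and the deterministic ``price path'' $\bar{\fundamentalPrice}_t:=\Expectation[\fundamentalPrice_t]$, over $q_0+\sobolevSpaceOneTwoCompactSupport(0,\timeHorizon)$ for any reference $q_0$ with $q_0(0)=\initialInventory$ and $q_0(\timeHorizon)=\liquidationTarget$. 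Here $\bar{\fundamentalPrice}\in L^{2}(0,\timeHorizon)$ by the Jensen bound $(\Expectation[\fundamentalPrice_t])\squared\leq\Expectation[\fundamentalPrice_t\squared]$ and Definition \ref{def.priceprocess}, this $F$ satisfies Assumption \ref{assumption.decompositionOfF} with $\Lagrangian=\coeffMarketImpact\squared x_3\squared+\coeffRiskAversion\squared x_2\squared$ (smooth, quadratic growth), and $(x_2,x_3)\mapsto F$ is convex.

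Next I would check that the function $t\mapsto\inventory_t$ of the statement lies in this affine space and solves the weak Euler--Lagrange equation \eqref{eq.weakFormEulerLagrange}. Admissibility is immediate: $\inventory$ is deterministic and absolutely continuous with $\inventoryRate\in L^{2}(0,\timeHorizon)$ (again using $\bar{\fundamentalPrice}\in L^{2}$); $\inventory_0=\initialInventory$ since the integral and the $\sinh$ term vanish at $0$; and $\inventory_\timeHorizon=\liquidationTarget$ precisely by the choice of $K$. Differentiating the convolution $t\mapsto\int_0^{t}\cosh(\ratioAversionOverImpact(t-u))\Expectation[\fundamentalPrice_u]\,du$ — legitimate for $\Expectation[\fundamentalPrice_\cdot]\in L^{1}$, the kernel being $C^{1}$ — gives $\inventoryRate_t$ explicitly, and then the same direct differentiation as in the proof of Proposition \ref{prop.goodExecutionIC}, with $\fundamentalPrice$ replaced by $\Expectation[\fundamentalPrice]$ and using $\ratioAversionOverImpact\squared=\coeffRiskAversion\squared/\coeffMarketImpact\squared$ together with $\alpha''=\ratioAversionOverImpact\squared(\alpha-1)$, shows that $f_t:=\Expectation[\fundamentalPrice_t]+2\coeffMarketImpact\squared\inventoryRate_t$ is absolutely continuous with $\dot f_t=2\coeffRiskAversion\squared\inventory_t$. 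Since $\partial_r\Lagrangian=2\coeffMarketImpact\squared x_3$ and $\partial_q\Lagrangian=2\coeffRiskAversion\squared x_2$, integrating $f$ by parts against any $\psi$ in $\smoothCompactlySupportedFunctions(0,\timeHorizon)$ turns $\dot f_t=2\coeffRiskAversion\squared\inventory_t$ into precisely \eqref{eq.weakFormEulerLagrange}.

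Finally, convexity of $(x_2,x_3)\mapsto F$ lets me invoke the converse half of Proposition \ref{prop.weakFormEulerLagrange}: $\inventory$ minimises $\costFunctional$ over $q_0+\sobolevSpaceOneTwoCompactSupport(0,\timeHorizon)$, hence over $\spaceInventoryTrajectoriesStatic^{0,\initialInventory}$ once the reduction of the first paragraph is undone. Equivalently, and self-containedly, for static $\eta$ one sets $e=\eta-\inventory$ (so $e_0=e_\timeHorizon=0$ and $e\in\sobolevSpaceOneTwoCompactSupport(0,\timeHorizon)$) and expands the quadratic to obtain
\[
\costFunctional(\eta)-\costFunctional(\inventory)=\intZeroTimeHorizon\big(f_t\dot e_t+2\coeffRiskAversion\squared\inventory_t e_t\big)dt+\intZeroTimeHorizon\big(\coeffMarketImpact\squared\dot e_t\squared+\coeffRiskAversion\squared e_t\squared\big)dt ,
\]
where the first integral vanishes after integration by parts by $\dot f_t=2\coeffRiskAversion\squared\inventory_t$ and $e_0=e_\timeHorizon=0$, and the second is non-negative.

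The computations are routine; the points deserving care are the Fubini interchange (where Definition \ref{def.priceprocess}(2) and the Jensen bound enter) and the observation that, although Proposition \ref{prop.weakFormEulerLagrange} is stated for a price path of finite $p$-variation, here no integral against $d\bar{\fundamentalPrice}$ ever appears — $\int\bar{\fundamentalPrice}_t\dot\psi_t\,dt$ is an ordinary Lebesgue integral — so $\bar{\fundamentalPrice}\in L^{1}(0,\timeHorizon)$ is all that is needed; this is the one subtlety I would flag explicitly.
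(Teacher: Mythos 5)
Your proof is correct and follows essentially the same route as the paper: reduce the static problem via Fubini to the deterministic variational problem with price path $\Expectation[\fundamentalPrice_t]$, verify that $\tfrac{d}{dt}\big(2\coeffMarketImpact\squared\dot{\inventory}_t+\Expectation[\fundamentalPrice_t]\big)=2\coeffRiskAversion\squared\inventory_t$, and invoke the sufficiency half of Proposition \ref{prop.weakFormEulerLagrange} via convexity. Your additional checks (the Jensen/Cauchy--Schwarz justification of the interchange, and the observation that only $\Expectation[\fundamentalPrice_\cdot]\in L^{1}$ is needed since no Young integral against the expected path appears) are details the paper leaves implicit, but the argument is the same.
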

\begin{proof}
 Let 
 \begin{equation}\label{eq.FinStaticIC}
  F(x_1,x_2,x_3):= x_1 x_3 + \coeffMarketImpact\squared x_{3}\squared + \coeffRiskAversion\squared x_{2}\squared,
 \end{equation}
and notice that $(x_2, x_3)\mapsto F(x_1,x_2,x_3)$ is convex for all $x_1$.
For all $\eta$ in $\spaceInventoryTrajectoriesStatic^{0,\initialInventory}$ it holds
\begin{equation*}
 \Expectation
 \left[
 \int_{0}^{\timeHorizon}
 \left(
 \fundamentalPrice_t \dot{\eta}_t + \coeffMarketImpact\squared \dot{\eta}_{t}\squared + 
 \coeffRiskAversion\squared \eta_{t}\squared
 \right)
 dt
 \right]
 =
 \int_{0}^{\timeHorizon}
 F\big(\Expectation[\fundamentalPrice_t], \eta_t,\dot{\eta}_t\big) dt.
\end{equation*}
Let $\inventory$ be as in the statement. We have that
\begin{equation*}
 \frac{d}{dt}\Big(
 2\coeffMarketImpact\squared \dot{\inventory}_t +\Expectation[\fundamentalPrice_t]
 \Big)
 =
 2\coeffRiskAversion\squared \inventory_t.
\end{equation*}
Hence, an application of the integration-by-parts formula shows that  $\inventory$ satisfies \eqref{eq.weakFormEulerLagrange} with $F$ as in \eqref{eq.FinStaticIC} and $\fundamentalPrice_t$ replaced by $\Expectation[\fundamentalPrice_t]$.
\end{proof}

Equation \eqref{eq.weakFormEulerLagrange} is the weak form of the Euler-Lagrange equation. In the next  paragraph, we introduce the pathwise integration with respect to the path $\fundamentalPrice$. This will allow to move from the condition in equation \eqref{eq.weakFormEulerLagrange} to the stronger formulation of the Euler-Lagrange equation. 

\subsubsection*{Young integration and integration-by-parts formula}
 Let $\timeWindow$ be the closed time interval from time zero to the time horizon $\timeHorizon>0$. We consider partitions $\partition$ of $\timeWindow$. 
 A partition $\partition$ is simultaneously considered  as the finite collection of points and as the finite collection of adjacent subintervals. Given a partition $\partition$ of $\timeWindow$ and a time instant $t$ in $\timeWindow$,  we adopt the following notational convention: 
 \begin{equation}\label{eq.notationAboutPartition}
 t\derivative:=  \inf \lbrace u \in \partition : \, u> t \rbrace .
 \end{equation} 
 The mesh-size $\abs{\partition}$ of the partition $\partition$ is defined as 
 \begin{equation*}
 \abs{\pi} :=  \sup \lbrace \abs{u\derivative - u}: \, u \in \pi \rbrace.
 \end{equation*}
 
Let $f$ be a function on $\timeWindow$. If $s$ and $t$ are  in $\timeWindow$,  we let $f_{s,t}:=f_t - f_s$ denote  the increment of $f$ from $s$ to $t$. The resulting two-parameter function $f=f_{s,t}$ is additive, in that for all $s\leq u\leq t$ it holds $f_{s,u} + f_{u,t} = f_{s,t}$. For more general two-parameter functions  defined on $\simplex$ we can relax the additivity and consider the following property. 
\begin{defi}
Let $f=f(s,t)$ be a function on $\simplex$. We say that $f$ is super-additive if  for all $s\leq u\leq t$ it holds 
\begin{equation*}
f(s,u) + f(u,t) \leq f(s,t). 
\end{equation*}
\end{defi}

\begin{lemma}\label{lemma.noCorrectionTermsInIntegrationByParts}
Let $f$ be a super-additive   function on $\simplex$ such that it is null and uniformly continuous on the diagonal, i.e. $f(t,t) \equiv 0$ and 
	\begin{equation}\label{eq.uniformContinuityOnTheDiagonal}
	\lim_{\epsilon\downarrow 0} \sup \left\lbrace f(s,t): \, \abs{t-s} \leq \epsilon\right\rbrace
	=0.
	\end{equation}
Let $g$ be a  super-additive function on $\simplex$. Let $0< a \leq 1$ and $0<b<1$ be exponents such that $a+b>1$. Then,
 \begin{equation*}
 \lim_{\abs{\partition}\downarrow 0 } \sum_{u\in\partition} f^{a} g^{b} (u,u\derivative) = 0, 
 \end{equation*}
 where the limit is taken along arbitrary sequences of partitions of $\timeWindow$  with mesh-size tending to zero. 
\end{lemma}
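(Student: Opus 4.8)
The plan is to concentrate all of the vanishing behaviour into a single factor and to dispose of everything else by a crude, partition-independent estimate. Throughout I take $f$ and $g$ to be non-negative (as they are in every application of the lemma, and as the fractional powers $f^{a}$, $g^{b}$ implicitly require), and I record the two elementary consequences of super-additivity that drive the argument. First, iterating $h(s,u)+h(u,t)\leq h(s,t)$ along the points of any partition $\partition$ of $\timeWindow$ telescopes to $\sum_{u\in\partition}h(u,u\derivative)\leq h(0,\timeHorizon)$; applied to $h=f$ and to $h=g$ this produces two bounds that are uniform in $\partition$. Second, $0\leq h(s,t)\leq h(0,\timeHorizon)<\infty$ for every $(s,t)$, so the diagonal modulus $\omega_f(\epsilon):=\sup\{f(s,t):\abs{t-s}\leq\epsilon\}$ is finite, and by hypothesis it tends to $0$ as $\epsilon\downarrow 0$.

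The key step is an exponent decomposition. Since $a+b>1$, the number $\delta:=a+b-1$ is strictly positive (and $\delta\leq b<1$ because $a\leq 1$), and one can write $f^{a}g^{b}=f^{\delta}\cdot\big(f^{\,1-b}g^{b}\big)$ with $1-b\in(0,1)$. Fixing a partition $\partition$ with $\abs{\partition}\leq\epsilon$, every subinterval $[u,u\derivative]$ has length at most $\epsilon$, so $f(u,u\derivative)\leq\omega_f(\epsilon)$ and hence $f^{\delta}(u,u\derivative)\leq\omega_f(\epsilon)^{\delta}$, which gives
\begin{equation*}
\sum_{u\in\partition}f^{a}g^{b}(u,u\derivative)\;\leq\;\omega_f(\epsilon)^{\delta}\sum_{u\in\partition}f^{\,1-b}(u,u\derivative)\,g^{b}(u,u\derivative).
\end{equation*}

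The remaining sum I would estimate by Hölder's inequality with the conjugate exponents $p=1/(1-b)$ and $q=1/b$ (both exceeding $1$, with $1/p+1/q=1$), chosen precisely so that the powers recombine to $f$ and $g$:
\begin{equation*}
\sum_{u\in\partition}f^{\,1-b}(u,u\derivative)\,g^{b}(u,u\derivative)\;\leq\;\Big(\sum_{u\in\partition}f(u,u\derivative)\Big)^{1-b}\Big(\sum_{u\in\partition}g(u,u\derivative)\Big)^{b}\;\leq\;f(0,\timeHorizon)^{1-b}\,g(0,\timeHorizon)^{b},
\end{equation*}
where the last inequality is the telescoping bound for the super-additive functions $f$ and $g$. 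Combining the two displays, every partition with $\abs{\partition}\leq\epsilon$ satisfies $\sum_{u\in\partition}f^{a}g^{b}(u,u\derivative)\leq\omega_f(\epsilon)^{\delta}\,f(0,\timeHorizon)^{1-b}\,g(0,\timeHorizon)^{b}$, and letting $\abs{\partition}\downarrow 0$, hence $\epsilon\downarrow 0$, the right-hand side vanishes by uniform continuity of $f$ on the diagonal. I do not anticipate a real obstacle: the single genuine idea is to notice that the hypothesis $a+b>1$ is exactly what allows one to split off the strictly positive power $f^{\delta}$ that soaks up the oscillation of $f$, leaving behind a Hölder pair $(1-b,b)$ for which bare super-additivity — with no continuity assumption needed on $g$ — already yields a bound independent of $\partition$; everything else is bookkeeping.
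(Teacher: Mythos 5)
Your proof is correct and is essentially identical to the paper's: the paper also splits off the factor $f^{\epsilon}$ with $\epsilon=a+b-1$ (your $\delta$), applies H\"older with exponents $1/(1-b)$ and $1/b$ to the remaining sum, and bounds the resulting sums by $f(0,\timeHorizon)$ and $g(0,\timeHorizon)$ via super-additivity before invoking uniform continuity on the diagonal. The only difference is notational (you bound the split-off factor by the modulus $\omega_f(\epsilon)^{\delta}$ directly rather than by $\sup_{u\in\partition}f^{\epsilon}(u,u\derivative)$).
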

\begin{proof}
Let $0<\epsilon<a$ be such that $a-\epsilon + b =1$. Then,
\begin{equation*}
\begin{split}
\sum_{u\in\partition} f^a g^b (u,u\derivative) \leq  & 
\left(\sup_{u\in\partition} f ^{\epsilon} (u,u\derivative)\right) 
\cdot 
\sum_{u\in\partition} f^{a-\epsilon}g^{b} (u,u\derivative) \\
\leq & 
\left(\sup_{u\in\partition} f ^{\epsilon} (u,u\derivative)\right) 
\cdot 
\left( 
	\sum_{u\in\partition} f(u,u\derivative) 
\right)^{a-\epsilon}
\left( 
\sum_{u\in\partition} g(u,u\derivative) 
\right)^{b} \\
\leq & 
\left(\sup_{u\in\partition} f ^{\epsilon} (u,u\derivative)\right) 
\cdot f^{a-\epsilon}(0,\timeHorizon) g^{b}(0,\timeHorizon). 
\end{split}
\end{equation*}
On the second line we have applied H\"older inequality and on the third line we have used super-additivity. We conclude by recalling the uniform continuity of equation \eqref{eq.uniformContinuityOnTheDiagonal}. 
\end{proof}

\begin{lemma}
\label{lemma.integrationByPartsYoungIntegral}
Let $\eta$ be an absolutely continuous path on the closed interval $\timeWindow$. Let $\fundamentalPrice$ be a path on $\timeWindow$ of finite $p$-variation for some $p\geq 1$. Then, for all $0\leq s \leq t \leq \timeHorizon$  the limit 
\begin{equation*}\label{eq.definitionOfYoungIntegral}
 \lim_{\abs{\partition}\downarrow 0 } \sum_{u\in\partition} \eta_{u} \fundamentalPrice_{u,u\derivative}
\end{equation*}
exists  and is the same along any sequence of partitions of $[s,t]$. Such a limit defines the Young integral 
\begin{equation*}
\int_{s}^{t} \eta_u d\fundamentalPrice_u .
\end{equation*}
 Moreover, the following integration-by-parts formula holds 
 \begin{equation*}
 \int_{s}^{t} \eta_u d\fundamentalPrice_u  + \int_{s}^{t} \fundamentalPrice_u d\eta_u = \eta_t \fundamentalPrice_t - \eta_s\fundamentalPrice_s,  
 \end{equation*}
 where the second integral on the left hand side is the Stieltjes integral of $\fundamentalPrice$ against $\eta$. 
\end{lemma}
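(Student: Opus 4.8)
The plan is to prove the integration-by-parts identity already at the level of Riemann sums, and then to read off \emph{both} the existence of the Young integral \emph{and} the stated formula from that identity, with Lemma~\ref{lemma.noCorrectionTermsInIntegrationByParts} doing the work of killing the ``correction'' term. First I would reduce to the case $p>1$, which loses no generality: a path of finite $1$-variation also has finite $3/2$-variation, and the definition of the integral given below does not refer to the value of $p$.

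Before the main step I would record the facts needed. A path of finite $p$-variation on $\timeWindow$ is regulated, hence bounded and Borel measurable, so $\fundamentalPrice$ is Lebesgue-integrable on every subinterval; since $\eta$ is absolutely continuous it is continuous and of bounded variation with $\eta_{u,v}=\int_u^v\dot\eta_w\,dw$. Consequently the Stieltjes integral $\int_s^t\fundamentalPrice_u\,d\eta_u$ exists, equals the Lebesgue integral $\int_s^t\fundamentalPrice_u\dot\eta_u\,du$, and is the limit as $\abs{\partition}\downarrow0$ of its left-point sums $\sum_{u\in\partition}\fundamentalPrice_u\eta_{u,u\derivative}$; the last point I would justify by a three-$\epsilon$ argument, approximating the regulated integrand $\fundamentalPrice$ uniformly by step functions and using absolute continuity of $v\mapsto\int_s^v\abs{\dot\eta_w}\,dw$.

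For the core of the proof, fix $0\le s\le t\le\timeHorizon$ and a partition $\partition$ of $[s,t]$. From the elementary identity $\eta_{u\derivative}\fundamentalPrice_{u\derivative}-\eta_u\fundamentalPrice_u=\eta_u\fundamentalPrice_{u,u\derivative}+\fundamentalPrice_u\eta_{u,u\derivative}+\eta_{u,u\derivative}\fundamentalPrice_{u,u\derivative}$, summing over $u\in\partition$ and telescoping the left-hand side gives
\begin{equation*}
\eta_t\fundamentalPrice_t-\eta_s\fundamentalPrice_s=\sum_{u\in\partition}\eta_u\fundamentalPrice_{u,u\derivative}+\sum_{u\in\partition}\fundamentalPrice_u\eta_{u,u\derivative}+\sum_{u\in\partition}\eta_{u,u\derivative}\fundamentalPrice_{u,u\derivative}.
\end{equation*}
I would then show that the last sum tends to $0$ by applying Lemma~\ref{lemma.noCorrectionTermsInIntegrationByParts} with $f(a,b):=\int_a^b\abs{\dot\eta_w}\,dw$ and $g(a,b):=\pvarNormInterval[\fundamentalPrice]{p}{[a,b]}^{\,p}$ for $(a,b)\in\simplex$: here $f$ is additive, null and uniformly continuous on the diagonal since $\dot\eta\in L^1(s,t)$, $g$ is super-additive because a partition of $[a,c]$ joined to a partition of $[c,b]$ is a partition of $[a,b]$, and $\abs{\eta_{u,u\derivative}\fundamentalPrice_{u,u\derivative}}\le f(u,u\derivative)\,g(u,u\derivative)^{1/p}=f^{\,a}g^{\,b}(u,u\derivative)$ with $a=1$ and $b=1/p$ satisfying $0<a\le1$, $0<b<1$, $a+b>1$. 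Letting $\abs{\partition}\downarrow0$ in the displayed equality, the second sum converges to $\int_s^t\fundamentalPrice_u\,d\eta_u$ and the third to $0$, so the first sum must converge as well, necessarily to $\eta_t\fundamentalPrice_t-\eta_s\fundamentalPrice_s-\int_s^t\fundamentalPrice_u\,d\eta_u$, and this common limit is independent of the chosen sequence of partitions. I would then \emph{define} $\int_s^t\eta_u\,d\fundamentalPrice_u$ to be this limit, so that the integration-by-parts formula holds by construction.

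I expect the only mildly delicate points to be the verification that a path of finite $p$-variation is regulated --- this is what legitimises the Stieltjes term and the convergence of its left-point sums --- and the small amount of bookkeeping that recasts the correction term $\sum_{u\in\partition}\eta_{u,u\derivative}\fundamentalPrice_{u,u\derivative}$ into precisely the form required by Lemma~\ref{lemma.noCorrectionTermsInIntegrationByParts}. Everything else is telescoping together with the elementary properties of $f$ and $g$.
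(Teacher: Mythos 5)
Your proof is correct and follows essentially the same route as the paper's: the discrete summation-by-parts identity, telescoping, and an application of Lemma \ref{lemma.noCorrectionTermsInIntegrationByParts} with $f$ the $1$-variation of $\eta$ and $g$ the $p$-th power of the $p$-variation of $\fundamentalPrice$, $a=1$, $b=1/p$. Your preliminary reduction to $p>1$ is a worthwhile extra precaution (the lemma requires $b<1$, which fails at $p=1$), and your justification of the convergence of the left-point sums to the Stieltjes integral fills in a step the paper merely asserts.
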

\begin{proof}
Let $\partition$ be an arbitrary partition of $[s,t]$. For all $u$ in $\partition$ we have 
\begin{equation}\label{eq.discreteSummationByParts}
\eta_u \fundamentalPrice_{u,u\derivative} + \fundamentalPrice_{u}\eta_{u,u\derivative}
= \eta_{u\derivative}\fundamentalPrice_{u\derivative} - \eta_{u}\fundamentalPrice_{u} - \eta_{u,u\derivative} \fundamentalPrice_{u,u\derivative}. 
\end{equation}
Consider first the right hand side of equation \eqref{eq.discreteSummationByParts}. Consider the first two summands on the right hand side of   equation \eqref{eq.discreteSummationByParts}. If we sum over $u$ in $\partition$ we have the telescopic sum 
\begin{equation*}
\sum_{u\in\partition} \left(\eta_{u\derivative}\fundamentalPrice_{u\derivative} - \eta_u \fundamentalPrice_u\right) = \eta_t\fundamentalPrice_t - \eta_s \fundamentalPrice_s,
\end{equation*}
and this does not depend on the partition $\partition$ of $[s,t]$. 
The third summand on the right hand side of   equation \eqref{eq.discreteSummationByParts} can be bound as follows:
\begin{equation*}
\abs{\eta_{u,u\derivative}\fundamentalPrice_{u,u\derivative}}
 \leq \pvarNormInterval[\eta]{1}{[u,u\derivative]} \pvarNormInterval[\fundamentalPrice]{p}{[u,u\derivative]}.
\end{equation*}
Hence, by Lemma \ref{lemma.noCorrectionTermsInIntegrationByParts}, we have
\begin{equation*}
\lim_{\abs{\partition}\downarrow 0 }\sum_{u\in\partition} \eta_{u,u\derivative}\fundamentalPrice_{u,u\derivative} = 0. 
\end{equation*}
Here, we have applied Lemma \ref{lemma.noCorrectionTermsInIntegrationByParts} with $f(s,t) = \pvarNormInterval[\eta]{1}{[s,t]}$, $g(s,t) = \pvarNormInterval[\fundamentalPrice]{p}{[s,t]}^{p}$, $a=1$, and $b=1/p$. 
Therefore, if we apply the operator $\lim_{\abs{\partition}\downarrow 0 }\sum_{u\in\partition}$ to the right hand side of equation \eqref{eq.discreteSummationByParts}, we obtain 
\begin{equation*} \label{eq.meshsizeShrinkingOnRHSofIntegrationByParts}
\lim_{\abs{\partition}\downarrow 0 }\sum_{u\in\partition} \left(
 \eta_{u\derivative}\fundamentalPrice_{u\derivative} - \eta_{u}\fundamentalPrice_{u} - \eta_{u,u\derivative} \fundamentalPrice_{u,u\derivative}
 \right) = \eta_t\fundamentalPrice_t - \eta_s \fundamentalPrice_s. 
\end{equation*}
Consider now the left hand side of  equation \eqref{eq.discreteSummationByParts}. The limit 
\begin{equation*}
\lim_{\abs{\partition}\downarrow 0 } \sum_{u\in\partition} \fundamentalPrice_{u}\eta_{u,u\derivative}
\end{equation*}
is the  Stieltjes integral $\int_{s}^{t} \fundamentalPrice_u d\eta_u  $ and thus converges and is the same along any sequence of partitions with vanishing meshsize. Therefore, if we apply the operator $\lim_{\abs{\partition}\downarrow 0 }\sum_{u\in\partition}$ to both side of equation \eqref{eq.discreteSummationByParts} we must have that the limit in equation \eqref{eq.definitionOfYoungIntegral} exists and is equal to 
\begin{equation*}
\eta_t\fundamentalPrice_t - \eta_s \fundamentalPrice_s - \int_{s}^{t} \fundamentalPrice_u d\eta_u .
\end{equation*}
This concludes the proof of the lemma. 
\end{proof}

\subsubsection*{Strong form of Euler-Lagrange equation}
Relying on the Young integral introduced in the previous  paragraph, we now formulate the strong form of the Euler-Lagrange equation associated with the minimisation problem in equation \eqref{eq.minimisationBoundaryValueProblem}. 

\begin{defi}\label{defi.solutionToSecondOrderRDE}
Let $c$ be a constant. Let $\fundamentalPrice$ be  a path on $\timeWindow$ and assume that $\fundamentalPrice$ is of finite $p$-variation for some $p\geq 1$. Let $b=b(t,x)$ be a Caratheodory function on $(0,\timeHorizon)\times \R^{3}$ such that 
\begin{equation*}
\abs{b(t,x)} \leq \alpha(t) + \beta \abs{x}\squared,
\end{equation*}
for some integrable function $\alpha$ in $L^1(0,\timeHorizon)$ and some constant $\beta \geq 0$. We say that the function $q$ in $\sobolevSpaceOneTwo(0,\timeHorizon)$ solves the equation
\begin{equation}\label{eq.secondOrderRDE}
\begin{cases}
dq_t =& r_t dt \\
dr_t = & b(t,S_t,q_t,\dot{q}_t) dt - c d\fundamentalPrice,
\end{cases}
\end{equation}
if for all $\eta$ in $\smoothCompactlySupportedFunctions(0,\timeHorizon)$  and all $0\leq s\leq t \leq \timeHorizon$ it holds
\begin{equation}\label{eq.meaningOfSolutionToSecondOrderRDE}
\int_{s}^{t} \dotEta_u dq_u = \eta_t \dot{q}_t - \eta_s \dot{q}_s - \int_{s}^{t} \eta_u b(u,q_u,\dot{q}_u) du + c\int_{s}^{t} \eta_u d\fundamentalPrice_u,
\end{equation}
where the last integral on the right hand side is the Young integral introduced in Lemma \ref{lemma.integrationByPartsYoungIntegral}. 
\end{defi}

\begin{lemma}\label{lemma.characterisationOfSolutionToSecondOrderRDE}
Assume the setting of Definition \ref{defi.solutionToSecondOrderRDE}. Then, $q$ solves equation \eqref{eq.secondOrderRDE} if and only if the function $f_t:= \dot{q}_t + c\fundamentalPrice_t$ is absolutely continuous with derivative 
\begin{equation*}
\dot{f}_t = b(t,S_t,q_t,\dot{q}_t).
\end{equation*}
\end{lemma}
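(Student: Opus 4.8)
The plan is to reduce the identity \eqref{eq.meaningOfSolutionToSecondOrderRDE} to the weak formulation of ``$\dot f = b$'' by means of the integration-by-parts formula for the Young integral established in Lemma \ref{lemma.integrationByPartsYoungIntegral}. Throughout write $b_u := b(u,S_u,q_u,\dot q_u)$ and $f_u := \dot q_u + c\fundamentalPrice_u$. Since $\fundamentalPrice$ has finite $p$-variation it is bounded on $\timeWindow$, $q$ is absolutely continuous hence bounded, and $\dot q \in \Ltwo(0,\timeHorizon)$; so the growth bound $|b(t,x)|\le\alpha(t)+\beta|x|\squared$ with $\alpha\in L^1$ forces $b\in L^1(0,\timeHorizon)$, and likewise $f\in\Ltwo(0,\timeHorizon)$.

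The first step is the key rewriting. Fix $\eta\in\smoothCompactlySupportedFunctions(0,\timeHorizon)$ and $0\le s\le t\le\timeHorizon$. Because $q$ and $\eta$ are absolutely continuous, $\int_s^t\dot\eta_u\,dq_u=\int_s^t\dot\eta_u\dot q_u\,du$, and Lemma \ref{lemma.integrationByPartsYoungIntegral} gives $\int_s^t\eta_u\,d\fundamentalPrice_u=\eta_t\fundamentalPrice_t-\eta_s\fundamentalPrice_s-\int_s^t\fundamentalPrice_u\dot\eta_u\,du$. Substituting both into \eqref{eq.meaningOfSolutionToSecondOrderRDE} and collecting the terms carrying $\fundamentalPrice$ shows that \eqref{eq.meaningOfSolutionToSecondOrderRDE} (for this $\eta$, $s$, $t$) is equivalent to
\begin{equation}\label{eq.IBPreducedFormInProof}
\int_s^t \dot\eta_u f_u\,du = \eta_t f_t - \eta_s f_s - \int_s^t \eta_u b_u\,du .
\end{equation}
Since the Young integration-by-parts identity used here is an equality valid unconditionally, the passage from \eqref{eq.meaningOfSolutionToSecondOrderRDE} to \eqref{eq.IBPreducedFormInProof} is reversible.

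For the forward implication, assume $q$ solves \eqref{eq.secondOrderRDE}, so \eqref{eq.IBPreducedFormInProof} holds for all $\eta,s,t$; taking $s=0$, $t=\timeHorizon$ and using $\eta_0=\eta_\timeHorizon=0$ yields $\int_0^{\timeHorizon}\dot\eta_u f_u\,du=-\int_0^{\timeHorizon}\eta_u b_u\,du$ for every $\eta\in\smoothCompactlySupportedFunctions(0,\timeHorizon)$. Hence the distributional derivative of $f$ is $b\in L^1(0,\timeHorizon)$, so $f$ admits an absolutely continuous representative with classical derivative $b$ almost everywhere; that is, $f_t=\dot q_t+c\fundamentalPrice_t$ is absolutely continuous with $\dot f_t=b(t,S_t,q_t,\dot q_t)$. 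Conversely, if $f$ is absolutely continuous with $\dot f=b$, the classical integration-by-parts formula for absolutely continuous functions gives \eqref{eq.IBPreducedFormInProof} for every $\eta\in\smoothCompactlySupportedFunctions(0,\timeHorizon)$ and $0\le s\le t\le\timeHorizon$; undoing the substitution above (re-inserting $\int_s^t\eta_u\,d\fundamentalPrice_u$ via Lemma \ref{lemma.integrationByPartsYoungIntegral}) recovers \eqref{eq.meaningOfSolutionToSecondOrderRDE}, so $q$ solves \eqref{eq.secondOrderRDE}.

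The argument is essentially algebraic once the Young integration-by-parts formula is in hand; the only points needing a little care are verifying $b\in L^1(0,\timeHorizon)$ (so that a distributional derivative lying in $L^1$ genuinely upgrades to absolute continuity), and reading ``$f$ is absolutely continuous'' as the assertion that the $\Ltwo$-class $\dot q$ has a representative for which $\dot q_\cdot+c\fundamentalPrice_\cdot$ is absolutely continuous — which is precisely what the weak-derivative characterisation supplies. No probabilistic input enters; the whole argument is pathwise.
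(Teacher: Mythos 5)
Your proof is correct and follows essentially the same route as the paper: apply the Young integration-by-parts formula of Lemma \ref{lemma.integrationByPartsYoungIntegral} to convert the defining identity \eqref{eq.meaningOfSolutionToSecondOrderRDE} into the weak formulation $\int \dot\eta_u f_u\,du = \eta_t f_t - \eta_s f_s - \int \eta_u b_u\,du$, and then read off that the distributional derivative of $f=\dot q + c\fundamentalPrice$ is $b$. You merely fill in details the paper leaves implicit (the integrability of $b$ from the growth bound, and the choice of representative of $\dot q$), which is welcome but not a different argument.
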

\begin{proof}
Apply the integration-by-parts established in Lemma \ref{lemma.integrationByPartsYoungIntegral}  to formula \eqref{eq.meaningOfSolutionToSecondOrderRDE}, and observe  that -- for   all $\eta$ in $\smoothCompactlySupportedFunctions(0,\timeHorizon)$ --  formula \eqref{eq.meaningOfSolutionToSecondOrderRDE} is equivalent to 
\begin{equation*}
\intZeroTimeHorizon \Big(\dot{q}_t + c \fundamentalPrice_t \Big)\dotEta_t dt
 = - \intZeroTimeHorizon \eta_t b(t,S_t,q_t,\dot{q}_t) dt. 
\end{equation*}
\end{proof}

\begin{prop}[``Strong form of Euler-Lagrange equation'']\label{prop.eulerLagrangeNecessity}
 Let $\fundamentalPrice$ be  a path on $\timeWindow$ and assume that $\fundamentalPrice$ is of finite $p$-variation for some $p\geq 1$. Consider a Lagrangian $F$ satisfying Assumption \ref{assumption.decompositionOfF}. Assume that the space-differentiable Caratheodory function $\Lagrangian(t,x):=F(t,x)-x_1 x_3$ is in $C\squared(\timeWindow \times \R^{3})$ and such that 
 \begin{equation*}
\partial_{x_3} \Lagrangian (t,x_1,x_2,x_3) = \frac{1}{c}x_3 + \ell(t,x_2),
 \end{equation*}
 for some non-zero constant $c$ in $\R$ and  some function $\ell$ in $C^{1}(\timeWindow \times \R)$.\footnote{This assumption on the form of the partial derivative  $\partial_r \Lagrangian$ is satisfied in all three examples that we consider in the paper, namely by the Lagrangians $F(t,S,q,r) = rS + \Lagrangian(t,S,q,r)$ in equations \eqref{eq.LagrangianIC}, \eqref{eq.LagrangianTimeIC} and \eqref{eq.LagrangianVaR}.} Assume that $q$ is a minimiser for \eqref{eq.minimisationBoundaryValueProblem}. Then, $q$ solves the Euler-Lagrange equation 
 \begin{equation}\label{eq.strongFormEulerLagrangeEquation}
 \begin{cases}
 dq_t =&r_t dt \\
 dr_t = & \Big( \partial_q \Lagrangian (t,S_t,q_t,\dot{q}_t) - \partial\squared _{t,r} \Lagrangian(t,S_t,q_t,\dot{q}_t) - \dot{q}_t \partial\squared _{q,r} \Lagrangian(t,S_t,q_t,\dot{q}_t) \Big)c dt - cd\fundamentalPrice_t,
 \end{cases}
 \end{equation}
 in the sense of Definition \ref{defi.solutionToSecondOrderRDE}.
\end{prop}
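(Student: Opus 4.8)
The plan is to derive the strong Euler--Lagrange equation from its weak form, Proposition~\ref{prop.weakFormEulerLagrange}, by a single integration by parts --- the structural hypothesis $\partial_r\Lagrangian(t,x_1,x_2,x_3)=\frac{1}{c}x_3+\ell(t,x_2)$ being exactly what is needed to isolate the absolutely continuous part of the relevant combination --- and then to translate the resulting distributional identity into the form of Definition~\ref{defi.solutionToSecondOrderRDE} by invoking Lemma~\ref{lemma.characterisationOfSolutionToSecondOrderRDE}. Since $q$ minimises \eqref{eq.minimisationBoundaryValueProblem}, Proposition~\ref{prop.weakFormEulerLagrange} gives, for every $\psi$ in $\smoothCompactlySupportedFunctions(0,\timeHorizon)$,
\[
\intZeroTimeHorizon\big(\fundamentalPrice_t+\partial_r\Lagrangian(t,S_t,q_t,\dot q_t)\big)\dot\psi_t\,dt
=-\intZeroTimeHorizon\psi_t\,\partial_q\Lagrangian(t,S_t,q_t,\dot q_t)\,dt .
\]
Substituting $\partial_r\Lagrangian(t,S_t,q_t,\dot q_t)=\frac1c\dot q_t+\ell(t,q_t)$ and writing $f_t:=\dot q_t+c\fundamentalPrice_t$, the left-hand side equals $\int_0^\timeHorizon\big(\frac1c f_t+\ell(t,q_t)\big)\dot\psi_t\,dt$. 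The path $t\mapsto\ell(t,q_t)$ is absolutely continuous: $\ell\in C^{1}(\timeWindow\times\R)$, while $q\in\sobolevSpaceOneTwo(0,\timeHorizon)$ is continuous (hence bounded) and absolutely continuous with $\dot q\in\Ltwo(0,\timeHorizon)$, so the chain rule yields $\frac{d}{dt}\ell(t,q_t)=\partial_t\ell(t,q_t)+\partial_q\ell(t,q_t)\,\dot q_t$ for almost every $t$.

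Next I would identify these derivatives of $\ell$ with mixed second derivatives of $\Lagrangian$: differentiating the identity $\partial_r\Lagrangian(t,x_1,x_2,x_3)=\frac1c x_3+\ell(t,x_2)$ in $t$ and in $x_2$ and evaluating along $(t,S_t,q_t,\dot q_t)$ gives $\partial_t\ell(t,q_t)=\partial^2_{t,r}\Lagrangian(t,S_t,q_t,\dot q_t)$ and $\partial_q\ell(t,q_t)=\partial^2_{q,r}\Lagrangian(t,S_t,q_t,\dot q_t)$ (the left-hand sides being independent of $S_t$, as they must, since $\ell$ does not involve $x_1$). Integrating $\int_0^\timeHorizon\ell(t,q_t)\dot\psi_t\,dt$ by parts --- the boundary terms vanish because $\psi$ is compactly supported in $(0,\timeHorizon)$ --- and inserting the result, the weak equation collapses to
\[
\intZeroTimeHorizon f_t\,\dot\psi_t\,dt
=-\intZeroTimeHorizon\psi_t\,c\Big(\partial_q\Lagrangian-\partial^2_{t,r}\Lagrangian-\dot q_t\,\partial^2_{q,r}\Lagrangian\Big)(t,S_t,q_t,\dot q_t)\,dt
\]
for all $\psi\in\smoothCompactlySupportedFunctions(0,\timeHorizon)$; that is, the distributional derivative of $f$ is $b(t,S_t,q_t,\dot q_t):=c\big(\partial_q\Lagrangian-\partial^2_{t,r}\Lagrangian-\dot q_t\,\partial^2_{q,r}\Lagrangian\big)$ evaluated at $(t,S_t,q_t,\dot q_t)$.

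I would then verify that $b(\cdot,S_\cdot,q_\cdot,\dot q_\cdot)\in L^{1}(0,\timeHorizon)$: since $\fundamentalPrice$ has finite $p$-variation it is bounded, $q$ is bounded, and $\dot q\in\Ltwo(0,\timeHorizon)$, so the growth bound of Assumption~\ref{assumption.decompositionOfF} makes $\partial_q\Lagrangian(\cdot,S_\cdot,q_\cdot,\dot q_\cdot)$ integrable, while $\partial^2_{t,r}\Lagrangian=\partial_t\ell$ and $\partial^2_{q,r}\Lagrangian=\partial_q\ell$ are continuous and hence bounded when evaluated at $(t,q_t)$. As also $f=\dot q+c\fundamentalPrice\in L^{1}(0,\timeHorizon)$, the identity above forces $f_t=\dot q_t+c\fundamentalPrice_t$ to coincide almost everywhere with an absolutely continuous function whose derivative is $b(t,S_t,q_t,\dot q_t)$. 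By Lemma~\ref{lemma.characterisationOfSolutionToSecondOrderRDE}, applied with this $b$ and the constant $c$, this is equivalent to $q$ solving \eqref{eq.secondOrderRDE}, which is precisely the system \eqref{eq.strongFormEulerLagrangeEquation}.

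The two integrations by parts are routine; the point requiring the most care is the regularity and integrability bookkeeping: checking that $t\mapsto\ell(t,q_t)$ is genuinely absolutely continuous so that the chain rule and integration by parts are legitimate for a merely $\sobolevSpaceOneTwo$ trajectory $q$, matching $\partial_t\ell$ and $\partial_q\ell$ correctly with the mixed second partials of $\Lagrangian$ via the structural identity, and confirming $b(\cdot,S_\cdot,q_\cdot,\dot q_\cdot)\in L^{1}(0,\timeHorizon)$ from the growth hypotheses of Assumption~\ref{assumption.decompositionOfF} together with the boundedness of the finite-$p$-variation path $\fundamentalPrice$ --- after which the appeal to Lemma~\ref{lemma.characterisationOfSolutionToSecondOrderRDE} is immediate.
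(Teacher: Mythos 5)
Your proof is correct and follows essentially the same route as the paper's: start from the weak Euler--Lagrange identity, use the structural form of $\partial_r\Lagrangian$ to integrate the $\ell(t,q_t)$ term by parts, identify $\partial_t\ell$ and $\partial_q\ell$ with $\partial\squared_{t,r}\Lagrangian$ and $\partial\squared_{q,r}\Lagrangian$, and conclude via Lemma~\ref{lemma.characterisationOfSolutionToSecondOrderRDE}. The extra regularity and integrability bookkeeping you supply is sound and merely makes explicit what the paper leaves implicit.
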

\begin{proof}
Under the stated assumptions on the form of the partial derivative $\partial_r \Lagrangian$, the condition in equation \eqref{eq.weakFormEulerLagrange} reads
\begin{equation*}
\intZeroTimeHorizon \Big[ \fundamentalPrice_t + \frac{1}{c}\dot{q}_t + \ell(t,q_t)\Big]\dot{\psi}_t dt 
= - \intZeroTimeHorizon \psi_t \partial_q \Lagrangian (t,S_t,q_t,\dot{q}_t) dt . 
\end{equation*}
This is equivalent to 
\begin{equation*}
\intZeroTimeHorizon \Big[ c\fundamentalPrice_t + \dot{q}_t \Big]\dot{\psi}_t dt 
= - c\intZeroTimeHorizon \Big[\partial_q \Lagrangian (t,S_t,q_t,\dot{q}_t) - \partial_t \ell (t,q_t) - \dot{q}_t \partial_q \ell  (t,q_t) \Big]\psi_t  dt . 
\end{equation*}
Notice that $\partial_t \ell = \partial\squared _{t,r} \Lagrangian $ and $\partial_q \ell = \partial\squared _{q,r} \Lagrangian$. Hence, Lemma \ref{lemma.characterisationOfSolutionToSecondOrderRDE} concludes the proof. 
\end{proof}
\end{appendices}
\end{document}